\renewcommand\l@subsection{\@tocline{2}{0pt}{2pc}{5pc}{}}
\newtheorem*{rep@theorem}{\rep@title}
\newcommand{\newreptheorem}[2]{%
	\newenvironment{rep#1}[1]{%
		\def\rep@title{#2 \ref{##1}}%
		\begin{rep@theorem}}%
		{\end{rep@theorem}}} 
\theoremstyle{plain}
\newtheorem{thm}{Theorem}[section]
\newtheorem{prop}[thm]{Proposition}
\newtheorem{cor}[thm]{Corollary}
\theoremstyle{definition}
\newtheorem{defin}[thm]{Definition}
\newtheorem{example}[thm]{Example}
\newtheorem{def/ex}[thm]{Definition/Example}
\theoremstyle{remark}
\newtheorem{rem}[thm]{Remark}
\newtheorem{rems}[thm]{Remarks}
\newcommand{\refS}[1]{Section~\ref{S:#1}}
\newcommand{\refT}[1]{Theorem~\ref{T:#1}}
\newcommand{\refC}[1]{Corollary~\ref{C:#1}}
\newcommand{\refP}[1]{Proposition~\ref{P:#1}}
\newcommand{\refD}[1]{Definition~\ref{D:#1}}
\newcommand{\refEx}[1]{Example~\ref{Ex:#1}}
\newcommand{\R}{{\mathbb R}}
\newcommand{\Z}{{\mathbb Z}}
\newcommand{\calC}{{\mathcal{C}}}
\newcommand{\Ho}{{\operatorname{H}}}
\newcommand{\Ch}{{\operatorname{C}}}
\newcommand{\st}{{\operatorname{st}}}
\newcommand{\lk}{{\operatorname{lk}}}
\newcommand{\im}{{\operatorname{im}}}
\newcommand{\esc}{{\searrow\!\!\!\searrow}}
\begin{document}

%%%%%%%%%%%%%%%%%%%%%%%%%%%%%%%%%%%%%%%%%%%%%%%%%%%%%%%%%%%%%%%%%%%%%%%%%%%%%%%%%%%%%%%%

\title[Politics and topology]{Political structures and the topology of simplicial complexes}

%%%%%%%%%%%%%%%%%%%%%%%%%%%%%%%%%%%%%%%%%%%%%%%%%%%%%%%%%%%%%%%%%%%%%%%%%%%%%%%%%%%%%%%%

\author{Andrea Mock}
\address{Department of Mathematics, Wellesley College, 106 Central Street, Wellesley, MA 02481}
\email{am10@wellesley.edu}
%\urladdr{}

\author{Ismar Voli\'c}
\address{Department of Mathematics, Wellesley College, 106 Central Street, Wellesley, MA 02481}
\email{ivolic@wellesley.edu}
\urladdr{ivolic.wellesley.edu}

%%%%%%%%%%%%%%%%%%%%%%%%%%%%%%%%%%%%%%%%%%%%%%%%%%%%%%%%%%%%%%%%%%%%%%%%%%%%%%%%%%%%%%%%

%%%%%%%%%%%%%%%%%%%%%%%%%%%%%%%%%%%%%%%%%%%%%%%%%%%%%%%%%%%%%%%%%%%%%%%%%%%%%%%%%%%%%%%%%%%%%%%%%%%%%%%%

\begin{abstract}
We use the topology of simplicial complexes to model political structures following \cite{AK:Conflict}. Simplicial complexes are a natural tool to encode interactions in political structures since a simplex can be used to represent a subset of compatible agents. We translate   the wedge, cone, and suspension operations into the language of political structures and show how these constructions correspond to merging structures and introducing mediators. We introduce the notions of the viability of an agent and the stability of a political system and examine their interplay with the simplicial complex topology, casting their interactions in category-theoretic language whenever possible. We also introduce a refinement of the model by assigning weights to simplices corresponding to the number of issues the agents agree on. In addition, homology of simplicial complexes is used to detect non-viabilities, certain cycles of incompatible agents, and the (non)presence of mediators. Finally, we extend some results from \cite{AK:Conflict}, bringing viability and stability into the language of friendly delegations and using homology to examine the existence of $R$-compromises and $D$-compromises.
\end{abstract}

%%%%%%%%%%%%%%%%%%%%%%%%%%%%%%%%%%%%%%%%%%%%%%%%%%%%%%%%%%%%%%%%%%%%%%%%%%%%%%%%%%%%%%%%%%%%%%%%%%%%%%%%

\maketitle 

\tableofcontents

%\nocite{*} % this command forces all references in template.bib to be printed in the bibliography

\parskip=6pt
\parindent=0cm

%%%%%%%%%%%%%%%%%%%%%%%%%%%%%%%%%%%%%%%%%%%%%%%%%%%%%%%%%%%%%%%%%%%%%%%%%%%%%%%%%%%%%%%%

\section{Introduction}\label{S:Intro}

%%%%%%%%%%%%%%%%%%%%%%%%%%%%%%%%%%%%%%%%%%%%%%%%%%%%%%%%%%%%%%%%%%%%%%%%%%%%%%%%%%%%%%%%

Simplicial complexes are an important and versatile tool in various branches of mathematics. The simple definition -- a \emph{simplicial complex} is a collection of nonempty subsets of a finite set containing all the singletons (vertices) and all subsets of sets already in the collection (simplices) -- lends itself to varied analysis and many applications. Since each subset of a simplicial complex can be represented via geometric realization as a topological simplex,  combinatorial and algebraic topology are natural tools to bring to bear when studying these objects.

Simplicial complexes can potentially serve as a model for any situation where objects or entities interact in some way. One can think of this  as the higher-dimensional analog of graphs which successfully model networks of pairwise interactions (edges) among objects (vertices).  However, a graph cannot capture the situation where larger subsets of objects interact, while a simplicial complex is ideally suited for this since each of its simplices represents such a multi-fold interaction.

The use of simplicial complexes in modeling complex interactions has been booming in recent years. Applications in fields as disparate as topological data analysis \cite{Carlsson:TDA, Carlsson:TDAHomotopy, Munch:TDA, OPTGH:TDAOverview}, signal processing \cite{BS:SimplicialSignals, BSC:SimplicialSignals,  GGB:NeuroSimplex, JKP:SimplicialSignals, MDBRS:Collaboration}, and neuroscience \cite{GGB:NeuroSimplex} abound. Social science applications are also starting to emerge; connections to game theory are well-established \cite{Egan:NashComplexes, FHN:GameComplex, Martino:GamesComplexes}, and recent work uses simplicial complexes to model social communication and opinion dynamics  \cite{HG:Discourse, IPBL:SimplicialContagion, WZLS:SocialSimplicial}.

For the purposes of our work, the most relevant application of simplicial complexes is the paper \cite{AK:Conflict} by Abdou and Keiding. The authors in this article consider a set of agents in a political system and compatibilities among them. If a subset of agents is compatible, they are able to coexist and carry out the processes in their mandate, such as negotiation or passage of legislation. A subset of compatible agents is called a \emph{viable configuration}. The system of viable configurations can then readily be modeled by a simplicial complex. 

The focus of \cite{AK:Conflict} is the \emph{strong collapse} operation, a process of eliminating vertices that are dominated by others in a suitable sense. Strong collapses are a fairly recent notion in the theory of simplicial complexes, due to Barmak and Minian \cite{BM:StrongHtopy} (see also \cite{B:FiniteTop}). These collapses are stricter than the classical collapse construction of J.H.C.~Whitehead \cite{W:Simplicial}. Their advantage is that they characterize simplicial maps that are \emph{contiguous}, and such maps are, in turn, the correct analog of homotopic maps in the category of simplicial complexes. Strong collapses have been used, for example, in topological data analysis \cite{BP:CollapsePersistence} and the study of the Lusternik-Schnirelmann category of simplicial complexes \cite{FMV:L-SSC}.

Abdou and Keiding use strong collapses to develop the idea of a \emph{friendly delegation} where one agent gives up their standing in favor of another, more centrally placed agent. This leads to the notion of a \emph{represented compromise} and, if friendly delegations occur iteratively, the notion of a \emph{delegated compromise}. The authors use various topological tools to examine the conditions under which these compromises may or may not occur, as well as their implications.

The goal of the present work is to further exploit the topology of simplicial complexes as a model for political structures. We interpret some standard topological constructions (simplicial map, wedge, join, cone, suspension) in this context and study their consequences on political systems. For example, the wedge corresponds to merging political structures via representative agents and the cone models the introduction of a mediator. 

We also define the notion of the viability of an agent via the star of a vertex and the stability of a structure via the $f$-vector of a complex. Much of what we do is devoted to interpreting the interaction of these concepts with the basic constructions on simplicial complexes. For example, we show that introducing mediators or merging structures increases the stability.

One key feature in this work is that we bring homology into the picture. Homology is one of the most useful algebraic invariants of topological spaces. When restricted to simplicial complexes, it lends itself to a relatively easy combinatorial definition and analysis that uses only linear algebra. Since homology is homotopy invariant, it is not quite powerful enough to detect strong equivalences, namely equivalences that can be realized as sequences of strong collapses. However, the presence of nontrivial homology means that a simplicial complex is not contractible, and hence not strongly contractible, and this turns out to convey useful information about the incompabilities in the structure. 

In particular, the presence of nontrivial chains representing homology classes can be directly interpreted as the existence of subsets of agents that are non-viable in specific ways. Much of these observations can also be stated in terms of the \emph{core} of the simplicial complex, which is the smallest subcomplex in which it is no longer possible to perform strong collapses. Homology is essentially able to tell what the core looks like in terms of the number and size of sets of incompatible agents. One can use this for strategic placement of mediators for the greatest impact; this is analogous to the basic algebraic topology procedure of coning off nontrivial cycles as a way to eliminate homology.

Homology is also useful for detecting when certain compromises within structures are not possible. The notions of compromises were defined in \cite{AK:Conflict}, and applying homology to them is one of the ways we revisit and elaborate on the original work by Abdou and Keiding that motivates this paper. We  also examine how our definitions of viability and stability mesh with their notion of friendly delegations.

We provide many examples throughout, as well as commentary on when the parallels between the worlds of simplicial complexes and political stuctures appear to be successful and when they seem to only go so far (see, for example, Remarks \ref{R:DeficientDefiniton} and \ref{R:ViabilityStabilityDeficient}). In particular, we frequently offer potential directions of improvement to the definitions of viability and stability. 

One promising avenue is to consider \emph{weighted} simplicial complexes in which the weights record compatibilities in a more refined way. With more than one issue present, agents may agree on some of them and not the others, and the weights keep track of this. With this in mind, we define stronger notions of  viability and stability and hope to continue to examine their merits in future work. 

Since we consider this paper as only one of the initial steps (along with \cite{AK:Conflict}) in what is potentially a rich area of study, throughout the paper we discuss many directions of investigation that could bring sophisticated topology into the realm of political science in an unprecendented way. The dictionary between simplicial complexes and political structures appears to be of great import, and this paper should be thought of as an invitation for its continued exploration.

%%%%%%%%%%%%%%%%%%%%%%%%%%%%%%%%%%%%%%%%%%%%%%%%%%%%%%%%%%%%%%%%%%%%%%%%%%%%%%%%%%%%%%%%

\subsection{Organization of the paper}

%%%%%%%%%%%%%%%%%%%%%%%%%%%%%%%%%%%%%%%%%%%%%%%%%%%%%%%%%%%%%%%%%%%%%%%%%%%%%%%%%%%%%%%%

We assume the reader is familiar with the basic notions of (algebraic) topology such as topological spaces, maps, homeomorphisms, homotopic maps, homotopy equivalences, categories, functors, etc.  One central idea we will import from algebraic topology is that of homology, and we will provide a brief overview of its construction.  For more on basic point-set and algebraic topology, see \cite{Hatcher, Munkres:Topology}.

The paper is organized as follows:
\begin{itemize}
\setlength\itemsep{4pt}
\item \refS{SimplicialComplexes} provides a brief review of the basics of the topology of simplicial complexes. A reader who is familiar with this material may skip this section, and a reader who is not can find more details in Appendix \ref{A:SimplicialComplexes}.

\item The translation of the language of simpicial complexes to that of political structures is the subject of \refS{SimplicialModel}. In \refS{Dictionary}, we set up the basic dictionary (mostly following \cite{AK:Conflict}). In \refS{Stability}, we define the viability of an agent and the stability of a political structure. Both these notions essentially count the number of simplices; the advantages and drawbacks of such simple definitions (along with suggestions for possible improvements) are provided throughout the paper. In \refS{PoliticsMods}, we study the interplay of viability and stability with operations on political structures, mainly how they interact with merging and introducing mediators into structures (Propositions \ref{P:WedgeStability}, \ref{P:ConeStability}, and \ref{P:SubstructureConeStability}).
\item The improvement of the model to the setting of weighted political structures is the subject of \refS{SimplicialModelIssues}. This is in many ways a better model because, unlike what appears in the paper so far, it takes into account the existence of multiple issues on which agent may or may not agree. We define weighted viability and stability in (\refD{WeightedStability}). Many possible promising directions of investigation involving signals, sheaves, generalized power indices, agreement distributions, and Markov chains are outlined at the end of the section.
\item \refS{StructuresHomology} uses homology to gain insight into political structures. We observe that non-zero Betti numbers detect certain patterns in the non-viability of agents (\refT{StructuresHomology}) as well as the absence of mediators  (\refP{NoMediatorHomology}).
\item \refS{Delegations} extends some results from \cite{AK:Conflict}. After reviewing the basic terminology of friendly delegations, we draw on earlier results to examine how this notion interacts with that of viability and stability (\refT{DelegationStability}). Further interplay, this time with mediators and mergers interacting with $D$-compromises, is also given (\refP{WedgeMediatorCompromise}). Finally, homology is again brought into the picture to show how non-zero Betti numbers prevent the existence of $R$-compromises and $D$-compromises. Various comments and questions about how the definitions and results from \cite{AK:Conflict} might relate to ours are also provided. 
\item Some potential future directions of investigation, beyond those discussed in \refS{SimplicialModelIssues}, are listed in \refS{Future}. These include bringing more theory of simplicial complexes into the picture, including the poset structure, barycentric subdivisions, and the Stanley-Reisner ring. We also raise the question of what effect relaxing the construction of strong collapses  to the more standard one of ordinary collapses might have on our models and results.

\item Appendix \ref{A:SimplicialComplexes}, reviewing the topology of simplicial complexes, is included for completeness and self-containment. Section \ref{S:SimplCplx} contains basic definitions and examples, such as the star, link, and deletion. Geometric realizations are covered in Section \ref{S:Realization}, while \refS{ComplexOperations} contains an overview of the important and familiar  constructions of join, wedge, cone, and  suspension (their analogs in the category of political structures can be found in in \refS{PoliticsMods}). Section \ref{S:SimplMap} is dedicated to simplicial maps. The important notions here are those of contiguity and strong equivalence (\refD{StrongEquivalence}), as well as the result, due to Barmak \cite{B:FiniteTop}, that strong equivalences can be realized as sequences of elementary strong collapses (\refT{StrongCores}). Elementary collapses will be used in \refS{Delegations} in the context of friendly delegations and compromises. The appendix also includes 
some independent results about cores and collapsibility of simplicial complexes (\refP{PushoutCollapse} and \refC{WedgeCollapse}). Finally, in Section \ref{S:Homology}, we review the construction of homology and recall how it can be calculated using simple linear algebra.
\end{itemize}

%%%%%%%%%%%%%%%%%%%%%%%%%%%%%%%%%%%%%%%%%%%%%%%%%%%%%%%%%%%%%%%%%%%%%%%%%%%%%%%%%%%%%%%%

\subsection{Acknowledgments}
The authors are grateful to Franjo \v Sar\v cevi\'c for pointing out several omissions and errors in an earlier version of this paper, as well as to the reviewers for helpful comments and suggestions. 
The second author would like to thank the Simons Foundation for its support.

%%%%%%%%%%%%%%%%%%%%%%%%%%%%%%%%%%%%%%%%%%%%%%%%%%%%%%%%%%%%%%%%%%%%%%%%%%%%%%%%%%%%%%%%

\section{Simplicial complexes}\label{S:SimplicialComplexes}

%%%%%%%%%%%%%%%%%%%%%%%%%%%%%%%%%%%%%%%%%%%%%%%%%%%%%%%%%%%%%%%%%%%%%%%%%%%%%%%%%%%%%%%%

In this section, we define simplicial complexes and briefly review some of theory behind them.  These combinatorially defined objects are  useful since they model most topological spaces one cares about. We will try to avoid bringing the full power and generality of topology into the picture and will ground ourselves in combinatorics as much as possible. 

The material reviewed here will will be familiar to a topologist. For a reader who is not as acquainted with these ideas, the brevity of the exposition in this section is supplemented by the content in Appendix \ref{A:SimplicialComplexes}, where more detail and examples are given. The reason for relegating most of the background on simplicial complexes to the appendix is that we wanted to get to the applications to political systems as soon as possible.

%%%%%%%%%%%%%%%%%%%%%%%%%%%%%%%%%%%%%%%%%%%%%%%%%%%%%%%%%%%%%%%%%%%%%%%%%%%%%%%%%%%%%%%%

%\subsection{Basic definitions}\label{S:SimplCplx}

%%%%%%%%%%%%%%%%%%%%%%%%%%%%%%%%%%%%%%%%%%%%%%%%%%%%%%%%%%%%%%%%%%%%%%%%%%%%%%%%%%%%%%%%

\begin{defin}\label{D:SimplicialComplex}
An \textbf{(abstract) simplicial complex} $K = (V, \Delta)$ consists of a 
finite set $V$ whose elements are called \emph{vertices}  and a set $\Delta$ of subsets of $V$ called \emph{simplices}  satisfying
\begin{enumerate}
\item Elements of $V$ are in $\Delta$;
\item If $\sigma\in\Delta$ and $\tau\subset\sigma$, then $\tau\in\Delta$.
\end{enumerate}
\end{defin}

If $|V|=k+1$, we will often label the elements of $V$ by $v_0, ..., v_{k}$.  The choice of enumeration of the elements will not be relevant.
Since elements of $V$ are already listed in $\Delta$, we will simply identify $K$ with $\Delta$ and will not write down $V$ explicitly.

%\begin{example}\label{Ex:AbsSimplCplx}
%An example of a simplicial complex on the vertex set  $V=\{v_0, v_1, ..., v_7\}$ is 
%\begin{align*}
%K= & \{
%\{v_0, v_1, v_2, v_3\},
%\{v_0, v_1, v_2\}, 
%\{v_0, v_1, v_3\}, 
%\{v_0, v_2, v_3\}, 
%\{v_1, v_2, v_3\},
%\{v_0, v_1\}, 
%\{v_0, v_2\}, 
%\{v_0, v_3\}, 
%\{v_1, v_2\}, \\ & 
%\{v_1, v_3\}, 
% \{v_2, v_3\}, 
%\{v_3, v_4\}, 
%\{v_3, v_5\}, 
%\{v_4, v_5, v_6\}, 
%\{v_4, v_5\}, 
%\{v_4, v_6\}, 
%\{v_5, v_6\}, 
%\{v_4, v_7\},  \\
%& \{v_0\}, \{v_1\}, \{v_2\}, \{v_3\}, \{v_4\}, \{v_5\}, \{v_6\}, \{v_7\}
%\}
%\end{align*}
%
%
%A non-example is $V=\{v_0, v_1, v_2, v_3, v_4\}$ and a collection of subsets
%$$
%K=\{ 
%\{v_1, v_2, v_3, v_4\},
%\{v_1, v_3, v_4\},
%\{v_0, v_2\},
%\{v_1, v_2\},
%\{v_1, v_4\},
%\{v_2, v_3\},
%\{v_3, v_4\},
%\{v_0\}, \{v_1\}, \{v_2\}, \{v_3\}, \{v_4\}
%\}
%$$
%The reason this is not a simplicial complex is that, for example, even though $\{v_1, v_2, v_3, v_4\}$ is in $K$, its subset $\{v_1,v_2, v_4\}$ is not.
%
%\refEx{TopSimplCplx} shows a topological visualization of these examples.
%\end{example}
%
%

%\begin{example}\label{Ex:Standard}
One basic example is that of the \emph{(standard) $n$-simplex}, where $|V|=n+1$ and $K=\mathcal P_0(V)$, the power set of $V$ without the empty set.  
%Thus $K$ contains all possible nonempty subsets of $V$. 
%\end{example}

%
%\begin{example}
%Recall that a \emph{graph} $\Gamma$ consists of a set $V$ of vertices and a set $E$ of edges defined as 
%$$
%E\subset\{\{v_1,v_2\}\colon v_1,v_2\in V, v_1\neq v_2\}.
%$$ 
%It is immediate from the definitions that a graph is precisely a 1-complex.
%\end{example}
%
%

Some standard terminology, including face, star, $n$-skeleton, etc.~is recalled in \refD{ComplexStuff}. One definition we will go back to frequently is that a vertex $v$ is \emph{dominated by a vertex $w$} if every maximal simplex that contains $v$ also contains $w$.

To a simplicial complex $K$ one can associate a topological space $|K|$ called the \emph{geometric realization} (or the \emph{polyhedron})  of $K$.  To construct the realization, one takes a convex hull of $n+1$ affinely independent points in a Euclidean space for each $n$-simplex in $K$. The details of the construction, as well as examples, are provided in Appendix \ref{S:Realization}. We will not make a distinction between the abstract simplicial complex $K$ and its realization $|K|$ and will use $K$ for both.

There are several useful constructions one can perform on simplicial complexes, such as the \emph{wedge} and the \emph{cone}. For simplicial complexes $K$ and $L$, the wedge $K\vee L$ is the union of $K$ and $L$ with two vertices, one from each complex, identified. The cone $CK$ on $K$ is obtained from $K$ by joining all the simplices of $K$ to a new disjoint vertex. The details and examples of these constructions are recalled in Appendix \ref{S:ComplexOperations}. They will correspond to operations on political structures in \refS{PoliticsMods}.

Simplicial maps, recalled in Appendix \ref{S:SimplMap}, are a way to compare simplicial complexes, just as continuous maps are a way to compare topological spaces. One class of such maps is of special interest to us:  Simplicial maps $\phi$ and $\psi$ from $K$ to $L$ are \emph{contiguous} if $\phi(\sigma)\cup \psi(\sigma)$ is a simplex in $L$ whenever $\sigma$ is a simplex in $K$. 

Contiguity leads to the notion of \emph{strong equivalence}, which is in turn related to the notion of a \emph{strong collapse} -- a sequence of removals of dominated vertices. All of these definitions and contructions are reviewed in Appendix \ref{S:Collapse} and will be relevant in \refS{Delegations}.

One of the most central concepts in algebraic topology and one of the most important topological invariants of spaces is that of homology. Intuitively, $n$th homology group keeps track of $n$-dimensional holes in a space. As we will see, homology can detect various features in a political structure. A review of homology is supplied in Appendix \ref{S:Homology}.

\section{Modeling political structures with simplicial complexes}\label{S:SimplicialModel}

%%%%%%%%%%%%%%%%%%%%%%%%%%%%%%%%%%%%%%%%%%%%%%%%%%%%%%%%%%%%%%%%%%%%%%%%%%%%%%%%%%%%%%%%%%%%%%%%%%%%%%%%%%%%%%%%%

In this section, we translate the language of simplicial complexes into that of political structures and find analogues of many results and constructions from the earlier sections. The main ideas here are those of viability of agents and  stability of political structures (\refD{Stability}) and how those notions interact with the topology of simplicial complexes (Propositions \ref{P:WedgeStability},  \ref{P:ConeStability}, \ref{P:SubstructureConeStability}).  In \refS{SimplicialModelIssues}, we lay down the beginnings of a more refined point of view that takes into accounts the agents' positions  on various issues. This weighted version of the political structure model appears to have much potential, as outlined at the end of the section. Finally, \refS{StructuresHomology} explores the interaction of homology with political structures. In particular, we show how homology detects non-viable subsets of agents (\refT{StructuresHomology}).

%%%%%%%%%%%%%%%%%%%%%%%%%%%%%%%%%%%%%%%%%%%%%%%%%%%%%%%%%%%%%%%%%%%%%%%%%%%%%%%%%%%%%%%%%%%%%%%%%%%%%%%%%%%%%%%%%

\subsection{Basic dictionary}\label{S:Dictionary}

%%%%%%%%%%%%%%%%%%%%%%%%%%%%%%%%%%%%%%%%%%%%%%%%%%%%%%%%%%%%%%%%%%%%%%%%%%%%%%%%%%%%%%%%%%%%%%%%%%%%%%%%%%%%%%%%%

Here we set up the language of political structures and show how simplicial complexes can model it. The terminology we use is directly from \cite{AK:Conflict}.

Suppose we have a collection of \emph{agents}, which can be thought of as voters, political parties, or members of organizations and structures like legislative bodies or boards of directors. We wish to model the question of compatibilities of these agents, their potential for coexistence. If agents can coexist, then the work of the system can continue. Within the system, agents may or may not agree on particular issues or sets of issues, but the potential for compromise and formation of coalitions exists. If the agents are incompatible, then one has a political stalemate, a polarization that impedes the functioning of the system. The question of compatibility can arise after a period of volatility, like a war or some other dramatic upheaval, or even after an election.

In between the most desirable situation where all agents are compatible and the least desirable one where no subset of them is compatible are the possibilities where agents fall into various subsets of compatibilities. The more of these subsets there are and the greater the number of agents in them means that the system is closer to being fully functional.

To set up the framework, we will make the assumptions that participation of all agents in the political process is necessary (for example, everyone has to cast a vote) and that, if some agents are compatible, then any fewer of those agents are also compatible.  

We can then make the following definition after  \cite[Definition 1]{AK:Conflict}.

\begin{defin}\label{D:PoliticalStructure}
A \emph{political structure} $P=(A,\calC)$ is a finite set of agents $A$ along with a collection $\calC$ of subsets of those agents, called \emph{viable configurations} satisfying:
\begin{enumerate}
\item Each agent is a viable configuration on their own;
\item If some agents form a viable configuration, so does any subset of them. 
\end{enumerate}
\end{defin}

Comparing this to \refD{SimplicialComplex}, it is clear that there is a correspondence between political structures $(A,\calC)$ and simplicial complexes $(V,\Delta)$ given by
\begin{align*}
\text{agents $a_i\in A$} & \longleftrightarrow  \text{vertices $v_i\in V$}\\
\text{viable configurations $\gamma\in\calC$}  & \longleftrightarrow  \text{simplices $\sigma\in \Delta$}
\end{align*}

\begin{rem}\label{R:DeficientDefiniton}
One direction in which further work might improve our setup is in amending \refD{PoliticalStructure} to remove the assumption that, if a collection of agents is compatible, so is any subset of them. Indeed, an agent might decide to join a coalition based on who is already in it; for example, agent $a_1$ might decide to join the coalition $\{a_2, a_3\}$ but not be compatible with $a_2$ or $a_3$ individually. A viable configuration might also have a core subset that holds it together in some way and the removal of that subset might cause the configuration to dissolve. In addition, a smaller subset of agents may not be able to arrive at a concensus, and hence be incompatible, and needs the larger group to sway the vote in some direction. These complications indicate that, instead of using simplicial complexes, one might try to develop the theory of political structures in the more general setting of, say, hypergraphs.  Related considerations motivate our development of the weighted simplicial complex model for political structures; see the opening paragraph to Section \ref{S:SimplicialModelIssues}.
\end{rem}

\begin{defin}\label{D:PoliticalMap}
Suppose $P$ and $Q$ are political structures. A \emph{political structure map}  is a function $p\colon P\to Q$ that sends agents to agents and viable configurations to viable configurations.
\end{defin}

In analogy with \refD{SimplMap} and the discussion following it, this definition says that if $\{a_{i_1}, ...,a_{i_n}\}$ is a viable configuration in $P$, then  $\{p(a_{i_1}), ...,p(a_{i_n})\}$ must be a viable configuration in $Q$.

A political structure map is designed to compare political structures.  Agents from one political system are mapped to the corresponding, or like-minded, agents in another system. The correspondence is further captured by the requirement that, if certain agents are compatible in one structure, so are the corresponding ones in the other. In addition to enabling comparisons between existing systems, political structure maps might model the before and the after of some political change or turmoil which consolidates agents or creates new ones but preserves alliances between those agents that survive.

A political structure map need not be injective or surjective.  If it is not injective, that means that two or more agents correspond to a single agent, i.e.~they are in some way consolidated or have merged in the structure following a political change. If a map is not surjective, that means that new agents emerged from the process.  An isomorphism is a map that identifies the structures as identical, with a one-to-one correspondence between agents and viable configurations. An inclusion map means that a structure is absorbed into a larger one, but with its system of viabilities still intact.

It is easy to see that the collection of political structures along with political structure maps forms a category which we will denote by  $\operatorname{PolStr}$.  Since political structures correspond to simplicial complexes and political structure maps to simplicial maps, we in fact immediately have an equivalence of categories
$$
\operatorname{PolStr}\simeq \operatorname{SCom}.
$$
We can thus carry over all the constructions from the category of simplicial complexes into the realm of political structures, including the product, pushout, join, as well as functors like the geometric realization and  homology. 

\begin{example}\label{Ex:PoliticalStructures}
Figure \ref{F:PoliticalStructures} depicts two political structures. In the left one, agent $a_0$ can coexist with agent $a_1$, but is in conflict with all other agents, who are all compatible among themselves.  The right picture is similar, except agents $a_1$, $a_2$, and $a_3$ are compatible in pairs, but not all together.  Each of those agents is thus open to pairwise coalitions, but not to the third agent joining in.

\begin{figure}[h]
\centering
\includegraphics[width=1.7in]{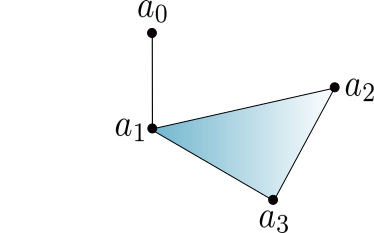}
\ \ \ \ \ \ \ 
\includegraphics[width=1.7in]{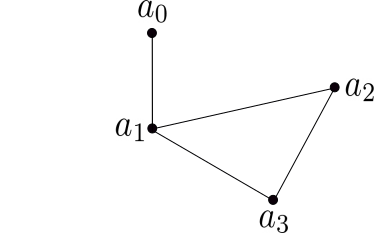}
\caption{Examples of political structures.}
\label{F:PoliticalStructures}
\end{figure}

\end{example}

\begin{example}\label{Ex:MiddleEast}
Figure \ref{F:MiddleEast} is a depiction of the compatibilites of the various actors in the Middle East arena. The ``friendship chart'' that was used to create this simplicial complex was published in 2014,\footnote{The chart can be found at \url{http://www.slate.com/blogs/the_world_/2014/07/17/the_middle_east_friendship_chart.html}} and it may be  dated by now. Nevertheless, it illustrates the idea of modelling compatibilities with simplicial complexes. The agents are
\begin{center}
\begin{tabular}{lll}
$a_0 = $ Hamas                 & $a_5 = $ Egypt      & $a_9 = $ Syria\\
$a_1 = $ Turkey                & $a_6 = $ Hezbollah  & $a_{10} = $ Israel\\
$a_2 = $ Palestinian Authority & $a_7 = $ Iraq       & $a_{11} = $ Al-Qaida\\
$a_3 = $ Saudi Arabia          & $a_8 = $ USA        & $a_{12} = $ ISIS\\
$a_4 = $ Iran &  & \\
\end{tabular}
\end{center}
We will revisit this example in Examples \ref{Ex:MiddleEastViability}, \ref{Ex:MiddleEastBetti}, and \ref{Ex:MiddleEastCompromise}.

\begin{figure}[h]
\centering
\includegraphics[width=2.5in]{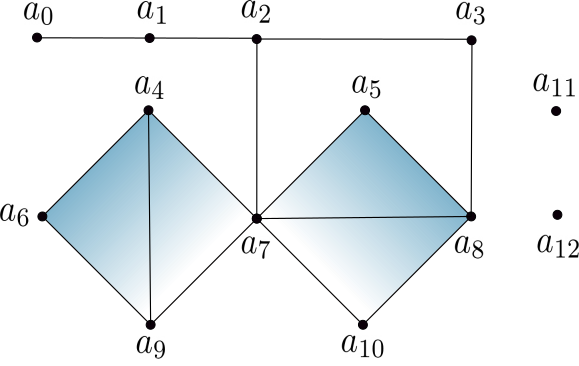}
\caption{A simplicial complex model for the relationships in the Middle East.}
\label{F:MiddleEast}
\end{figure}

\end{example}

\begin{example}\label{Ex:BasicPoliticalStructures} Suppose there are $k+1$ agents.
\begin{itemize}
\item The situation where all agents form a viable configuration corresponds to the standard $k$-simplex. 
\item The situation where no subset of agents is viable corresponds to the 0-dimensional simplicial complex, or a discrete set of $k+1$ vertices.
\item More generally, if there are $n$ subsets of agents and agents in one subset are not compatible with agents in any other subset, then the corresponding simplicial complex has (at least) $n$ disconnected components.
\item The situation where all viable configurations consist of at most two agents is a 1-dimensional simplicial complex, or a graph.

\end{itemize}
\end{example}

In the first situation from the previous example, namely when all agents form a viable configuration, we will say that $P$ is \emph{fully viable}. 

Several notions from \refD{ComplexStuff} also carry over and have interpretations in the setting of political structures.

\begin{defin}\label{D:PoliticalSystemStuff}\ 

\begin{itemize}
\setlength\itemsep{4pt}
\item The \emph{dimension} $\dim P$ of a political structure $P$ is one less than the size of its largest viable configuration. This is the largest collection of agents that is willing to compromise or enter a coalition. Since the greatest possible dimension of a simplicial complex with $k+1$ agents is $k$ ($k$-simplex) and this represents the ideal political situation, the number $k-\dim P$ is one measure of the deviation of $P$ from being a functioning system. We will call this number the \emph{defect of $P$} and denote it by $\operatorname{def}(P)$.
%\item The collection of all viable configurations consisting of $n+1$ agents is $P^{(n)}$, the \emph{$n$-skeleton} of $P$.
\item The $f$-vector of $P$ records the number of viable configurations of $P$ of each size.
\item If there is a viable configuration that is not a subconfiguration of any other viable configuration, then this is a \emph{maximal configuration}. Each maximal configuration represents a subset of agents who are willing to compromise, but no other agent would be willing to join their coalition.  
\item An agent $a_i$ is \emph{dominated} by an agent $a_j$ if all the largest possible coalitions that $a_i$ is willing to join also have $a_j$ in them. This indicates that $a_j$ is overall better positioned in the political system, and we say that $a_j$ is more \emph{central} than $a_i$.
\item The \emph{star} of an agent $a_i$ is the collection of its compatibilities, i.e.~the collection of all the potential coalitions $a_i$ is willing to enter.
\item The \emph{deletion} of an agent $a_i$ is its removal from the political system. The removal of all the simplices that have $a_i$ as a vertex means that this agent can no longer be present in any potential coalitions.
\end{itemize}

\end{defin}

%%%%%%%%%%%%%%%%%%%%%%%%%%%%%%%%%%%%%%%%%%%%%%%%%%%%%%%%%%%%%%%%%%%%%%%%%%%%%%%%%%%%%%%%%%%%%%%%%%%%%%%%%%%%%%%%%

\subsection{Stability of political structures}\label{S:Stability}

%%%%%%%%%%%%%%%%%%%%%%%%%%%%%%%%%%%%%%%%%%%%%%%%%%%%%%%%%%%%%%%%%%%%%%%%%%%%%%%%%%%%%%%%%%%%%%%%%%%%%%%%%%%%%%%%%

As seen in Example \ref{Ex:PoliticalStructures}, the most harmonious situation is when all $k+1$ agents can coexist, and this is modeled by the $k$-simplex where all possible simplices exist between all subsets of agents. At the other end of the spectrum is the situation where there is no agreement at all among the agents, with the discrete set of agents and no simplices connecting them representing it.  Thus the more simplices there are in the complex modeling it, the more stable the system is.  

As a more concrete example, the 2-simplex in Example \ref{Ex:PoliticalStructures} represents a potential coalition of agents $a_1$, $a_2$, and $a_3$, which indicates a more stable system than the one in the right complex where only potential coalitions of two agents exist.

These observations motivate the following definition. 

%Recall the notion of the star $\st(v)$ of a vertex $v$ and the $f$-vector of a simplicial complex from \refD{ComplexStuff}.

\begin{defin}\label{D:Stability}
Given a $d$-dimensional political structure $P=(A,\mathcal C)$ with the set of agents $A=\{a_0, ..., a_k\}$, define the \emph{viability of $a_i$} to be
\begin{equation}\label{E:AgentStability}
\operatorname{via}(a_i)=\frac{1}{2^k-1}(|\operatorname{st}(a_i)|-1).
\end{equation}

Define the \emph{stability of $P$} to be

\begin{equation}\label{E:ComplexStability}
\operatorname{stab}(P)=\frac{1}{2^{k+1}-(k+2)}\left(\sum_{i=0}^{d}f_i-(k+1)\right).
%\operatorname{stab}(P)=\frac{1}{2^{k}(k+1)}\left( \sum_{a_i\in A}|\operatorname{st}(a_i)| -(k+1) \right)
\end{equation}

\end{defin}

The viability of $a_i$ is simply the number of simplices that have $a_i$ as a vertex, but normalized to take values between 0 and 1. Greater value of $\operatorname{via}(a_i)$ indicates that, even if some agents abandon or become incompatible with $a_i$, this agent still has the potential of forming other coalitions due to its large star, i.e.~due to its many existing compatibilities. This is the sense in which we think of it as a ``more viable'' agent than one whose star does not have as many elements. 

The stability of $P$ is the  total number of simplices of $P$, again normalized. This is essentially the sum of the viabilities of all the agents, but taking into account the overcount of simplices. Greater $\operatorname{stab}(P)$ indicates more compatibilities among agents and more willingness to form coalitions.

\begin{example}
When all the agents are compatible, $|\operatorname{st}(a_i)|=2^k$ for all $a_i$, and so the viability of each agent is 1. In this case, the number of simplices is $\sum_{i=0}^{k}f_i=2^{k+1}-1$ (number of simplices of a $k$-simplex, i.e.~the number of nonempty subsets of a set with $k+1$ elements), and so the stability of $P$ is also 1.

When no agents are compatible, $|\operatorname{st}(a_i)|=1$ for all $a_i$, and the viability of each agent is thus 0. Now the number of simplices is $\sum_{i=0}^{k}f_i=k+1$ (there are that many vertices and no other simplices), and so $\operatorname{stab}(P)=0$.
\end{example}

\begin{example}
In the left complex of Example \ref{Ex:PoliticalStructures}, we have $\operatorname{via}(a_0)=1/7$, $\operatorname{via}(a_1)=4/7$, $\operatorname{via}(a_2)=\operatorname{via}(a_3)=3/7$, and $\operatorname{stab}(P)=5/11$.

In the right complex, $\operatorname{via}(a_0)=1/7$, $\operatorname{via}(a_1)=4/7$, $\operatorname{via}(a_2)=\operatorname{via}(a_3)=3/7$, and $\operatorname{stab}(P)=4/11$.
\end{example}

\begin{example}\label{Ex:MiddleEastViability}
In Example \ref{Ex:MiddleEast}, Iraq is the most viable agent, with $\operatorname{via}(a_7)=9/4095.$
\end{example}

\begin{rems}\label{R:ViabilityStabilityDeficient}
Even though the star and the $f$-vector, and hence $\operatorname{via}(a_i)$ and $\operatorname{stab}(P)$, are isomorphism invariants, they are not complete invariants.  For example, if 
$P_1=\{\{a_0, a_1\}, \{a_0, a_2\}, \{a_0, a_3\}, \{a_0\}, \{a_1\},\{a_2\},\{a_3\} \}$ 
and 
$P_2=\{\{a_0, a_1\}, \{a_1, a_2\}, \{a_0, a_2\}, \{a_0\}, \{a_1\},\{a_2\},\{a_3\} \}$,
we have 
$\operatorname{stab}(P_1)=\operatorname{stab}(P_2)$.  However, $P_1$ has a vertex of greatest stability in both systems, namely $a_0$, and no isolated agents (while $a_3$ is isolated in $P_2$). In this sense, $P_1$ can perhaps be regarded as more stable than $P_2$, but our definitions do not take this into account. 

Another observation that can be made is that the removal of $a_0$ from $P_1$ would result in the collapse of the system as $a_1$, $a_2$, and $a_3$ would all become isolated agents. In $P_2$, this removal would still preserve one potential coalition, that between $a_1$ and $a_2$. This indicates that our rudimentary notions of stability could potentially be refined from a basic simplex count to incorporating the effect of a deletion of an agent. We will analyze such deletions in more detail in \refS{Delegations}. 
\end{rems}

%%%%%%%%%%%%%%%%%%%%%%%%%%%%%%%%%%%%%%%%%%%%%%%%%%%%%%%%%%%%%%%%%%%%%%%%%%%%%%%%%%%%%%%%%%%%%%%%%%%%%%%%%%%%%%%%%

\subsection{Merging and mediating political structures}\label{S:PoliticsMods}

%%%%%%%%%%%%%%%%%%%%%%%%%%%%%%%%%%%%%%%%%%%%%%%%%%%%%%%%%%%%%%%%%%%%%%%%%%%%%%%%%%%%%%%%%%%%%%%%%%%%%%%%%%%%%%%%%

In this section, we carry over some notions from \refS{ComplexOperations} into the setting of political structures.

Suppose a political structure $P$ consists of two distinct connected components, $P_1$ and $P_2$, so $P=P_1\amalg P_2$. Suppose an agent $a_0\in P_1$ and an agent $b_0\in P_2$ decide to merge or join forces, effectively becoming a single agent. If we rename the newly formed agent by $a$, we then have a situation that corresponds to the wedge of simplicial complexes $P_1\vee P_2$ (see \refD{Wedge}). The viability of the common agent is greater than the viability of the individual agents that merged prior to the coalition: 
$$
\operatorname{via}(a)=\operatorname{via}(a_0)+\operatorname{via}(b_0)-1.
$$
The stability of the system is also affected; the fact that the two parts of $P$ could not coexist prior to the merger but now can is reflected in the following.

\begin{prop}\label{P:WedgeStability}
Merging agents from disconnected components of a political structure strictly increases stability, i.e.
$$
\operatorname{stab}(P_1\amalg P_2)< \operatorname{stab}(P_1\vee P_2).
$$
\end{prop}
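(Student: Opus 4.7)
The plan is to compute both sides of the inequality directly from \refD{Stability} and compare the resulting fractions. Write $P_1$ as having $k_1+1$ agents and total simplex count $N_1 = \sum_i f_i(P_1)$, and similarly take $P_2$ to have $k_2+1$ agents and total simplex count $N_2$. Then $P_1 \amalg P_2$ has $k_1+k_2+2$ agents and $N_1+N_2$ simplices, while $P_1 \vee P_2$ has $k_1+k_2+1$ agents and $N_1+N_2-1$ simplices, since the two distinguished agents get identified into one.

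The first step I would carry out is a brief algebraic check showing that the two numerators of the stability fractions coincide: both equal
$$M \;=\; (N_1 - (k_1+1)) + (N_2 - (k_2+1)),$$
which is exactly the total number of simplices of positive dimension appearing in $P_1$ and $P_2$. Once this is noted, the comparison reduces entirely to the denominators $2^{k_1+k_2+2}-(k_1+k_2+3)$ for the disjoint union and $2^{k_1+k_2+1}-(k_1+k_2+2)$ for the wedge. A second short calculation verifies that the former strictly exceeds the latter (by exactly $2^{k_1+k_2+1}-1$), with the exponential term driving the gap. With numerators equal and positive and the disjoint-union denominator strictly larger, the strict inequality $\operatorname{stab}(P_1 \amalg P_2) < \operatorname{stab}(P_1 \vee P_2)$ follows immediately.

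The only real obstacle is the degenerate case $M=0$, which happens precisely when neither $P_1$ nor $P_2$ contains any coalition of more than one agent, so that both stabilities are $0$ and the inequality is not strict. I would dispose of this by invoking a mild nondegeneracy hypothesis requiring at least one of $P_1$ or $P_2$ to contain an edge, which matches the spirit of the proposition — there has to be some existing coalition structure for the merger to enhance — and under this assumption the argument above yields the strict inequality. Intuitively, merging leaves the collection of nontrivial compatibilities unchanged but shrinks the ``potential simplex count'' against which stability is normalized, so the fraction is forced upward.
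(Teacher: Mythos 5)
Your argument is essentially the paper's own proof: compute both stabilities from \refD{Stability}, observe that the two numerators coincide (the common count of positive-dimensional simplices), and conclude from the strict denominator comparison $2^{k_1+k_2+2}-(k_1+k_2+3) > 2^{k_1+k_2+1}-(k_1+k_2+2)$. Your caveat about the degenerate case $M=0$ is a fair point the paper glosses over, though since $P_1$ and $P_2$ are assumed to be connected components, $M=0$ forces both to be single agents, in which case the wedge is a one-agent structure whose stability formula is already degenerate ($0/0$), so no extra hypothesis is needed beyond excluding that trivial situation.
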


\begin{proof}
Suppose $P_1$ has dimension $d_1$, $k+1$ vertices, and the simplex count vector $(f_0, ..., f_k)$. Suppose $P_2$ has dimension $d_2$, $m+1$ vertices, and vector $(g_0, ..., g_m)$. Then 
$$\operatorname{stab}(P_1\amalg P_2)=\frac{1}{2^{k+m+2}-(k+m+3)}\left(\sum_{i=0}^{d_1}f_i+\sum_{j=0}^{d_2}g_j-(k+m+2)\right).$$ 
The stability of the wedge (which has one fewer vertex) is
$$\operatorname{stab}(P_1\vee P_2)=\frac{1}{2^{k+m+1}-(k+m+2)}\left(\sum_{i=0}^{d_1}f_i+\sum_{j=0}^{d_2}g_j-1-(k+m+1)\right).$$
The expressions in parentheses are the same, and 
$$2^{k+m+2}-(k+m+3)>2^{k+m+1}-(k+m+2)$$
for all $k,m\geq 0$. It follows that 
$$\operatorname{stab}(P_1\amalg P_2)< \operatorname{stab}(P_1\vee P_2).$$
\end{proof}

\refP{WedgeStability} generalizes to the broader case when two political structures merge along two compatible substructures. This corresponds to a pushout of simplicial complexes (\refD{Pushout}). We leave the details of the generalization to the reader. 

An interesting variant of this situation is when $P_1$ and $P_2$ are initially completely separate political structures, but are then forced to merge as a result of some major event such as the unification of Germany in 1990. In this case, the stability of the newly formed political structure necessarily decreases.

To illustrate, the stability of the structure $\{\{a_0,a_1\}, \{a_0\},\{a_1\}\}$ is 1. Take another copy of the same structure, $\{\{b_0,b_1\}, \{b_0\},\{b_1\}\}$ and form the wedge. This gives the structure $\{\{a_1,a\},\{b_1,a\} \{a\},\{a_1\}, \{b_1\}\}$ whose stability is 1/2, less than the stability of the individual structures prior to merging. It is not hard to show that this always happens with the wedge or a pushout of any two political systems that have more than one agent. 

This is reflected in the defect of $P_1\vee P_2$ as well. Suppose $P_1$ and $P_2$ have $k_1+1$ and $k_2+1$ vertices, respectively. If they start out as components of the same system, namely as $P_1\amalg P_2$, then $\operatorname{def}(P_1\vee P_2)=\operatorname{def}(P_1\amalg P_2)$. However, if $P_1\vee P_2$ is the result of merging two independent systems, then 
$$
\operatorname{def}(P_1\vee P_2)=k_1+k_2+1-\max\{\dim(P_1), \dim(P_2)\},
$$
and this number is greater than $k_1-\dim(P_1)$ and $k_2-\dim(P_2)$, the defects of $P_1$ and $P_2$. This indicates that $P_1\vee P_2$ is farther from unanimity on any issue than $P_1$ and $P_2$ were individually.

Next suppose a political structure $P$ has an agent which is compatible with every other agent and which is willing to join any potential coalition. We will call such an agent a \emph{mediator}, since this setup might model a situation where an outside agent is introduced into the system with the role of bringing as many agents together as possible. The resulting structure is precisely $CP$, the cone on $P$. 

For every configuration $\gamma$ in $P$, there is a new configuration $C\gamma$ of one dimension higher in $CP$. But $CP$ has one more vertex than $P$, so this means that
$$
\operatorname{def}(CP)=\operatorname{def}(P).
$$
%However, the viability of each agent in $P$ is doubled since, if $\gamma\in\st_P(a_i)$, then $C\gamma\in\st_{CP}(a_i)$, where the subscript denotes the structure in which the star is being taken. 
The mediator $c$ has the greatest viability in $CP$ since its link $\operatorname{lk}(c)$ is the entire structure $P$. This maximum viability is shared with other vertices if and only if $P$ is itself a simplex, i.e.~it represents the ideal political situation. In this case, $CP$ is a simplex of one higher dimension and it again represents the ideal situation.

\begin{prop}\label{P:ConeStability}
Introducing a mediator increases the stability of a political structure, i.e.
$$
\operatorname{stab}(P)\leq \operatorname{stab}(CP),
$$
with equality holding if and only if $P$ is fully viable.
\end{prop}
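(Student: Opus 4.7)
The plan is to compare the two stabilities directly by computing the $f$-vector of $CP$ in terms of that of $P$ and then reducing the inequality to a simple algebraic one.

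First I would set up notation: let $P$ have dimension $d$, vertex set of size $k+1$, and $f$-vector $(f_0,\dots,f_d)$, and write $S=\sum_{i=0}^d f_i$ for the total simplex count of $P$. The cone $CP$ has one extra vertex $c$, dimension $d+1$, and its simplices are exactly the simplices of $P$ together with $\sigma\cup\{c\}$ for each simplex $\sigma$ of $P$ (plus the new vertex $\{c\}$ itself). Counting by dimension gives the $f$-vector $(f_0+1,\,f_1+f_0,\dots,f_d+f_{d-1},\,f_d)$ for $CP$, and a telescoping sum yields $\sum_{i=0}^{d+1} g_i = 2S+1$.

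Next I would plug these into \refD{Stability}. Writing $A=2^{k+1}$ and $B=k+1$ to keep the algebra tidy, the claim $\operatorname{stab}(P)\leq\operatorname{stab}(CP)$ becomes
\begin{equation*}
\frac{S-B}{A-(B+1)} \ \leq\ \frac{2S-B}{2A-(B+2)}.
\end{equation*}
Cross-multiplying (both denominators are positive for $k\geq 0$) and expanding, nearly everything cancels, leaving the equivalent inequality
\begin{equation*}
0\ \leq\ B\bigl(A-1-S\bigr)\ =\ (k+1)\bigl(2^{k+1}-1-S\bigr).
\end{equation*}
Since $S$ is the number of non-empty subsets of the $k+1$ agents that are viable configurations, we always have $S\leq 2^{k+1}-1$, which gives the desired inequality.

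The equality case falls out immediately: equality holds iff $2^{k+1}-1-S=0$, i.e.\ iff every non-empty subset of agents is a viable configuration, which is exactly the definition of $P$ being fully viable (a $k$-simplex). I do not anticipate a real obstacle here; the only thing to be careful about is the bookkeeping in the $f$-vector of $CP$ and making sure the telescoping of $\sum g_i$ is done correctly (the boundary terms $g_0=f_0+1$ and $g_{d+1}=f_d$ must be handled separately from the bulk terms $g_i=f_i+f_{i-1}$).
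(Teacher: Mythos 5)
Your proposal is correct and follows essentially the same route as the paper: count that $CP$ has $2\sum_{i=0}^{d}f_i+1$ simplices on $k+2$ vertices, plug into the definition of stability, and reduce the inequality to the bound $\sum_{i=0}^{d}f_i\leq 2^{k+1}-1$, with equality exactly when $P$ is a full simplex. The only difference is that you carry out explicitly the cross-multiplication the paper dismisses as ``a straightforward exercise,'' which is fine.
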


\begin{proof} If the number of viable configurations of $P$ is $\sum_{i=0}^{d}f_i$, then the number of viable configurations after the mediator is introduced is $2\sum_{i=0}^{d}f_i+1$. This is because taking the cone doubles the number of simplices (for each simplex in $P$, the cone introduces a simplex of  one higher dimension) and it adds the cone point as an additional new simplex.
We thus have 
$$
%\operatorname{stab}(P)=\frac{1}{2^{k+1}-(k+2)}\left(\sum_{i=0}^{k}f_i-(k+1)\right),\ \ 
\operatorname{stab}(CP)=\frac{1}{2^{k+2}-(k+3)}\left(2\sum_{i=0}^{d}f_i+1-(k+2)\right)
=
\frac{1}{2^{k+2}-(k+3)}\left(2\sum_{i=0}^{d}f_i-(k+1)\right)
$$
so we wish to show
$$
\operatorname{stab}(P)=\frac{\sum_{i=0}^{d}f_i-(k+1)}{2^{k+1}-(k+2)}\   \leq \ \frac{2\sum_{i=0}^{d}f_i-(k+1)}{2^{k+2}-(k+3)}=\operatorname{stab}(CP).
$$
The verification of this inequality is a straightforward exercise that comes down to the fact that $\sum_{i=0}^{d}f_i\leq 2^{k+1}-1$. The equality holds if and only if $P$ is a simplex, i.e.~it is fully viable.
\end{proof}

Similar analysis can be performed when more than one mediating agent is introduced. When there are two such agents, one gets the suspension $\Sigma P$. Since the suspension can be thought of as the pushout of two copies of $CP$ along $P$, we can use \refP{ConeStability} and the fact that the pushout increases stability (see comments following \refP{WedgeStability}) to conclude that the suspension also increases stability. By induction, the same conclusion holds when an arbitrary number of mediating agents is introduced, a situation that is modeled by the join of $P$ with a finite set of vertices or as an iterated pushout of cones on $P$.

Stability can also be used to measure the \emph{impact} of a mediator, by which we simply mean the difference of the stabilities before and after the mediator is introduced.  As expected, the mediator has the greatest impact in the situation when there is no agreement whatsoever, i.e.~the system is the 0-simplex consisting of $k+1$ agents. The stability in this case is 0. After a mediator is brought in, the stability increases to $\frac{k+2}{2^{k+2}-(k+3)}$. 

Contrast this with the situation when the system is almost perfectly stable, with all subsets of agents being viable except the largest one consisting of all the agents. In the case $k=3$, for example, this is modeled by the hollow tetrahedron. The stability of this structure is $\frac{2^{k+1}-(k+3)}{2^{k+1}-(k+2)}$. After a mediator is introduced, the new stability is $\frac{2^{k+2}-(k+5)}{2^{k+2}-(k+3)}$. The increase in stability is much less dramatic than in the previous situation. For example, with four agents, the increase in the case of total disagreement is $5/26\approx 0.19$ and in the case of almost total agreement it is $4/286\approx 0.014$.

Of course, in the situation when the political structure is as harmonious as possible, the mediator has no impact since $C\Delta^k=\Delta^{k+1}$ and the stability is 1 before and after the mediator is introduced.

One could also ask the question of what might happen if a mediator is introduced into a substructure of $P$. This might happen if, for example, certain agents are particularly at odds on an issue and a separate mediation might be required to bring them to the table. This corresponds to taking a cone on a subcomplex of $P$. The situation is less straightforward in this case. 

\begin{prop}\label{P:SubstructureConeStability}
Suppose $P$ is a political structure with $k+1$ agents. If a mediator is introduced into a substructure so that the total number of agents and viable configurations that are left unaffected is least $k+1$, then the stability of $P$ strictly decreases. 
%If that number is $k+1$, then the stability of $P$ strictly decreases.
\end{prop}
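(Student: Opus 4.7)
The plan is to compute both sides of $\operatorname{stab}(P') < \operatorname{stab}(P)$ directly from \refD{Stability} and reduce to a short algebraic comparison. Let $Q \subseteq P$ denote the substructure on which the mediator $c$ is coned, write $N = \sum_{i=0}^{d} f_i$ for the total simplex count of $P$, and let $S$ be the total simplex count of $Q$. I will read the hypothesis---that the number of agents and viable configurations left unaffected is at least $k+1$---as saying that the simplices of $P$ not lying in $Q$ number at least $k+1$, since these are precisely the vertices and configurations to which no new simplex gets adjoined; in symbols, $N - S \geq k + 1$. The modified structure $P'$ has $k + 2$ agents and $N + S + 1$ simplices: the original $N$ simplices of $P$, the $S$ new cone simplices $\gamma \cup \{c\}$ for $\gamma \in Q$, and the new vertex $\{c\}$ itself.

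Substituting into \refD{Stability}, the inequality $\operatorname{stab}(P') < \operatorname{stab}(P)$ becomes
$$\frac{N + S - k - 1}{2^{k+2} - (k+3)} < \frac{N - k - 1}{2^{k+1} - (k+2)}.$$
Writing $A = 2^{k+1} - (k+2)$, a short computation gives $2^{k+2} - (k+3) = 2A + (k+1)$ and $A + (k+1) = 2^{k+1} - 1$. Clearing denominators and simplifying collapses the inequality to $AS < (N - k - 1)(2^{k+1} - 1)$. The hypothesis yields $N - k - 1 \geq S$, so $(N - k - 1)(2^{k+1} - 1) \geq S(2^{k+1} - 1) > SA$, where the last strict step uses only $2^{k+1} - 1 > A$, i.e.\ $k + 2 > 1$. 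This proves the proposition.

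The main difficulty is pinning down the hypothesis correctly and tracking the simplex count added by the cone on $Q$---the coning doubles the simplex count of $Q$ and adjoins $\{c\}$ as an additional vertex, for a net increase of $S + 1$. After that, the argument is pure algebra, resting on the elementary identity $A + (k+1) = 2^{k+1} - 1$ together with the trivial bound $2^{k+1} - 1 > A$. The hypothesis is essentially sharp: when the cone is taken on all of $P$ (so $S = N$ and $N - S = 0 < k + 1$) the proposition fails, in agreement with \refP{ConeStability}, where in that case stability in fact weakly increases.
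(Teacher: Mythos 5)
Your proof is correct and follows essentially the same route as the paper: the paper sets $F=S$ (simplices coned off) and $G=N-S$ (simplices untouched), counts $2F+G+1$ simplices in the new structure, and reduces the same stability inequality to $(k+1)F>(k+1-G)(2^{k+1}-1)$, which is algebraically identical to your reduction $AS<(N-k-1)(2^{k+1}-1)$, settled by the hypothesis $G\geq k+1$ exactly as you do.
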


\begin{proof}
%We have that certain simplices of $P$ are coned off and that there are at least $k+1$ other simplices.  
Let $F$ be the number of simplices that are coned off and $G$ the number of those that are not. The assumption is therefore that $G\geq k+1$.

The total number of simplices of $P$ is thus $F+G$ and the number of simplices of the new structure (with a mediator in it), call it $P'$, is  $2F+1+ G$. We then want to show
\begin{equation}\label{E:ConeStability}
\operatorname{stab}(P)=\frac{F+G-(k+1)}{2^{k+1}-(k+2)}\   > \ \frac{2F+G-(k+1)}{2^{k+2}-(k+3)}=\operatorname{stab}(P').
\end{equation}
The inequality is equivalent to 
\begin{equation}\label{E:SubstructureConeStability}
(k+1)F>(k+1-G)(2^{k+1}-1).
\end{equation}
If $G\geq k+1$, the right side is negative or zero, while the left side is always positive, so the inequality is true.
%
%If $G= k+1$, then we want to show \eqref{E:ConeStability} holds, but with the inequality flipped. Since $G=k+1$, this simplifies to the inequality $(k+1)F>0$, which is true. This shows $\operatorname{stab}(P)>\operatorname{stab}(P')$ as desired.
\end{proof}

\refP{SubstructureConeStability} is perhaps counterintuitive since introducing a mediator should improve the functioning of a political system, even if the mediator is acting only on a subset of agents.  On the other hand, a mediator might be making the functioning of a subset of agents stronger but, while doing that, weakening the overall structure since it isolates those agents from the rest of the system.

As an example of the edge case $G= k+1$, consider the structure with $k+1=7$ vertices, three of which form a 2-simplex, and with the other four isolated. The stability of this structure is 1/30. Taking the cone on the four isolated vertices means that $G=7$ (the number of simplices in the 2-simplex), which is the same as the number of vertices. The stability of the new structure is 8/247, which is less than 1/30.

The case $G<k+1$ is more subtle. For example, if $P=\{\{a_0, a_1\}, \{a_1, a_2\}, \{a_0\}, \{a_1\}, \{a_2\}  \}$, then taking the cone on the subcomplex $\{\{a_0, a_1\}, \{a_0\}, \{a_1\} \}$ leaves $G=2$ ($<3=k+1$) simplices unaffected. Then $\operatorname{stab}(P)=1/2>5/11=\operatorname{stab}(P')$.

On the other hand,  if $P=\{\{a_0, a_1\}, \{a_0\}, \{a_1\}, \{a_2\}, \{a_3\}  \}$ and the cone is taken on $\{\{a_2\}, \{a_3\}  \}$, then $G=3$ ($<4=k+1$) but the stability increases, namely $\operatorname{stab}(P)=1/11<3/26=\operatorname{stab}(P')$.

%consider the two structures in Figure \ref{SubstructureConeStability} and their cones shows a structure $P$ in the left picture and two structures obtained from $P$ by coning off subcomplexes. In both situations, $G<k+1$, but the stability in the middle picture is greater than that of $P$ while in the last picture it is not.
%
%
%\begin{figure}[h]
%\centering
%\includegraphics[width=2in]{SubstructureConeStability}
%\caption{}
%\label{F:SubstructureConeStability}
%\end{figure}

Another example of when the stability decreases is if $P$ is a simplex with stability 1. Then a cone on any proper subcomplex will necessarily decrease the stability.

These examples indicate that further investigation is needed of the case $G<k+1$. One consideration  is to perhaps restrict attention to subcomplexes in which there is a genuine incompatibility. This means that a cone would only be  taken over a subcomplex that is not itself a simplex. Another distinction that appears to make a difference is whether a cone is taken over simplices that belong to different components of $P$. While bringing different components together via a mediator seems desirable and should increase stability, the example above with $G=k+1$ indicates that this is not necessarily true. However, it may still be true for $G<k+1$, and this would make that edge case all the more interesting.

%%%%%%%%%%%%%%%%%%%%%%%%%%%%%%%%%%%%%%%%%%%%%%%%%%%%%%%%%%%%%%%%%%%%%%%%%%%%%%%%%%%%%%%%%%%%%%%%%%%%%%%%%%%%%%%%%

\subsection{Weighted simplicial complex model}\label{S:SimplicialModelIssues}

%%%%%%%%%%%%%%%%%%%%%%%%%%%%%%%%%%%%%%%%%%%%%%%%%%%%%%%%%%%%%%%%%%%%%%%%%%%%%%%%%%%%%%%%%%%%%%%%%%%%%%%%%%%%%%%%%

One shortcoming of the simplicial complex model so far is that it does not distinguish the level to which agents are compatible or on which issues they agree.  For example, an edge between two agents might mean that they are like-minded on one or all the issues, but the simplicial complex does not capture this difference.  In this section, we offer a refinement of some of the ideas introduced so far and indicate some possible further directions of investigation. We extrapolate to the situation where the agents are not asked to decide on a single item but rather have a preference on a collection of issues or need to decide which facets of an issue they deem to be essential to the functioning of the underlying system.

First suppose $P=(A,\calC)$ is a political structure as before, but now consider the situation where each viable configuration $\gamma\in\calC$ has a \emph{weight} $w_\gamma\in\R$ associated to it. Alternatively, a viable configuration is now considered to be a pair $(\gamma, w_\gamma)$. 

\begin{defin}\label{D:WeightedStructure}
A \emph{weighted political structure} $P$ is a triple $(A, \mathcal C,W)$, where $A$ and $\mathcal C$ are a finite set of agents and a collection of viable configurations as in \refD{PoliticalStructure}, and $W$ is a set of weights associated to the elements of $\mathcal C$.
\end{defin}

Now suppose we are given a set $I=\{p_1, ..., p_m\}$ of $m$ \emph{issues} that the agents $a_0, ..., a_k$ must consider. We will call the set $I$ an  \emph{agenda}.  One way to associate a weighted political system to this situation is to declare a configuration $\gamma$ to be viable if agents in it agree on at least one issue, and the weight $w_\gamma$ to be the number of issues the agents in $\gamma$ agree on. We also set $w_{a_i}=m$ for all $0\leq i\leq k$; this convention reflects the fact that each agent agrees with themselves on all the issues.

Under this setup, we will say that $\gamma$ is \emph{$w_\gamma$-viable} (we will also just write \emph{$w$-viable} if the underlying configuration is understood).
 We will say that $P$ is \emph{fully $w$-viable} if all agents agree on $w$ issues. When $w=m$, then $P$ is \emph{fully viable}.

Now we can build a simplicial complex exactly the same way as before.  For each $w$-viable configuration $\gamma$, $w\geq 1$, the simplicial complex contains a $(|\gamma|-1)$-dimensional simplex, only now a weight is associated to it. In the abstract simplicial complex, we denote the weight by a subscript, and in its realization, we simply label the simplex.

This simplicial complex can also be built in the following way: For $0\leq i< j\leq k$, let 
$$
\vec v_{ij} = (v_1, ..., v_m),
$$
where
$$
v_l = \begin{cases}
1, & \text{if agents $a_i$ and $a_j$ agree on issue $p_l$};\\
0, & \text{if agents $a_i$ and $a_j$ disagree on issue $p_l$.}
\end{cases}
$$
We will call $\vec v_{ij}$ the \emph{agreement vector} of $a_i$ and $a_j$.

The agreement vectors determine all the weights since a configuration $\gamma$ is $w$-viable if the vectors $\vec v_{ij}$ have $w$ common 1's for all pairs of agents $a_i$ and $a_j$, $i\neq j$, in the configuration.

%When we need to be clear about the underlying configuration, we will denote this number by $w_\gamma$ and will also refer to it as the \emph{weight of $\gamma$}. 

Note that, for a configuration of size 2 with agents $a_i$ and $a_j$, 
$$
w_{\{a_i, a_j\}}=\sum_{i=1}^m v_i,
$$
the sum of the components of the agreement vector $\vec v_{ij}$. 

Also note that, if $\vec v_{ij}$ and $\vec v_{ij'}$, $j<j'$, have a common 1 in some coordinate, so does $\vec v_{jj'}$. Same for $\vec v_{ij}$ and $\vec v_{i'j}$, $i<i'$,  with $\vec v_{ii'}$ also having the common 1 in the same coordinate. This simply reflects the fact that, given three agents, if two pairs of them agree on an issue, then all three do. Thus for configurations of size 3 with agents $a_i$, $a_j$, and $a_k$, the weight is simply the dot product of any two of the three possible vectors, i.e.
$$
w_{\{a_i, a_j, a_k\}}=\vec v_{ij}\cdot \vec v_{ik}.
$$

%Now let the \emph{weight} of $C$, denoted by $w_C$, be the number of common 1's in all the vectors $v_{ij}$ corresponding to agents $a_i$ and $a_j$ in $C$. 

\begin{example}\label{Ex:WeightedStructure}
Suppose $P$ consists of five agents and there are nine issues on the agenda.  Suppose the non-zero agreement vectors are

\begin{align*}
\vec v_{01} & = (0, 1, 0, 0, 0, 0, 0, 0, 0 ) \\
\vec v_{02} & = (0, 0, 0, 0, 0, 0, 1, 1, 0 ) \\
\vec v_{03} & = (1, 0, 0, 0, 1, 1, 1, 0, 0 ) \\
\vec v_{13} & = (0, 0, 1, 0, 0, 0, 0, 0, 0 ) \\
\vec v_{14} & = (1, 0, 0, 1, 0, 0, 0, 0, 0 ) \\
\vec v_{23} & = (0, 0, 0, 0, 0, 0, 1, 0, 1 ) \\
\end{align*}

%\begin{align*}
%\vec v_{01} & = (0, 1, 0, 0, 0, 0, 0, 0, 0 ) & \vec v_{13} & = (0, 0, 1, 0, 0, 0, 0, 0, 0 ) \\
%\vec v_{02} & = (0, 0, 0, 0, 0, 0, 1, 1, 0 ) & \vec v_{14} & = (1, 0, 0, 1, 0, 0, 0, 0, 0 ) \\
%\vec v_{03} & = (0, 0, 0, 0, 1, 1, 1, 0, 0 ) & \vec v_{23} & = (0, 0, 0, 0, 0, 0, 1, 0, 1 )
%\end{align*}

Then the weighted simplicial complex for $P$ is 
$$
\{ 
\{a_0, a_2, a_3\}_1, \{a_0, a_2\}_2, \{a_0, a_3\}_4, \{a_2, a_3\}_2, \{a_0, a_1\}_1, \{a_1, a_3\}_1, \{a_1, a_4\}_2, 
\{a_0\}_9, \{a_1\}_9, \{a_2\}_9, \{a_3\}_9, \{a_4\}_9    
\}
$$

Its geometric realization is given in Figure \ref{F:Weighted}.

\begin{figure}[h]
\centering
\includegraphics[width=1.7in]{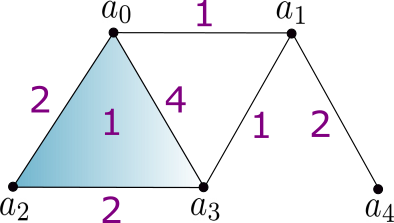}
\caption{An example of a weighted political structure. Each vertex has weight 9, but we have omitted labeling this in the picture to preserve its clarity.}
\label{F:Weighted}
\end{figure}

\end{example}

We then have natural weighted analog of \refD{Stability}.

\begin{defin}\label{D:WeightedStability}
Given a weighted political structure $P=(A,\mathcal C, W)$ with the set of agents $A=\{a_0, ..., a_k\}$ and agenda $I$ of cardinality $m$, define the \emph{weighted viability of $a_i$} to be 
\begin{equation}\label{E:WeightedAgentStability}
\operatorname{wvia}(a_i)=\frac{1}{m(2^k-1)}\left(\sum_{C\in \operatorname{st}(a_i)}w_C-m\right).
\end{equation}

Define the \emph{weighted stability of $P$} to be

\begin{equation}\label{E:WeightedComplexStability}
\operatorname{wstab}(P)=\frac{1}{m(2^{k+1}-(k+2))}\left(\sum_{C\in \mathcal C}w_C-m(k+1)\right).
%\operatorname{stab}(P)=\frac{1}{2^{k}(k+1)}\left( \sum_{a_i\in A}|\operatorname{st}(a_i)| -(k+1) \right)
\end{equation}

\end{defin}

Weighted viability is again the number of simplices that have $a_i$ as a vertex, but weighted and normalized. This is minimized when the vertex is isolated (and it carries the weight $m$ by convention), and maximized when $a_i$ is the vertex of the greatest possible number of simplices, $2^k$, each with weight $m$. Similarly for the weighted stability which is minimized when each of the $k+1$ vertices is isolated and maximized when each subset of the full $k$-simplex has weight $m$.

\begin{example}
In the weighted structure from Example \ref{Ex:WeightedStructure}, we have $\operatorname{wvia}(a_0)=8/135$, $\operatorname{wvia}(a_1)=4/135$, $\operatorname{wvia}(a_2)=5/135$, $\operatorname{wvia}(a_3)=8/135$, $\operatorname{wvia}(a_4)=2/135$, and $\operatorname{wstab}(P)=1/18$. (Without any information about the agreements on particular issues or weights, the stability of the system, as defined in \refS{Stability}, would be $\operatorname{stab}(P)=7/26$.)
\end{example}

Note that, when $|I|=1$, i.e.~there is only one issue on the agenda, we recover the setup from \refS{SimplicialModel}, including  \refD{Stability}. The single issue can be thought of as general compatibility, and agreement on this issue is simply a question of whether two agents can coexist in a political system.

It is not difficult to see that the results from \refS{PoliticsMods}, in particular Propositions \ref{P:WedgeStability}  and \ref{P:ConeStability}, readily carry over to the setting of weighted systems. We leave the details to the reader.

There are several potential further directions that our definition of stability could be employed in.

\begin{itemize}
\setlength\itemsep{4pt}

\item The situation where a simplicial complex is labeled by values is familiar from the field of \emph{signal processing}. Signals on simplicial complexes have been used to model various phenomena \cite{BS:SimplicialSignals, BSC:SimplicialSignals,  GGB:NeuroSimplex, JKP:SimplicialSignals, MDBRS:Collaboration}. Namely, given a $d$-complex $P$ modeling a political system, let $P(n)$ be the collection of its $n$-simplices (but not their faces, as is the case in the $n$-skeleton $P^{n}$). A \emph{signal} over $P(n)$ is a function 
$$
s^n\colon P(n)\longrightarrow \R,
$$
and the collection $\{s^0, s^1, ..., s^d\}$ is a \emph{signal over $P$}. There are a number of standard quantities and procedures that can be associated with signals on simplicial complexes. For example, one can analyze the Laplacians of $P$ mentioned in \refS{Homology} and use their \emph{Hodge decomposition} along with the signal to define the \emph{Fourier transform of $P$}. In the context of political structures, this can likely be used to extract qualitative information about the signal, such as measure the average degree of disagreement of an agent compared to its neighbors (something along these lines was done in \cite{HG:Discourse}) or even define a more nuanced notion of stability of a political system.

\item Instead of adding the values of the coordinates of the agreement vector $\vec v_{ij}$, one could retain them and associate to all the simplices the ``intersection vector'' with common agreement coordinates being 1. This vector can also encode the situation where the agents have different voting power.  

\vspace{4pt}
\noindent
This setup can be used to define a generalized power index of a weighted voting system with coalition restrictions. Namely, the simplex structure encodes all possible coalitions (and additionally says which coalitions are not possible via the missing simplices) on particular issues.  Given a quota $q$ on each issue, one should be able to define and calculate a cumulative Shapley-Shubik-style power index for each agent over all the issues. The existence of simplices of certain dimensions, taken with the sum of the weights of the agents that are its vertices, would indicate a winning coalition. The link of an agent could be used to determine whether the agent is pivotal for the coalition. Finally, the order of how the coalitions are formed could be governed by the structure of dominated vertices. For the setting of a single issue, this should recover the usual Shapley-Shubik index (or at least something closely related to it) but enhanced to  take into account forbidden coalitions.

\item The entries of the agreement vector can even be generalized to reflect the level of agreement on the $l$th issue. In other words, instead of $v_l\in\{0,1\}$, one could have $v_l\in [0,1]$ with 0 as before representing complete disagreement and 1 complete agreement. The value associated to a simplex of dimension greater than 1 would be the minimum of the pairwise agreement values of its vertices. 

\vspace{4pt}
\noindent
This turns out to be the setting that is even more general than that of signal processing, namely that of \emph{sheaves} over simplicial complexes \cite{Robinson:Networks, Robinson:Circuits, Robinson:TSP}. To each simplex of $P$, one now associates the vector space $\R^m$ (where $m$ is the number of issues). Following \cite{HG:Discourse}  (see also \cite{vDGLR:KnowledgeSimplicial}), these can be thought of as opinion spaces of the agents or subsets of agents. Inclusion of a face corresponds to a linear transformation which can be interpreted as a discourse space. This data defines a \emph{sheaf over $P$}.  Attaching agreement vectors to the simplices of $P$ is a \emph{section} of the sheaf.  Using corresponding Laplacians and \emph{sheaf cohomology}, it should be possible to study the dynamics of the political system in which agents are allowed to change their opinions.

\item A coordinate $v_l\in [0,1]$ could also represent the probability that two agents agree on the $l$th issue. This could serve to produce an a priori expected stability of the system. Furthermore, one could study probability-based Markov chains on $P$ \cite{MS:RandomSimplicial, PR:RandomSimplicial, SBHLJ:RandomSimplicial, WZLS:SocialSimplicial}. Random walks might represent coalitions merging with other coalitions according to the probability that can be extracted from the agreement probability vectors $\vec v_{ij}$. It is likely that this would tie together nicely with existing approaches to modeling conflict dynamics via Markov chains \cite{DS:Conflict, HNWRC, Shallcross:Conflict, Shearer:Conflict}.

\item Finally, one could take into account changing preferences, attitudes, or opinions of the agents over time. This \emph{evolutionary} model can even be applied to weighted complexes \cite{CB:GrowSimplicial, HK:VoteSimplicial, SMSS:WeightedSimplicial} and could potentially be used to quantify the stability of political structures over time and track the influence of mediators or the development of coalition patterns. This could also be combined with some of the above considerations; for example, one could define a time-series power index of an agent over an issue or a set of issues.

\end{itemize}

%%%%%%%%%%%%%%%%%%%%%%%%%%%%%%%%%%%%%%%%%%%%%%%%%%%%%%%%%%%%%%%%%%%%%%%%%%%%%%%%%%%%%%%%

\subsection{Homology and political structures}\label{S:StructuresHomology}

%%%%%%%%%%%%%%%%%%%%%%%%%%%%%%%%%%%%%%%%%%%%%%%%%%%%%%%%%%%%%%%%%%%%%%%%%%%%%%%%%%%%%%%%

In this section, we indicate some ways in which homology interacts with political structures. 

To begin, we demonstrate how homology captures ``cycles of non-viabilities'' in political structures. 
Recall from \refS{Homology} that the Betti numbers $\beta_n$ encode the ranks of the homology groups of a simplicial complex. Also recall that an $n$-cycle of simplices is a collection of $n$-simplices that has no boundary.

\begin{thm}\label{T:StructuresHomology} Let $P=(A,\mathcal C)$ be a political structure. Suppose $\beta_n\neq 0$ for some $n\geq 0$. Then $P$ contains $\beta_n$ subsets of agents, each of cardinality at least $n+2$, which do not form viable configurations. In particular, $P$ is not fully viable, i.e.~agents are not all compatible. In the weighted model, these $\beta_n$ subsets of agents are not $w$-viable for any $w$. In particular, $P$ is not $w$-viable for any $w$.
%\begin{enumerate}
%\item $P$ is not fully viable, i.e.~agents are not all compatible. In the weighted system model, this means that $P$ is not $w$-viable for any $w$, i.e.~there is no issue that all agents agree on.
%\item More precisely, $P$ contains $\beta_n$ subsets of agents $n$-cycles in which no $n+2$ agents form a viable configuration. In the weighted model, no $n+2$ of these $n$-cycles of agents are $w$-viable for any $w$.
%\end{enumerate}

\end{thm}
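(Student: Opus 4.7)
The plan is to extract combinatorial information about missing viable configurations directly from nontrivial homology classes. Since $\beta_n = \operatorname{rank} \Ho_n(P) \neq 0$, I would fix $\beta_n$ linearly independent classes $[z_1], \ldots, [z_{\beta_n}] \in \Ho_n(P)$ with nonzero $n$-cycle representatives, and let $V_i \subseteq A$ denote the \emph{support} of $z_i$, namely the set of agents appearing as vertices in some simplex of $z_i$.

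First I would establish the cardinality bound $|V_i| \geq n+2$. Any $n$-simplex uses $n+1$ vertices, and the unique $n$-simplex spanned by exactly $n+1$ vertices is never a cycle, since its boundary is a nonzero $(n-1)$-chain. The minimal nontrivial $n$-cycle is $\partial \Delta^{n+1}$, which is supported on $n+2$ vertices, consistent with the computation in \refEx{CircleHomology}.

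Next I would show that $V_i$ is not a viable configuration of $P$. Assume for contradiction that $V_i \in \calC$. By axiom $(2)$ of \refD{PoliticalStructure}, the full simplex spanned by $V_i$, together with all of its faces, lies in $P$ as a subcomplex. But a simplex is a cone on any of its vertices, so by \refC{ConeSuspensionStrong} it is strongly collapsible, hence contractible, hence acyclic. Since $z_i$ is entirely supported on simplices of this acyclic subcomplex, $z_i$ would be a boundary there, and therefore in $P$, contradicting $[z_i] \neq 0$. The weighted statement now follows immediately: $w$-viability of a configuration presupposes it being a viable configuration in the first place, so each $V_i$ fails $w$-viability for every $w \geq 1$; and full $w$-viability of $P$ would force, via axiom $(2)$, every subset of $A$ to be viable, again contradicted by any $V_i$.

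The subtle step, and where I expect to do the real work, is arranging the $\beta_n$ sets $V_i$ to be pairwise distinct. My plan is a lexicographic minimality argument on the tuple $(|V_1|, \ldots, |V_{\beta_n}|)$ over all ordered bases of $\Ho_n(P)$: if two supports coincided on some $V$, then both $z_i$ and $z_j$ would lie in the cycle group of the subcomplex $P|_V$ (the subcomplex of $P$ spanned by the vertices in $V$), and one could replace $z_j$ by an integer combination $z_j - c\,z_i$ that strictly shrinks the support, for instance by choosing $c$ to cancel the coefficient of a simplex incident to a vertex appearing in only one top-dimensional simplex of $z_i$. Verifying that such a cancellation is always available is the main obstacle, and the case analysis will likely require some care with vertex adjacency data inside $P|_V$.
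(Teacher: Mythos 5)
Your main argument is correct and is essentially the paper's proof: choose $\beta_n$ independent $n$-cycle representatives, note that a chain supported on a single $n$-simplex cannot be a nontrivial cycle (so each support has at least $n+2$ agents), and observe that if a support were a viable configuration then, by downward closure, it would span a full simplex inside $P$, which is acyclic, forcing the cycle to bound and contradicting nontriviality of its class; the weighted claim then follows as you say. The only place you go beyond the paper is the final paragraph on making the supports pairwise distinct: this is not required, since the paper does not assert distinctness --- the remarks immediately following \refT{StructuresHomology} state explicitly that the identified subsets are not necessarily distinct and may share agents --- so you can simply drop that (admittedly incomplete) lexicographic-cancellation argument rather than attempt to repair it.
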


We will call each subset of agents detected by $\beta_n$ an \emph{$n$-cycle of non-viability}.

Before we prove this result, we illustrate it with an example.

\begin{example}\label{Ex:StructuresHomology}
It is not hard to calculate that the Betti numbers for the political structure in Figure \ref{F:PoliticalStructureHomology} are $\beta_0=0$, $\beta_1=2$, $\beta_2=1$. Corresponding to $\beta_1=2$, there exist two subsets of agents, $\{ a_2, a_3, a_4, a_5\}$ and $\{ a_5, a_6, a_7\}$, that are not viable. Corresponding to $\beta_2=1$, the subset $\{ a_0, a_1, a_2, a_3\}$ is not viable.

\begin{figure}[h]
\centering
\includegraphics[width=2.2in]{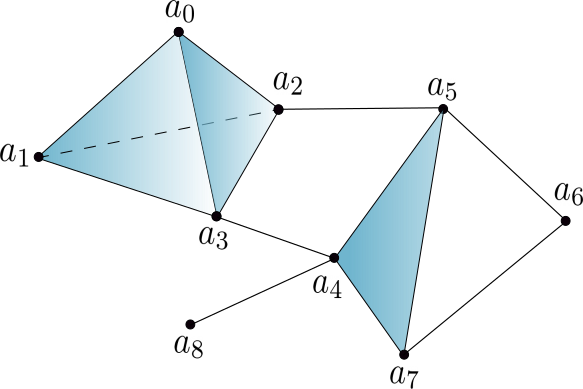}
\caption{A political structure $P$ with some cycles of non-viability. The tetrahedron is meant to be hollow.}
\label{F:PoliticalStructureHomology}
\end{figure}

\end{example}

\begin{example}\label{Ex:MiddleEastBetti}
Looking back at \refEx{MiddleEast}, we have that $\beta_0=2$ and $\beta_1=1$. The three components are the main complex and the two isolated agents $a_{11}$ and $a_{12}$. Corresponding to $\beta_1=1$, we have a 1-cycle of non-viability $\{ a_2, a_3, a_7, a_8\}$. This is not a unique cycle since $\{ a_2, a_3, a_7, a_5, a_8\}$ or $\{ a_2, a_3, a_7, a_{10}, a_8\}$ could also be taken as the cycles representing the single 1-dimensional homology class.
\end{example}

 \begin{proof}[Proof of \refT{StructuresHomology}] 
 If $\beta_n>0$ then one can choose $\beta_n$ $n$-cycles which generate $\Ho_n(P)$. Each of these cycles must contain at least $n+2$ agents. To see this, recall that an $n$-cycle is a formal linear combination of $n$-simplices. An $n$-simplex requires at least $n+1$ agents, so there cannot be fewer. If there are precisely $n+1$ agents, then we have a cycle consisting of a single $n$-simplex, which cannot be a generator of nontrivial homology of any degree since every simplex is acyclic.
 
Each of these collections of agents cannot be viable since, if it were, it would form a simplex.  Since a simplex is acyclic, this cycle would be homologous (differ by a boundary) to the constant cycle and could thus not be a homology generator.  
 \end{proof}

\begin{rems}\ 

\begin{itemize}
\setlength\itemsep{4pt}

\item There might be sets of agents that satisfy the conclusions of \refT{StructuresHomology} but are not detected by homology. For example, the 1-cycle determined by the set of vertices $\{ a_0, a_1, a_2, a_3\}$ that goes around the perimeter of the square in Figure \ref{F:HomologyNotComplete} is not viable, but all the Betti numbers of that structure are zero.

\begin{figure}[h]
\centering
\includegraphics[width=1in]{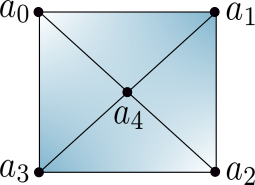}
\caption{A political structure with trivial homology.}
\label{F:HomologyNotComplete}
\end{figure}

\item Replacing a homology generator by another, i.e.~by a homologous one, would give a new subset of non-viable agents. Thus the subsets of agents predicted by the Betti numbers are not uniquely determined by homology. This was pointed out in \refEx{MiddleEastBetti}.

\item The subsets of non-viable agents identified in \refT{StructuresHomology} are not necessarily distinct. Homology generators can meet along common subcomplexes and can thus share agents. 

\end{itemize}
\end{rems}

There is a little bit more that can be extracted from the Betti numbers beyond the statement of \refT{StructuresHomology}. For instance, in Example \ref{Ex:StructuresHomology}, we have two groups of four non-viable agents,  $\{ a_2, a_3, a_4, a_5\}$ and $\{ a_0, a_1, a_2, a_3\}$. However, the fact that the latter group forms a cycle that is detected by $\Ho_2$ means that some of its subsets of size three are compatible.  This is because $2$-chains are formed by $2$-simplices, so various subsets of this cycle must form triple viabilities. Therefore this subset of agents is in some sense ``closer'' to being viable than the subset $\{ a_2, a_3, a_4, a_5\}$. Alternatively, the defect of the cycle formed by $\{ a_0, a_1, a_2, a_3\}$ is less than the defect of the cycle formed by $\{ a_2, a_3, a_4, a_5\}$.

This observation can be used to selectively introduce mediators into political substructures to reduce the defect and increase stability in the most efficient way. Namely, suppose it is not feasible to mediate the entire political structure and the efforts instead have to be aimed at mediating subsets of agents. Topologically, this corresponds to coning off a subcomplex of $P$. Introducing a mediator into a situation where all agents are already viable, i.e.~constructing a cone on a simplex to produce another simplex of a higher dimension, is a waste of resources. Homology detects this, since a simplex is acyclic. 

At the other extreme, if a system has disconnected subsets of agents and is thus dysfunctional, selecting and connecting particular agents from each component for mediation might be the only way to move forward. This corresponds to constructing a cone on vertices from distinct components. Homology detects the improvement since, in the initial structure, $\Ho_0(P)\neq 0$, but after the mediation, $\Ho_0(P)= 0$. 

In between are the situations where coning off cycles that represent elements of $\Ho_n(P)$ removes incompatibilities and increases stability, starting with the smallest $n$ for which homology is nontrivial.

Furthermore, \refT{StructuresHomology} could have been stated in terms of cores.  The core $P^c$ of $P$ is the substructure in which no friendly delegations are possible (see \refS{Collapse}). Since $P$ is strongly homotopy equivalent to its core, we have 
\begin{equation}\label{E:CoreHomology}
\Ho_n(P)=\Ho_n(P^c).
\end{equation}
Since a core is a minimal representation, its homology cycles are minimal obstructions to viability. A targeted introduction of a mediator to those cycles is thus the most efficient way to turn $P$ into a compatible structure.

Homology can also detect the lack of the presence of a mediator. 

\begin{prop}\label{P:NoMediatorHomology}
Let $P=(A,\mathcal C)$ be a political structure. Suppose $\beta_n\neq 0$ for some $n\geq 0$. Then there is no mediator in $P$. 
\end{prop}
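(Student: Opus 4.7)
The plan is to prove this by contrapositive: if $P$ admits a mediator, then $\Ho_n(P) = 0$ for all $n \geq 0$, so every Betti number vanishes. The whole argument hinges on recognizing that ``having a mediator'' is just a restatement of the cone construction on simplicial complexes, after which the result follows from a homology calculation already done in the paper.

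First, I would carefully translate the definition. An agent $c \in A$ is a mediator precisely when, for every viable configuration $\gamma \in \mathcal C$, the set $\gamma \cup \{c\}$ is again viable. In simplicial language, every simplex of $P$ together with the vertex $c$ spans a simplex of $P$. If $P'$ denotes the full subcomplex of $P$ on the agents $A \setminus \{c\}$ (that is, the deletion $P \setminus c$), then this condition is exactly the statement that $P = P' \ast \{c\} = CP'$, the cone on $P'$ with cone vertex $c$, as discussed in Section \ref{S:PoliticsMods}.

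Second, I would invoke Proposition \ref{P:HomologyOfOperations}(2), which says $\Ho_n(CK) = 0$ for every complex $K$ and every $n \geq 0$. Applied with $K = P'$ this gives $\Ho_n(P) = \Ho_n(CP') = 0$ for all $n$, so $\beta_n = 0$ for all $n$. Taking the contrapositive yields the proposition. (As a sanity check, one could alternatively note that by Corollary \ref{C:ConeSuspensionStrong} the cone $CP'$ is strongly collapsible, hence strongly contractible, hence contractible, and so has trivial reduced homology; this gives the same conclusion via a slightly different route.)

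There is no serious obstacle here. The only thing to be careful about is the faithful translation between the political-structure notion of a mediator and the topological cone construction, in particular that the mediator is required to be willing to join \emph{every} coalition, not merely some; this is what ensures $P$ is an honest cone rather than just containing one as a subcomplex.
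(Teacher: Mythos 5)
Your proposal is correct and matches the paper's own argument: both identify the presence of a mediator with $P$ being a cone and then conclude that a cone has trivial (reduced) homology, the paper via strong contractibility and you via Proposition \ref{P:HomologyOfOperations}(2) (with the strong-collapsibility route noted as an alternative). Your extra care in verifying that the mediator condition makes $P$ literally equal to $CP'$ with $P'=P\setminus c$ is a welcome elaboration of a step the paper states without comment.
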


\begin{proof}
If $P$ has a mediator, then $P=CK$, the cone on some complex $K$. Since the cone is always (strongly) contractible, it has trivial homology (see \refS{Homology}).
\end{proof}

If two mediators are introduced, then, as we know, the model for this is the suspension $\Sigma P$ of $P$. We also know from the comments at the end of Example \ref{E:ConeSuspension} and the last part of \refP{HomologyOfOperations} that the homology of $P$ and $\Sigma P$ are just shifts of each other. It immediately follows from this and \refT{StructuresHomology} that, if $P$ has an $n$-cycle of non-viability in $P$ (as detected by $\beta_n$), then $\Sigma P$ will contain an $(n+1)$-cycle of non-viability. If more than two mediators are introduced, then, for each pair of mediators, a new $(n+1)$-cycle of non-viability will be introduced in $\Sigma P$.

It is not clear how to extend the results of this section to weighted complexes. There is a version of homology for this situation \cite{D:WeightedComplex, WRWX:WeightedMorse} that could generalize our observations to weighted complexes, but it requires the divisibility of the weights with respect to face inclusions.

%%%%%%%%%%%%%%%%%%%%%%%%%%%%%%%%%%%%%%%%%%%%%%%%%%%%%%%%%%%%%%%%%%%%%%%%%%%%%%%%%%%%%%%%

\section{Delegations and compromises}\label{S:Delegations}

%%%%%%%%%%%%%%%%%%%%%%%%%%%%%%%%%%%%%%%%%%%%%%%%%%%%%%%%%%%%%%%%%%%%%%%%%%%%%%%%%%%%%%%%

%
%
%
%The following shows an example of a political structure with four different parties. In one case, we have parties that all agree or have at least some commonality with each other. In the second case, with disjoint parties, no concensus is possible. 
%
%\includegraphics[height=5in]{political structure example.pdf}
%\newline In the language of algebraic topology, a political structure is a simplicial complex, where E describes the vertices and K the simplexes in E. 

The motivating setup in which Abdou and Keiding \cite{AK:Conflict} introduce the simplicial complex model for political structures is that of agents delegating, or relinquishing power to other agents that are better positioned than them in the structure. Upon delegation, the delegating agent is removed from the structure. 

More precisely, an agent $a$ is better positioned than agent $b$ in the political structure $P$ if vertex $b$ is dominated by $a$ in the simplicial complex representing $P$. One in this case also says that $a$ is more \emph{central} than $b$.  A \emph{friendly delegation} from $b$ to $a$, denoted by $\delta_{b\to a}$ \cite[Definition 2]{AK:Conflict}, is the deletion of $b$ from $P$. This corresponds to an elementary strong collapse $P\, \esc\, P\setminus b$ (see \refD{ElemStrongCollapse}).

The following result relates the notion of a friendly delegation to that of viability and stability. Suppose, as before, that $f_n$ is the number of viable configurations of $P$ with $n$ agents.  As usual, let $\lk(b)$ be the link of vertex $b$ in the simplicial complex representing $P$.

\begin{thm}\label{T:DelegationStability} Suppose $P$ is a $d$-dimensional political structure with $k+1$ agents and agent $b$ is dominated by agent $a$. 
\begin{enumerate}
\item A friendly delegation from agent $b$ to agent $a$ increases the viability of agent $a$ if and only if there exists a viable configuration containing $a$ but not $b$.
\item If $\sum_{i=0}^d f_i-|\lk(b)|\geq k+1$, then a friendly delegation increases the stability of $P$.
\end{enumerate}
\end{thm}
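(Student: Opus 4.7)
My plan is to translate the friendly delegation $\delta_{b \to a}$ into precise statements about how the star of $a$ and the total simplex count of $P$ change, and then compare the normalized viabilities and stabilities before and after the deletion.

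For part (1), the key observation is that removing $b$ from $P$ deletes exactly the simplices containing $b$. Setting $s = |\st(a)|$ in $P$ and $t = |\st(\{a,b\})|$, this gives $|\st(a)| = s - t$ in $P \setminus b$. The domination hypothesis, via \refP{DominatedLink}, says that $\lk(b)$ is a cone with apex $a$, which furnishes a bijection $\tau \leftrightarrow \tau \cup \{a\}$ between simplices of $\st(b)$ not containing $a$ and simplices of $\st(\{a,b\})$; a consequence is $|\st(b)| = 2t$, which I will reuse in part (2). With these relations, the viability formula in \refD{Stability} gives
\[
\operatorname{via}_P(a) = \frac{s-1}{2^k - 1}, \qquad \operatorname{via}_{P \setminus b}(a) = \frac{s - t - 1}{2^{k-1} - 1},
\]
where the denominator changes because $P \setminus b$ has $k$ agents rather than $k+1$. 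Cross-multiplying reduces the inequality $\operatorname{via}_{P \setminus b}(a) > \operatorname{via}_P(a)$ to a single linear condition in $s$, $t$, and $k$, which I would then match with the existence of a viable configuration containing $a$ but not $b$. Since the singleton $\{a\}$ is always such a configuration, the statement of (1) has to be read as requiring a non-trivial one, i.e. $s - t \geq 2$.

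For part (2), let $F = \sum_{i=0}^d f_i$. The bijection $\sigma \mapsto \sigma \setminus \{b\}$ between $\st(b) \setminus \{\{b\}\}$ and $\lk(b)$ gives $|\st(b)| = |\lk(b)| + 1$, so $P \setminus b$ has $F - |\lk(b)| - 1$ simplices and $k$ agents. Substituting into the stability formula from \refD{Stability},
\[
\operatorname{stab}(P \setminus b) = \frac{F - |\lk(b)| - (k+1)}{2^k - (k+1)}, \qquad \operatorname{stab}(P) = \frac{F - (k+1)}{2^{k+1} - (k+2)}.
\]
The hypothesis $F - |\lk(b)| \geq k+1$ ensures that the numerator of $\operatorname{stab}(P \setminus b)$ is non-negative, so both stabilities live in $[0,1]$. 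I would then cross-multiply and use the identity $2^{k+1} - (k+2) = 2(2^k - (k+1)) + k$, together with the automatic bound $|\lk(b)| \leq 2^k - 1$, to rearrange the comparison into a form where the hypothesis directly yields the inequality.

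The main obstacle I anticipate is algebraic rather than conceptual: both parts reduce to comparing two rational expressions whose denominators involve different powers of two, so the error terms arising from the change in normalization (from $k+1$ agents down to $k$) need to be absorbed with care. For part (1), the sharp interpretation of the existence condition as $s - t \geq 2$ is crucial for extracting a clean equivalence, and I expect to use $t \leq 2^{k-1}$ to handle the borderline case $s = 2t$. For part (2), the boundary case $F - |\lk(b)| = k+1$ should be analyzed separately, since it may yield equality rather than strict increase, and one might need to combine the hypothesis with the domination property to rule out degenerate configurations.
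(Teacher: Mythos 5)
Your part (1) bookkeeping is correct ($|\st_{P\setminus b}(a)|=s-t$, the cone bijection from \refP{DominatedLink}, $|\st(b)|=2t$), and the cross-multiplication does produce a clean criterion: $\operatorname{via}_{P\setminus b}(a)>\operatorname{via}_P(a)$ is equivalent to $2^{k-1}(s-2t-1)+t>0$, which, using $1\le t\le 2^{k-1}$, is equivalent to $s\ge 2t+1$. Your proposed reading of the hypothesis as ``$s-t\ge 2$'' is a different condition, and the equivalence fails for it: take four agents with maximal configurations $\{a,b,c\}$ and $\{a,b,d\}$; then $s=6$, $t=3$, so $s-t=3\ge2$, yet $\operatorname{via}(a)$ drops from $5/7$ to $2/3$ after deleting $b$. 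The correct reading --- and the one the paper's proof actually uses (its quantity $T>0$, built from maximal simplices containing $a$ but not $b$) --- is that there is a viable configuration containing $a$ but not $b$ that is \emph{not a face of any configuration containing $b$}, equivalently a \emph{maximal} configuration containing $a$ but not $b$ (i.e.\ $a$ is not in turn dominated by $b$); under the domination hypothesis this is exactly $s-2t\ge1$. So the fix for the ``$\{a\}$ always works'' problem is not to exclude singletons but to pass to maximal configurations; with your reading the forward implication of (1) is false.

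For part (2), your formulas for the two stabilities are right ($|\st(b)|=|\lk(b)|+1$, so $P\setminus b$ has $F-|\lk(b)|-1$ configurations and $k$ agents), but the concluding step cannot be carried out as described: cross-multiplying, the target inequality is $\bigl(F-(k+1)\bigr)\bigl(2^k-1\bigr)>|\lk(b)|\,\bigl(2^{k+1}-k-2\bigr)$, and this does not follow from $F-|\lk(b)|\ge k+1$ together with $|\lk(b)|\le 2^k-1$. The same example as above satisfies the hypothesis ($F=11$, $|\lk(b)|=5$, $11-5=6\ge k+1=4$) while the stability drops from $7/11$ to $1/2$, so the obstruction is not confined to the boundary case $F-|\lk(b)|=k+1$ that you flag. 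The paper's own argument is different in an essential way: it views the inverse of the delegation as introducing a mediator over $\lk(b)$ inside $P\setminus b$ and invokes \refP{SubstructureConeStability}, and what that proposition requires is that the number of \emph{unaffected} simplices --- the simplices of $P\setminus b$ not in $\lk(b)$, i.e.\ $F-2|\lk(b)|-1$, equivalently the simplices of $P$ outside $\overline{\st}(b)$ --- be at least the number of agents of $P\setminus b$. That count is strictly stronger than the quantity $F-|\lk(b)|$ you (and the theorem's literal wording) are using, and it is the count that makes the inequality go through. To complete part (2) you must replace your hypothesis-matching step by this stronger count; with only the literal hypothesis the inequality you are trying to prove can fail.
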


\begin{proof}
For (1), suppose $b$ is dominated by $a$. We have
$$
|\operatorname{st}_P(a)|=\underbrace{\sum_{\substack{\sigma\ \text{maximal in $P$} \\ a,b\in\sigma}}2^{\dim\sigma}
-\sigma\text{-overcount}}_S +
\underbrace{\sum_{\substack{\tau\ \text{maximal in $P$} \\ a\in\tau, b\notin\tau}}2^{\dim\tau}
-\tau\text{-overcount}}_T
%\ -1\ 
$$

where $\sigma$-overcount is the number of faces that are overcounted in the first sum; this is due to the fact that some of the maximal simplices $\sigma$ in that sum might share common faces. Similarly for $\tau$-overcount. To simplify the notation, the two summands will be abbreviated by $S$ and $T$, as indicated by the braces under the equation.

The hypothesis that there exists a viable configuration containing $a$ but not $b$ precisely means that $T>0$.

Denote by $\operatorname{via}_P(a)$ the viability of $a$ before the delegation and by $\operatorname{via}_{P\setminus b}(a)$ the viability of $a$ after the delegation. Similarly denote by $\operatorname{st}_P(a)$ and $\operatorname{st}_{P\setminus b}(a)$ the stars of $a$ before and after the delegation. We want to show that 
$$
\operatorname{via}_P(a) < \operatorname{via}_{P\setminus b}(a),
$$
or
\begin{equation}\label{E:FriendlyStability}
\frac{1}{2^k-1}(|\operatorname{st}_P(a)|-1)< \frac{1}{2^{k-1}-1}(|\operatorname{st}_{P\setminus b}(a)|-1).
\end{equation}

%The subtraction of $1$ at the end is due to the overcount of agent $a$ which appears once in both $S$ and $T$.

Note that $\dim\sigma\geq 1$ for all maximal $\sigma$ containing $a$ and $b$. After the friendly delegation (i.e.~elementary collapse), each $\sigma$ is reduced in dimension by one, so that 
$$
|\operatorname{st}_{P\setminus b}(a)|
%=\frac{1}{2}\sum_{\substack{\sigma\ \text{maximal in $P$} \\ a,b\in\sigma}}2^{\dim\sigma} +
%\sum_{\substack{\tau\ \text{maximal in $P$} \\ a\in\tau, b\notin\tau}}2^{\dim\tau}
%-\text{overcount}_{P\setminus b}
=\frac{S}{2}+T
$$
Then verifying \eqref{E:FriendlyStability} comes down to verifying
$$
2^{k-1}<\frac{S}{2} + 2^{k-1}T.
$$
Since $T>0$ by assumption (and $S>0$ always), this is true.  This is also a necessary condition since, if $T=0$, we have $S\leq 2^k$ since the greatest cardinality of a star that a vertex can have in a complex of $k+1$ vertices is $2^k$, and that happens precisely when the complex is a $k$-simplex.

For (2), we use a standard result that the closed star of a vertex is a cone over its link. Since the friendly delegation from $b$ to $a$ (i.e.~elementary collapse) removes $\st(b)$, the inverse of a friendly delegation (i.e.~elementary expansion) is therefore a cone on $\lk(b)$. Now since the number of simplices not in $\lk(b)$  is by assumption greater than $k+1$, \refP{SubstructureConeStability} says that such an elementary expansion decreases the stability of $P$.  Thus the elementary collapse from $b$ to $a$ must increase the stability.
\end{proof}

Note that the end of the proof of (1) also shows that the viability of an agent remains unchanged after that agent has been delegated to corresponds precisely to the situation when the agent is maximally viable, namely when the system is a $k$-simplex. A delegation then reduces the system to a $(k-1)$-simplex in which the agent remains maximally viable.

%The interpretation of (1) is that ...

Since there could be more than one vertex $a$ that dominates $b$ (for example, in the left picture of Figure \ref{F:PoliticalStructures}, $a_2$ is dominated by $a_1$ and $a_3$), a friendly delegation of $b$ to $a$ simultaneously changes the viability of all the vertices dominating $b$ since an elementary collapse that deletes $b$ results in the same complex regardless of how many or which vertices dominate it.  This reflects the fact that $b$ could have delegated to any of those more central agents and the effect on the political structure would have been the same. 

An interesting feature is thus that some viabilities might increase and some might decrease depending on whom the power is delegated to, depending on whether the dominating vertices are members of viable configurations that include $b$ or not. For example, in the left picture of Figure \ref{F:PoliticalStructures}, deleting $a_2$ decreases the viability of $a_3$ but increases the viability of $a_1$.

These considerations indicate that the definition of a friendly delegation and the definition of viability could possibly be amended to take into account not just the delegating agent but also the one that the power is being delegated to.

%That a friendly delegation would decrease the stability is not surprising since an agent who belonged to some viable configuration is removed along with those configurations.  
%Removing an agent... 

We plan to investigate the situation when $\sum_{i=0}^d f_i-|\lk(b)|< k+1$ in the near future. The conditions here are more restrictive than those in  \refP{SubstructureConeStability} since an elementary expansion is now not a cone on an arbitrary subcomplex but on $\lk(b)$. In turn, by \refP{DominatedLink}, this link is itself a cone on $a$. In other words, a friendly delegation is the inverse of introducing a mediator on a subcomplex that is a cone on a vertex, and, under this condition, \refP{SubstructureConeStability} might have a more straightforward extension to the case  $\sum_{i=0}^d f_i-|\lk(b)|< k+1$.

Reducing a structure by a delegated removal of agents to one where all that is left is a viable configuration is a \emph{represented compromise}, or \emph{$R$-compromise}  \cite[Definition 9]{AK:Conflict}. If iterated delegations are needed, i.e~an agent that has been delegated to proceeds to delegate to a third agent, and this process possibly continues, ending in a viable configuration, then this is a \emph{delegated compromise} or \emph{$D$-compromise} \cite[Definition 12]{AK:Conflict}.

Two of the central results in \cite{AK:Conflict} are the following.

\begin{prop}[\cite{AK:Conflict}, Propositions 6 and 10]\label{P:AbdouKeidingCompromises}
If a political structure has an $R$-compromise, then it is strongly contractible. A political structure has a $D$-compromise if and only if it is strongly contractible.
\end{prop}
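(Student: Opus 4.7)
The plan is to leverage the correspondence (recorded at the opening of this section) between friendly delegations and elementary strong collapses: a friendly delegation $\delta_{b\to a}$ is the same datum as the elementary strong collapse $P\, \esc\, P\setminus b$. Consequently both an $R$-compromise and a $D$-compromise furnish a sequence of elementary strong collapses
$$
P\, \esc\, P_1\, \esc\, P_2\, \esc\, \cdots\, \esc\, \sigma,
$$
whose terminal structure $\sigma$ is a viable configuration, i.e.~a (standard) simplex.

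For the first sentence, I would start from such a sequence supplied by an $R$-compromise. Since every simplex is a cone on one of its vertices, \refC{ConeSuspensionStrong} tells us that $\sigma$ is itself strongly collapsible to a vertex. Concatenating this collapse with the one coming from the $R$-compromise exhibits $P$ as strongly collapsible to a vertex; by \refC{Contiguous=StrongCollapse} this is precisely the condition that $P$ be strongly contractible. Running the same argument on the (possibly longer) sequence supplied by a $D$-compromise establishes the forward implication in the second sentence.

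For the converse in the $D$-compromise case I would start from a strongly contractible $P$ and invoke \refT{StrongCores}: the core $P^c$ is isomorphic to the one-vertex complex. This gives a sequence $P\, \esc\, P_1\, \esc\, \cdots\, \esc\, \{v\}$ of elementary strong collapses terminating at a single vertex. Each collapse $P_i\, \esc\, P_{i+1}$ removes a vertex $b_i$ dominated by some $a_i\in P_i$ (\refP{DominatedLink}), and is therefore the friendly delegation $\delta_{b_i\to a_i}$. The terminal $\{v\}$ is trivially a viable configuration, so reading these delegations off in order yields the required $D$-compromise.

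The main obstacle I expect is verifying that this list of delegations indeed qualifies as a $D$-compromise rather than an $R$-compromise: the recipient $a_i$ of a delegation at stage $i$ may itself be dominated in some later $P_j$ and in turn delegate, a pattern forbidden in the $R$-setting but exactly the iteration allowed in the $D$-setting. This asymmetry is also what blocks the converse of the first sentence, since a strongly contractible $P$ need not admit a single non-iterated round of delegations to a simplex. I would therefore take care to match the sequence of elementary strong collapses against Abdou-Keiding's formal definition of $D$-compromise (cf.~\cite[Definition 12]{AK:Conflict}) to confirm that every such sequence ending at a simplex is in fact a $D$-compromise.
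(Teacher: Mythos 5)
Your argument is sound, but note that the paper itself does not prove this statement: it is quoted directly from Abdou and Keiding (their Propositions 6 and 10) and used as a black box, so there is no internal proof to compare against. What you have written is a faithful reconstruction of that result inside the paper's own toolkit, and it holds together. The forward implications are exactly as you say: either kind of compromise is, via the dictionary friendly delegation $=$ elementary strong collapse, a strong collapse of $P$ onto a simplex; a simplex is a cone, hence strongly collapsible by \refC{ConeSuspensionStrong}, and concatenating and invoking \refC{Contiguous=StrongCollapse} gives strong contractibility. For the converse in the $D$-case, \refT{StrongCores} supplies a collapse of $P$ onto its one-vertex core, each elementary collapse removes a vertex dominated by some surviving vertex (\refP{DominatedLink}) and is therefore a friendly delegation, and the terminal vertex is trivially viable; since Abdou--Keiding's Definition 12 permits the recipient of a delegation to delegate later, any such iterated sequence ending in a viable configuration qualifies as a $D$-compromise, so the caveat you flag is harmless. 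One small point worth making explicit in the $R$-direction: in Abdou--Keiding's formulation the delegating agents are all dominated by agents of the final configuration, with domination assessed in the original structure, so to get a genuine sequence of elementary strong collapses you should record the (easy) observation that if $b$ is dominated by $a$ in $P$ and $a$ is never deleted, then $b$ remains dominated by $a$ after other vertices are deleted, because maximal simplices of a deletion are obtained from maximal simplices of $P$ by removing the deleted vertices. You are also right that the argument is one-directional for $R$-compromises, consistent with the statement; your reconstruction buys the reader a self-contained proof in the language of Sections 3 and 8, whereas the paper's choice to cite keeps the exposition short at the cost of opacity about why the equivalence holds.
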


We can relate the notion of a $D$-compromise to that of mediators and merging of structures.

\begin{prop}\label{P:WedgeMediatorCompromise}
Suppose $P$ is a political structure.
\begin{enumerate}
\item If $P$ contains a mediator, then $P$ has a $D$-compromise.

\item If $P$ contains more than one mediator, then $P$  has a $D$-compromise if and only if the structure that remains once mediators are removed has a $D$-compromise.

\item If $P$ is the result of two political structures $P'$ and $P''$ merging, then $P$ has a $D$-compromise if and only if $P'$ and $P''$ have $D$-compromises.
\end{enumerate}

\end{prop}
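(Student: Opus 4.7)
The overall plan is to reduce each of the three parts to a strong contractibility statement via \refP{AbdouKeidingCompromises}, which identifies the existence of a $D$-compromise with strong contractibility, and then to invoke the structural propositions on cones, joins, and wedges from \refS{Collapse}.

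For (1), the defining property of a mediator $c$ immediately realizes $P$ as the cone $CK$, where $K$ is the subcomplex spanned by the non-mediator agents. \refC{ConeSuspensionStrong} gives that $CK$ is strongly collapsible, and so \refP{AbdouKeidingCompromises} produces the $D$-compromise.

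For (2), I would first argue that $m \geq 2$ mediators together with the remaining substructure $K$ decompose $P$ as a join $P = K \ast M$, where $M$ is the subcomplex on the mediator vertices. \refP{JoinCollapse} then says $P$ is strongly collapsible if and only if $K$ or $M$ is. Reading the mediators as forming a discrete $0$-complex (agents each compatible with every non-mediator but not necessarily with one another), the generalization of \refC{ConeSuspensionStrong} stated at the end of \refS{Collapse} tightens this to: $P$ is strongly collapsible if and only if $K$ is. \refP{AbdouKeidingCompromises} then converts this into the desired equivalence for $D$-compromises, with $K$ playing the role of ``the structure that remains once mediators are removed''. The main obstacle here is fixing the combinatorial model for ``more than one mediator'': under the strict reading in which all mediators are automatically pairwise compatible, $M$ becomes the full simplex $\Delta^{m-1}$, which is itself strongly collapsible, and the conclusion of \refP{JoinCollapse} becomes uninformative on the $K$ side; the $0$-complex reading is the one under which the equivalence becomes substantive and matches the wording of the statement.

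For (3), the hypothesis that $P$ arises from merging $P'$ and $P''$ identifies $P = P' \vee P''$ as a wedge at the representative agent. \refC{WedgeCollapse} gives that this wedge is strongly collapsible if and only if each summand is, and \refP{AbdouKeidingCompromises} again converts this into the iff for $D$-compromises.
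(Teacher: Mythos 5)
Your proposal is correct and takes essentially the same route as the paper: cone collapsibility for (1), the second part of \refC{ConeSuspensionStrong} generalized to the join with a discrete $0$-complex of mediator vertices for (2), and \refC{WedgeCollapse} for (3), each converted into a statement about $D$-compromises via \refP{AbdouKeidingCompromises}. The discrete-$0$-complex reading of ``more than one mediator'' that you settle on is indeed the paper's convention (two mediators yield the suspension), so the ambiguity you flag in (2) is resolved exactly as you resolved it.
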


\begin{proof}
Part (1) follows from the first part of \refC{ConeSuspensionStrong}. Part (2) follows from the second part of \refC{ConeSuspensionStrong}, but generalized from suspension to join with an arbitrary number of vertices. Part (3) follows from \refC{WedgeCollapse}.
\end{proof}

We can also use homology to detect the nonexistence of compromises.

\begin{prop}\label{P:HomologyCompromise}
Suppose $P$ has nontrivial homology, i.e.~$\beta_n\neq 0$ for some $n\geq 0$. Then $P$ does not have an $R$-compromise or a $D$-compromise.
\end{prop}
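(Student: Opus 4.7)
The plan is to argue by contradiction, invoking the Abdou--Keiding characterization of compromises (\refP{AbdouKeidingCompromises}) together with the homotopy invariance of homology that was recorded just after \refT{ContiguityHomology}. The whole proof should fit in a few lines, so there is no real obstacle; the content is essentially assembling earlier pieces.

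First I would assume, for contradiction, that $P$ admits an $R$-compromise or a $D$-compromise. By \refP{AbdouKeidingCompromises}, in either case $P$ must be strongly contractible, i.e.~strongly equivalent to a single-vertex complex $\{*\}$. By \refC{Contiguous=StrongCollapse}, strong equivalence is the same as having a common strong homotopy type, so $P$ and $\{*\}$ are connected by a sequence of elementary strong collapses and expansions.

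Next I would use the observation (recorded immediately after \refT{ContiguityHomology}) that simplicial complexes with the same strong homotopy type have isomorphic homology groups in every degree. Concretely, each elementary strong collapse is built from inclusions of subcomplexes which are contiguous to identities, so \refT{ContiguityHomology} forces $\Ho_n(P)\cong \Ho_n(\{*\})$ for all $n\geq 0$. Since the reduced homology of a single vertex is trivial in every degree, this gives $\beta_n(P)=0$ for all $n\geq 0$, contradicting the hypothesis that some $\beta_n\neq 0$. Therefore $P$ admits neither an $R$-compromise nor a $D$-compromise.

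The only step that requires any care is making sure that the homology argument uses the reduced version (as set up in \refD{Homology}), so that $\Ho_0$ of a point is $0$ and the full statement ``$\beta_n\neq 0$ for some $n\geq 0$'' yields a contradiction; this is consistent with the convention adopted throughout \refS{Homology}.
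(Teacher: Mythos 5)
Your proposal is correct and follows essentially the same route as the paper: the paper argues contrapositively that nontrivial homology rules out strong contractibility (using the homotopy invariance of homology under strong equivalence, with reduced homology so that a point is acyclic) and then applies Proposition \ref{P:AbdouKeidingCompromises}, which is exactly the content of your contradiction argument. No gaps; the extra care you take with the reduced-homology convention is consistent with the paper's setup.
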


\begin{proof}
Nontrivial homology means that $P$ is not strongly contractible (see \refS{Homology}). By \refP{AbdouKeidingCompromises}, this means that $P$ does not have an $R$-compromise or a $D$-compromise.
\end{proof}

One could again regard these results in terms of the core $P^c$. The minimal complex obtained by delegated compromises, i.e.~by iterated strong elementary collapses, is in the terminology of Abdou and Keiding \cite{AK:Conflict} called a \emph{delegated core} or a \emph{$D$-core}. If the $D$-core of $P$ is a single vertex, then a $D$-compromise exists. As was observed in \eqref{E:CoreHomology}, the homology groups of $P$ and $P^c$ are isomorphic.  If the homology of the $D$-core is not trivial, then it follows that $P^c$ is not a single vertex and $P$ therefore does not have a $D$-compromise. (This observation is essentially the homological version of a part of \cite[Proposition 10]{AK:Conflict}.)

The $D$-core is the minimal subcomplex that contains homological obstructions to a $D$-compromise and it thus gives the most efficient prescription for capping off the nontrivial cycles using mediators so as to increase the potential of a compromise in the structure.

\begin{example}\label{Ex:MiddleEastCompromise}
Continuing with the Middle East example, \refEx{MiddleEast}, we observed in \refEx{MiddleEastBetti} that this simplicial complex has nontrivial homology.  According to \refP{HomologyCompromise}, this structure therefore does not have an $R$-compromise or a $D$-compromise. Its $D$-core consists of 0-cycles $a_{11}$ and $a_{12}$ and a 1-cycle $\{ a_2, a_3, a_7, a_8\}$. The most efficient mediations would thus be ones that cone off some subsets of these six agents.
\end{example}

It would be interesting to extend the notion of a friendly delegation in the context of weighted political structures. One drawback of the current definition is that the identity of an agent is lost when they delegate to another one.  If the agents carried some weight, then the delegating agent's weight could be combined with that of the agent who has been delegated to. This would resemble a more standard process of the formation of coalitions, but with the caveat that the structure of dominated vertices would put restrictions as to the order in which these coalitions could be formed via delegations. This process could be aggregated over all the issues, i.e.~over the weights of viable configurations so as to produce a compounded power index over the entire issue space (something along these lines was already alluded to at the end of \refS{SimplicialModelIssues}).

\section{Future directions}\label{S:Future}

%%%%%%%%%%%%%%%%%%%%%%%%%%%%%%%%%%%%%%%%%%%%%%%%%%%%%%%%%%%%%%%%%%%%%%%%%%%%%%%%%%%%%%%%

One direction we have not explored in this paper is to potentially utilize the  notion of the \emph{poset} of a simplicial complex $K$ consisting of faces of $K$, ordered by inclusion (for details, see, for example \cite{W:PosetTopology}).  In addition to providing another model and calculational tool for the viability of an agent and the stability of the political system, there are other potential uses for the poset approach. Namely, the lengths of the chains in this poset could provide insights into the structure of maximal coalitions and the defect. One could give interpretation to the notion of a \emph{pure} political structure (all maximal coalitions have the same dimension). The \emph{M\"obius function} of a poset could also prove to be fruitful since it is related to the Euler characteristic. The Euler characteristic is in turn the alternating sum of the Betti numbers, and we have seen that those are relevant for analyzing political structures. The functoriality of the poset construction could also be brought to bear.

Another important construction that can be performed on simplicial complexes is that of a \emph{barycentric subdivision}. As was mentioned in Remark \ref{R:BarycenterContiguity}, barycentric subdivision is intertwined with the notion of contiguity which is the central idea in \cite{AK:Conflict} and in our \refS{Delegations}. A barycentric division of a simplex corresponds to introducing a new agent who breaks up a viable configuration of $n$ agents and creates $n$ new ones consisting of subsets of original agents of size $n-1$, with the new agent present in all of them.  It would be interesting to explore how barycentric subdivisions interact with friendly delegations. In addition, the poset of a complex and the barycentric subdivisions are related since the poset functor can be followed by the functor that creates a simplicial complex essentially by turning chains into simplices, and the composition of the two is precisely the barycentric subdivision of the original complex. Finally, barycentric subdivision is needed to endow the product of complexes $K\times L$ with the structure of a complex, so bringing this into the picture might give meaning to the notion of a product of political structures.

Given a field $\mathbf k$, consider $\mathbf k[v_0, ..., v_k]$, the polynomial ring on the vertices of a simplicial complex $K$. Let $\mathcal I$ be the ideal generated by the non-faces of $K$, namely those in the complement of $K$ in the full simplex determined by the vertices of $K$. This is the well-known \emph{Stanley-Reisner ideal}. The \emph{Stanley-Reisner ring} associated to $K$ is then the quotient $\mathbf k[v_0, ..., v_k]/\mathcal I$. It would be useful to interpret algebraic invariants, in particular its homology (which can be computed using Gr\"obner bases), of this ring in terms of political structures. A simplicial complex is uniquely determined by its Stanley-Reisner ring \cite{BG:StanleyReisner}, so anything one can learn about political structure from its simplicial complex model can also be deduced from the associated Stanley-Reisner ring. 

A more classical notion of a collapse dates back to J.H.C.~Whitehead \cite{W:Simplicial}. According to this, if $\tau$ is a face of $\sigma$ and $\sigma$ is the only simplex that has $\tau$ as a proper face, then a collapse removes all simplices  contained in $\sigma$ and containing $\tau$ (including $\tau$ and $\sigma$ themselves). The notion of a strong collapse is stronger than this usual collapse \cite[Proposition 2.14]{BM:StrongHtopy}. It would be interesting to see which segments of our work here extend to the setting of usual collapses. One advantage of using strong collapses is that 
\refP{JoinCollapse} is not known for collapses (see comment immediately preceding Proposition 2.15 in \cite{BM:StrongHtopy}). Strong collapses are also tailor-made for friendly delegations, but a more general idea of coalition delegation (an entire coalition delegates its power to an agent) might be the natural generalization modeled by the usual collapses.

%One reason that it's nice to use strong collapse is the comment in Barmak-Minian right before their Prop 2.15, which is our \refP{JoinCollapse} which says that that result isn't known for regular collapse, just strong collapse.
%Mention that Whitehead's notion of collapse is weaker than this strong one (Barmak-Minian Prop 2.14). What is this exactly and how is it different from the strong notion? Why is the stronger one the right one to use? Is there something that can be done with the weaker one? Mention that this collapse uses the notation $\searrow$ so that's why we have two of those arrows, to indicate strong. We also have for the usual collapse that, if $K$ collapses to $K'$, then $|K'|$ is a deformation retract of $|K|$. Maybe for strong collapse you get strong deformation retract?

%%%%%%%%%%%%%%%%%%%%%%%%%%%%%%%%%%%%%%%%%%%%%%%%%%%%%%%%%%%%%%%%%%%%%%%%%%%%%%%%%%%%%%%%%%%%%%%%%%%%%%%%

\appendix

\section{The topology of simplicial complexes}\label{A:SimplicialComplexes}

%%%%%%%%%%%%%%%%%%%%%%%%%%%%%%%%%%%%%%%%%%%%%%%%%%%%%%%%%%%%%%%%%%%%%%%%%%%%%%%%%%%%%%%%%%%%%%%%%%%%%%%%

This section provides more detail about the topology of simplicial complexes that was briefly recalled in \refS{SimplicialComplexes}.
Some further references are  \cite{FP:SimpTop, K:CombAlgTop, M:Borsuk-Ulam, Spanier:Top}.

%%%%%%%%%%%%%%%%%%%%%%%%%%%%%%%%%%%%%%%%%%%%%%%%%%%%%%%%%%%%%%%%%%%%%%%%%%%%%%%%%%%%%%%%

\subsection{Basic definitions}\label{S:SimplCplx}

%%%%%%%%%%%%%%%%%%%%%%%%%%%%%%%%%%%%%%%%%%%%%%%%%%%%%%%%%%%%%%%%%%%%%%%%%%%%%%%%%%%%%%%%

We recall the definition of a simplicial complex for convenience.

\begin{defin}\label{D:SimplicialComplexAppendix}
An \textbf{(abstract) simplicial complex} $K = (V, \Delta)$ consists of a 
finite set $V$ whose elements are called \emph{vertices}  and a set $\Delta$ of subsets of $V$ called \emph{simplices}  satisfying
\begin{enumerate}
\item Elements of $V$ are in $\Delta$;
\item If $\sigma\in\Delta$ and $\tau\subset\sigma$, then $\tau\in\Delta$.
\end{enumerate}
\end{defin}

%We will sometimes denote the vertex set by $V(K)$. If $|V|=k+1$, we will also often label the elements of $V$ by $v_0, ..., v_{k}$.  The choice of enumeration of the elements will not be relevant.
%
%Since elements of $V$ are already listed in $\Delta$, we will often simply identify $K$ with $\Delta$ and will not write down $V$ explicitly.

As mentioned in \refS{SimplicialComplexes}, we will usually identify $K$ with $\Delta$. We will also sometimes denote the vertex set by $V(K)$. Also recall that the standard $n$-simplex consists of $n+1$ vertices and all possible nonempty subsets of the set of vertices.

\begin{example}\label{Ex:AbsSimplCplx}
An example of a simplicial complex on the vertex set  $V=\{v_0, v_1, ..., v_7\}$ is 
\begin{align*}
K= & \{
\{v_0, v_1, v_2, v_3\},
\{v_0, v_1, v_2\}, 
\{v_0, v_1, v_3\}, 
\{v_0, v_2, v_3\}, 
\{v_1, v_2, v_3\},
\{v_0, v_1\}, 
\{v_0, v_2\}, 
\{v_0, v_3\}, 
\{v_1, v_2\}, \\ & 
\{v_1, v_3\}, 
 \{v_2, v_3\}, 
\{v_3, v_4\}, 
\{v_3, v_5\}, 
\{v_4, v_5, v_6\}, 
\{v_4, v_5\}, 
\{v_4, v_6\}, 
\{v_5, v_6\}, 
\{v_4, v_7\},  \\
& \{v_0\}, \{v_1\}, \{v_2\}, \{v_3\}, \{v_4\}, \{v_5\}, \{v_6\}, \{v_7\}
\}
\end{align*}

A non-example is $V=\{v_0, v_1, v_2, v_3, v_4\}$ and a collection of subsets
$$
K=\{ 
\{v_1, v_2, v_3, v_4\},
\{v_1, v_3, v_4\},
\{v_0, v_2\},
\{v_1, v_2\},
\{v_1, v_4\},
\{v_2, v_3\},
\{v_3, v_4\},
\{v_0\}, \{v_1\}, \{v_2\}, \{v_3\}, \{v_4\}
\}
$$
The reason this is not a simplicial complex is that, for example, even though $\{v_1, v_2, v_3, v_4\}$ is in $K$, its subset $\{v_1,v_2, v_4\}$ is not.

\refEx{TopSimplCplx} shows a topological visualization of these examples.
\end{example}

\begin{example}\label{Ex:Standard}
The \emph{(standard) $n$-simplex} is a simplicial complex where $|V|=n+1$ and $K=\mathcal P_0(V)$, the power set of $V$ without the empty set.  Thus $K$ contains all possible nonempty subsets of $V$. 
\end{example}

\begin{example}
Recall that a \emph{graph} $\Gamma$ consists of a set $V$ of vertices and a set $E$ of edges defined as 
$$
E\subset\{\{v_1,v_2\}\colon v_1,v_2\in V, v_1\neq v_2\}.
$$ 
It is immediate from the definitions that a graph is precisely a 1-complex.
\end{example}

Here is some additional standard terminology we need:

\begin{defin}\label{D:ComplexStuff}\ 
\begin{itemize}
\setlength\itemsep{4pt}
\item If $\sigma$ is a simplex of $K$, then any subset $\tau$ of $\sigma$ is called a \emph{face} of $\sigma$.  The second condition in \refD{SimplicialComplex} thus says that all faces of a simplex are also simplices, i.e.~a simplicial complex is closed under taking faces.

\item
If $|\sigma|=n+1$, $n\geq 0$, then $\sigma$ is an \emph{$n$-simplex} (this is the same object as in Example \ref{Ex:Standard}, only now we think of it as an element of a simplicial complex). Vertices are thus 0-simplices.  
\item The \emph{dimension  of an $n$-simplex} is $n$.  The \emph{dimension of $K$} is 
$$\dim K = \max \{\dim \sigma \colon \sigma \in \Delta \}.$$ 
If the dimension of $K$ is $d$, we will say that $K$ is a \emph{$d$-complex}. 
\item A simplicial complex $L=(V',\Delta')$ is a \emph{subcomplex} of $K = (V, \Delta)$ if $V'\subset V$ and $\Delta'\subset\Delta$.
\item The \emph{$n$-skeleton} $K^{(n)}$ of $K$ is the collection of $n$-simplices of $K$ along with their faces. This is a subcomplex of $K$. There is a \emph{filtration} of $K$
$$
V=K^{(0)}\subset K^{(1)}\subset\cdots\subset K^{(d)}
$$
where $d$ is the highest simplex dimension occurring in $K$.
%\item Denote by $K(n)$ the set of $n$-simplices of $K$, i.e.~the set of elements of $\Delta$ of cardinality $n$.  This is not a subcomplex of $K$ because it does not contain the faces of the simplices in it.

\item The \emph{$f$-vector} of a non-empty $d$-dimensional complex $K$ is  
$$f=(f_0, f_1, ..., f_d)$$ where $f_n$, $0\leq n\leq d$, is the number of $n$-dimensional simplices of $K$.

\item A simplex that is not a face of another simplex is called a \emph{facet} or a \emph{maximal simplex}. A simplicial complex is determined by its maximal simplices.

\item A vertex $v$ is \emph{dominated by a vertex $w$} if every maximal simplex that contains $v$ also contains $w$.

\item A complex is \emph{minimal} if it has no dominated vertices.

\item The \emph{star of a simplex $\sigma$}, denoted by $\operatorname{st}(\sigma)$ is the subset of $K$ consisting of simplices that have $\sigma$ as a face, i.e.
$$\operatorname{st}(\sigma)=\{\tau\in K\colon \sigma\subset\tau\}.$$ 
This is not a subcomplex in general.  The special case we are most interested in is when $\sigma$ is a vertex.
%$v$
%, so 
%$$\st(v)=\{\tau\colon v\in\tau\}.$$

\item The \emph{closed star of $\sigma$}, or the \emph{closure of $\sigma$}, denoted by $\overline{\st}(\sigma)$ is the smallest subcomplex of $K$ containing $\st(\sigma)$. In other words,
$$\overline{\st}(\sigma)=\{\tau\in K\colon \sigma\cup\tau\ \text{is a simplex of $K$}\}.$$

\item The \emph{deletion} of a vertex $v$ is the subcomplex of $K$ obtained by removing from it $\st(v)$. In other words, this is the subcomplex spanned by all vertices of $K$ other than $v$. The deletion is denoted by $K\setminus v$.

\item The subcomplex of $\overline{\st}(\sigma)$ of simplices disjoint from $\sigma$ is called the \emph{link of $\sigma$}, denoted by $\lk(\sigma)$. In other words, $$\lk(\sigma)=\overline{\st}(\sigma)-\st(\sigma).$$

\end{itemize}
\end{defin}

%%%%%%%%%%%%%%%%%%%%%%%%%%%%%%%%%%%%%%%%%%%%%%%%%%%%%%%%%%%%%%%%%%%%%%%%%%%%%%%%%%%%%%%%

\subsection{Geometric realization}\label{S:Realization}

%%%%%%%%%%%%%%%%%%%%%%%%%%%%%%%%%%%%%%%%%%%%%%%%%%%%%%%%%%%%%%%%%%%%%%%%%%%%%%%%%%%%%%%%

For a simplicial complex $K$, there exists a topological space $|K|$ called the \emph{geometric realization} or the \emph{polyhedron} of $K$.  To construct it, first consider the \emph{standard (topological) $n$-simplex} $\Delta^n$ in $\R^{n+1}$ given as the convex hull of the standard basis vectors $e_1, ..., e_{n+1}$, i.e.~the set
$$
\left\{(t_1,...,t_{n+1})\in\R^{n+1} \colon   t_i\geq 0 \text{ for all $i$ and }\sum_{i=1}^{n+1}t_i=1 \right\}.
$$
This is topologized as a subspace of $\R^{n+1}$. Thus $\Delta^0$ is a point, $\Delta^1$ is a line segment, $\Delta^2$ is a (filled-in) triangle, $\Delta^3$ is a (solid) tetrahedron, etc. Faces of $\Delta^n$ are obtained by setting some of the $t_i$ to zero.

The realization $|K|$ is then constructed by taking a copy of $\Delta^n$ for each $n$-simplex of $K$, $n\geq 0$ and gluing them along common faces. Thus $|K|$ is the quotient of the disjoint union $\amalg_\alpha \Delta^\alpha$ where $\alpha$ runs over all simplices of $K$. There are various ways to choose the copies of the standard simplices and glue them, but it turns out that they produce homeomorphic spaces (see \refP{SimplIsoHomeo} below). It is also a well-known result that an abstract $d$-complex $|K|$ has a geometric realization in $\R^{2d+1}$ (see, for example \cite[Section 1.6]{M:Borsuk-Ulam}).

One can also think of the geometric realization as taking $|V|$ affinely independent points (so they do not lie on an $(|V|-1)$-dimensional hyperplane) in $\R^N$ (for $N$ sufficiently large) and filling in a topological simplex (convex hull) for each subset of those points that appears as a subset in $\Delta$.  This means that, instead of taking the full $|V|$-simplex, we take only those of its faces as prescribed by $\Delta$. The result is topologized as a subspace of $\R^N$. We will omit the details here and will illustrate the construction with an example.  

We will also, by abuse of notation, label the affinely independent points in $\R^N$ the same way as the vertices in $V$.

\begin{example}\label{Ex:TopSimplCplx}
The geometric realization of the simplicial complex from Example \ref{Ex:AbsSimplCplx} is given below in the left picture of Figure \ref{F:TopSimplicialComplex}. The right picture provides a visualization of how the second part of that example fails to give a simplicial complex. Note that by abuse of notation we are labeling the vertices of the realization the same way as the elements of the vertex set of the abstract simplicial complex.
\begin{figure}[h]
\centering
\includegraphics[width=2in]{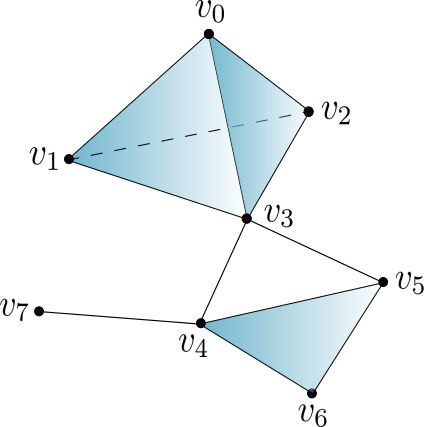}
\ \ \ \ \ \ \ 
\includegraphics[width=1.7in]{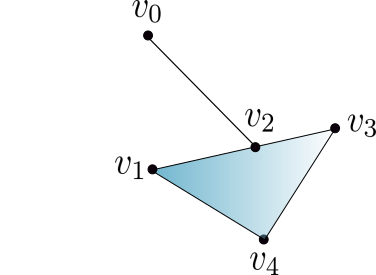}
\caption{An example and a non-example of a simplicial complex.}
\label{F:TopSimplicialComplex}
\end{figure}

\end{example}

In terms of geometric realizations, one way to think about the star of a vertex (\refD{ComplexStuff}) is as the analog of an open neighborhood of that vertex, closed star as its closure, and link as the boundary of the star. For example, the star of the vertex $v_3$ in the simplicial complex of Example \ref{Ex:TopSimplCplx} consists of the tetrahedron but with the interior of the triangle with vertices $v_0, v_1, v_2$ removed as well as the two edges emanating from $v_3$ and ending at $v_4$ and $v_5$. Abstractly, even though $\st(v_3)$ contains the sets $\{v_0, v_1, v_2, v_3\}$, $\{v_3, v_4\}$, and $\{v_3, v_5\}$, it does not contain all the subsets of those sets ($\{v_0, v_1, v_2\}$, $\{v_3\}$, and $\{v_4\}$) and that is why it is not a subcomplex.

\subsection{Operations on simplicial complexes}\label{S:ComplexOperations}

%%%%%%%%%%%%%%%%%%%%%%%%%%%%%%%%%%%%%%%%%%%%%%%%%%%%%%%%%%%%%%%%%%%%%%%%%%%%%%%%%%%%%%%%

There are several useful constructions one can perform on simplicial complexes. These correspond to operations on political structures in \refS{PoliticsMods}.

%The most basic construction is the \emph{product} of simplicial complexes $K$ and $L$, denoted as usual by $K\times L$. We will not define this precisely here since we will not need it in a serious way.  Suffice it to say that, to define this product, one first has to decide on a coherent way of turning the set product of two simplices $\sigma$ and $\tau$ into a simplicial complex (i.e.~\emph{triangulating} it). The vertices of the product complex are pairs of vertices, one from $\sigma$ and one from $\tau$ and simplices are determined by using an ordering on the vertices of $\sigma$ and $\tau$. This procedure produces a simplicial complex $K\times L$ with the important property that it commutes with realization.  Thus
%$$
%|K\times L|\cong |K|\times |L|
%$$
%where by the right side we mean the usual cartesian product of topological spaces with the product topology.
%(Can cite beginning of Chapter III of Ferrario and Picinnini.)
%

\begin{defin}\label{D:Wedge}
Let $K$ and $L$ be simplicial complexes with distinguished 0-simplices (vertices) $v_0$ and $w_0$.  Define the \emph{wedge} of $K$ and $L$, denoted by $K\vee L$, to be the simplicial complex obtained as the union of $K$ and $L$, except that $v_0$ and $w_0$ are relabeled and denoted the same way. This common relabeled point is called the \emph{wedge point}.
\end{defin}

%So  if $\{v_0, x,y  \}$ is a simplex in $K$ and $\{w_0,a,b,c  \}$ is a simplex in $L$, $K\vee L$ would contain simplices $\{v, x,y  \}$ and $\{v,a,b,c  \}$ (using $v$ as the new common label for $v_0$ and $w_0$).

On the level of realizations, $|K|\vee |L|$ is obtained as a quotient space
$$
|K|\vee |L| = (|K|\sqcup |L|)/(v_0\sim w_0),
$$
i.e.~as the disjoint union of $|K|$ and  $|L|$ with the two distinguished points identified. It is a standard result that the wedge commutes with realization, i.e.~there is a homeomorphism

$$
|K\vee L|\cong |K|\vee |L|.
$$

\begin{example}
The simplicial complex in Example \ref{Ex:TopSimplCplx} can be thought of as a wedge of two complexes with $v_3$ as the wedge point.
\end{example}

The wedge is a special case of the following construction.

\begin{defin}\label{D:Pushout}
Suppose $K$ and $L$ are simplicial complexes and let $S$ be a subcomplex of both (i.e.~$K$ and $L$ contain isomorphic subcomplexes, both denoted by $S$).  Let $f\colon S\hookrightarrow K$ and $g\colon S\hookrightarrow L$ be the inclusion maps of the subcomplex.  Define the \emph{pushout} or \emph{colimit} of the diagram $K\stackrel{f}\hookleftarrow  S\stackrel{g}\hookrightarrow L$, denoted by $K\amalg_S L$ (or $\operatorname{colim}(K\stackrel{f}\hookleftarrow  S\stackrel{g}\hookrightarrow L)$),  to be
$$
%\operatorname{colim}(K\stackrel{f}{\hookleftarrow}  S\stackrel{g}{\hookrightarrow} L)
K\amalg_S L = 
(K\amalg L)/\sim
$$
where $\sim$ is the equivalence relation generated by $f(s)\sim g(s)$ for $s\in S$.
\end{defin}

Thus the pushout of $|K|$ and $|L|$ is the disjoint union of these two spaces but glued along the common subspace which is the realization of the common subcomplexes.  When $S=v$, a single vertex, we get precisely the wedge:
$$
%\operatorname{colim}(K\stackrel{f}{\hookleftarrow}  v_0\stackrel{g}{\hookrightarrow} L) 
K\amalg_{v} L
\cong 
K\vee L.
$$

One special case of the pushout is the procedure of \emph{attaching $n$-cells}. This is how the realization $|K|$ can be defined inductively by starting with affine points in $\R^N$ and then building up the realization by attaching cells of higher and higher dimensions using pushouts.

\begin{defin}\label{D:Join}
For $K$ and $L$ simplicial complexes, define the \emph{join} of $K$ and $L$, denoted by $K\ast L$, to be the simplicial complex whose vertices are the union of the (distinct) vertices in $K$ and $L$ and whose simplices are those of $K$, $L$, and unions of simplices in $K$ and $L$, i.e.
$$
K\ast L = K\sqcup L\sqcup \{\sigma\cup \tau\colon \sigma\in K, \tau\in L\}.
$$
\end{defin}

%\begin{example}
%Need example, just on level of abstract complexes.
%\end{example}

It is also not hard to see that the realization commutes with join. In other words, there is a homeomorphism
$$
|K\ast L|\cong |K|\ast |L|,
$$
where $|K|\ast |L|$ is the join of topological spaces defined as
$$
|K|\ast |L| = (|K|\times |L|\times I) /\sim
$$
with $I$ the unit interval and $\sim$ the equivalence relations generated by 
$$
(x',y,0)\sim (x,y,0) \text{ and } (x,y,1)\sim (x,y',1)  \text{ for all } x,x'\in |K| \text{ and }y,y'\in |L|.
$$ 
Intuitively $|K|\ast |L|$ is the space of all line segments from $|K|$ to $|L|$.

%\begin{example}
%Need example (pictures).
%\end{example}

\begin{example}\label{E:ConeSuspension} Here we provide some important special cases of the join.

\begin{itemize}
\setlength\itemsep{4pt}
\item If $L=\{c\}$ is a single vertex, then $K\ast \{c\}$ is called the \emph{cone on $K$}, denoted by $CK$. Vertex $c$ is the \emph{cone vertex} or the \emph{cone point}.  Geometrically, the cone $C|K|$ on $|K|$ is the quotient $(|K|\times I)/(|K|\times \{1\})$. An example is given in Figure \ref{F:Cone}.

\vspace{4pt}

\begin{itemize}
\item[$\circ$] One example of a cone is 
$$
C\Delta^n = \Delta^{n+1}. $$ 
Thus $C|K|\cong |CK|$ can be thought of as increasing the dimension of each simplex of $|K|$ by joining all points of $|K|$ to a common disjoint point $c$ by line segments.  
\end{itemize}

\item If $L=\{c, d\}$ consists of two vertices, then $K\ast \{c,d\}=\Sigma K$ is called the \emph{suspension of $K$}. On the level of realizations, the suspension is the quotient $(|K|\times I)/(|K|\times \{0\}, |K|\times \{1\})$. 

\vspace{4pt}

\begin{itemize}
\item[$\circ$] The suspension of a simplex, $\Sigma|\Delta^n|\cong|\Sigma\Delta^n|$, is obtained by taking two copies of $|\Delta^{n+1}|$ and gluing them along a common $\Delta^n$-face.  The suspension of $|K|$ can then be thought of as taking two copies of the cone on $|K|$ and gluing them along $|K|$.  In other words,  $\Sigma|K|$ is the pushout of two copies of $C|K|$ along the common subspace $|K|$.
\vspace{4pt}
\item[$\circ$] Suppose $K$ contains a subcomplex $L$ whose realization is homeomorphic to the $n$-dimensional sphere $S^n$.  For example, a closed path of 1-simplices has $S^1$ as its realization, a hollow tetrahedron is $S^2$, etc. Then $|\Sigma L|$ is homeomorphic $S^{n+1}$. (This leads to the fact that a homology class suspends to the class of one higher dimension and that, in turn, explains why the homology of a complex and its suspension are closely related; see part (3) of \refP{HomologyOfOperations}.)
\end{itemize}

\end{itemize}

\end{example}

%\begin{example}\label{Ex:Cone}
%Figure \ref{F:Cone} below gives an example of a simplicial complex $|K|$ and a cone on $|K|$.  
\begin{figure}[h]
\centering
\includegraphics[width=2in]{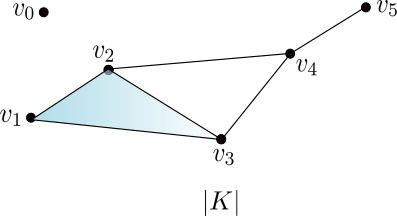}
\ \ \ \ \ \ \ 
\includegraphics[width=2in]{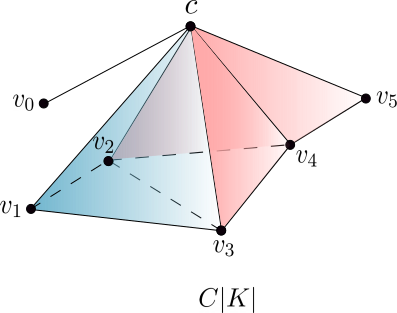}
\caption{An example of a cone on a simplicial complex. In the right picture, the tetrahedron with vertices $c, v_1, v_2, v_3$ is filled in, while the one with vertices $c, v_2, v_3, v_4$ is hollow. The triangle $v_2, v_3, v_4$ is also not filled in.}
\label{F:Cone}
\end{figure}

%\end{example}

Recall from \refD{ComplexStuff} that a vertex $v$ is dominated by a vertex $w$ if every maximal simplex that contains $v$ also contains $w$. Recall also that the deletion of a vertex $v$, $K\setminus v$, is obtained from $K$ by deleting $\st(v)$, the set of all simplices containing $v$. Essentially by definition, the closed star $\overline{\st}(v)$ is a cone on $v$.  We also have the following useful connection between cones and deletions (see \cite[Definition 2.1 and Remark 2.2]{BM:StrongHtopy}).

\begin{prop}\label{P:DominatedLink}
Vertex $v$ is dominated by a vertex $w$ if and only if the link of $v$ is a cone on $w$.
\end{prop}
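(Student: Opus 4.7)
The plan is to prove both directions by unpacking the definitions and then shuttling between maximal simplices containing $v$ and simplices in $\lk(v)$. Recall that $\lk(v)$ consists of the simplices $\tau \in K$ disjoint from $\{v\}$ such that $\tau \cup \{v\} \in K$, and that $\lk(v)$ being a cone on $w$ means precisely that $\{w\} \in \lk(v)$ and $\tau \cup \{w\} \in \lk(v)$ for every $\tau \in \lk(v)$.

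For the forward direction, I would assume $v$ is dominated by $w$ (with $w\neq v$). First I would verify that $\{w\}\in\lk(v)$. Since $v$ lies in at least one maximal simplex, and every such maximal simplex must contain $w$ by domination, the set $\{v,w\}$ is a face of $K$, so $\{w\}\in\lk(v)$. Then, given any $\tau\in\lk(v)$, I consider the simplex $\sigma = \tau\cup\{v\}\in K$ and extend it to a maximal simplex $\sigma'\supseteq\sigma$; by domination $w\in\sigma'$, so $\tau\cup\{v,w\}\subseteq\sigma'$, hence $\tau\cup\{v,w\}\in K$ since simplicial complexes are closed under taking faces. This means $\tau\cup\{w\}\in\lk(v)$, establishing the cone structure.

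For the reverse direction, assume $\lk(v)$ is a cone on $w$. Let $\sigma$ be any maximal simplex containing $v$, and write $\tau = \sigma\setminus\{v\}$. Note that $\tau\in\lk(v)$ (using the convention that the empty face is present in any nonempty complex, and observing that $\sigma=\{v\}$ alone cannot occur because then $\lk(v)$ would be empty and could not be a cone on $w$). By the cone hypothesis, $\tau\cup\{w\}\in\lk(v)$, so $\tau\cup\{v,w\} = \sigma\cup\{w\}\in K$. Since $\sigma$ is maximal, this forces $w\in\sigma$, proving domination.

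The content of the argument is essentially bookkeeping: the only step that requires a moment's care is the forward direction, where one must pass from an arbitrary face $\tau\cup\{v\}$ (not necessarily maximal) to a maximal simplex in order to invoke the hypothesis. No real obstacle arises, but I would make sure to flag the degenerate case where $v$ is an isolated vertex: such a $v$ is not dominated by any $w$, and correspondingly its link is empty and fails to be a cone, so the equivalence is vacuously consistent on both sides.
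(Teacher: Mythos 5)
Your argument is correct: both directions are exactly the definition-unwinding one needs, the passage from $\tau\cup\{v\}$ to a maximal simplex in the forward direction is the right (and only nontrivial) step, and your handling of the isolated-vertex/empty-link edge case is sound. The paper itself gives no proof of this proposition --- it simply cites Definition 2.1 and Remark 2.2 of Barmak--Minian --- and your proof is the standard argument one would find there, so there is nothing to reconcile.
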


%
%\begin{defin}{Suspension}
%\newline Let $\Delta$ be a simplicial complex. Adding twl vertices u and v $\notin$ V, the vertex set of the simplicial complex. The \textbf{suspension} of a simplicial complex is $\sum$($\Delta$) = $\{$ $\sigma \cup \{ u \}$ and $\sigma \cup \{ v \}$, where $\sigma \in \Delta\}$
%\end{defin}

%
%\begin{defin}{Join}
%\newline Let $\Delta_1$ and $\Delta_2$ be simplicial complexes. The \textbf{join} of  $\Delta_1$ and $\Delta_2$ denoted as  $\Delta_1$*$\Delta_2$ is the set of vertices of $\Delta_1$*$\Delta_2$ is V( $\Delta_1$) $\cup$ V( $\Delta_2$). The set of simplices is then described by $\Delta_1$*$\Delta_2$ = $\{ \sigma \subseteq$ V( $\Delta_1$)  $\cup$ V( $\Delta_2$) $| \sigma \cap$ V( $\Delta_1$) $\in \Delta$ and  $\sigma \cap V( \Delta_2) \in \Delta \}$
%\end{defin}
%The join of a simplicial complex can be seen as a way to produce new simplicial complexes from the already existing ones. 

%%%%%%%%%%%%%%%%%%%%%%%%%%%%%%%%%%%%%%%%%%%%%%%%%%%%%%%%%%%%%%%%%%%%%%%%%%%%%%%%%%%%%%%%

%\section{Simplicial maps and equivalences}\label{S:SimplMapEquiv}

%%%%%%%%%%%%%%%%%%%%%%%%%%%%%%%%%%%%%%%%%%%%%%%%%%%%%%%%%%%%%%%%%%%%%%%%%%%%%%%%%%%%%%%%

%%%%%%%%%%%%%%%%%%%%%%%%%%%%%%%%%%%%%%%%%%%%%%%%%%%%%%%%%%%%%%%%%%%%%%%%%%%%%%%%%%%%%%%%

\subsection{Simplicial maps}\label{S:SimplMap}

%%%%%%%%%%%%%%%%%%%%%%%%%%%%%%%%%%%%%%%%%%%%%%%%%%%%%%%%%%%%%%%%%%%%%%%%%%%%%%%%%%%%%%%%

%Simplicial maps are a way to compare simplicial complexes, just as continuous maps are a way to compare topological spaces. Of special importance to us is the idea of contiguous simplicial maps and the related notion of a strong collapse.

\begin{defin}\label{D:SimplMap}
Let $K$ and $L$ be simplicial complexes.  A \emph{simplicial map} $\phi\colon K\to L$ from  $K$ to $L$ is a function that sends vertices to vertices and simplices to simplices.  In other words, $\phi(V(K))\subset V(L)$ and, for each simplex $\sigma$ in $K$, $\phi(\sigma)$ is a simplex in $L$.
\end{defin}

A bijective simplicial map whose inverse is also simplicial is called an \emph{isomorphism}. If an isomorphism exists between $K$ and $L$, we write $K\cong L$ and consider the two simplicial complexes as the same (up to relabeling of vertices).

A simplicial map $\phi\colon K\to L$ induces a map of topological spaces
$$
|\phi|\colon |K|\longrightarrow |L|
$$
defined as the linear extension of the map on vertices given by $\phi$.  Namely, suppose $v_0, ..., v_k$ are the affinely independent points in $\R^N$ (remember that we are labeling these the same way as the vertices in $V$). Each point $v\in |K|$ is thus given as $v=\sum_{i=0}^k t_iv_i$, $t_i\geq 0$, $\sum_{i=0}^k t_i=1$. Then define $|\phi|$ as 
$$
|\phi|(v)=\sum_{i=1}^k t_i\phi(v_i).
$$

Thus a simplicial map is determined by what it does on vertices.

%
%
%\begin{example}\label{Ex:SimplMap}
%Need example, on abstract level and corresponding realization level.
%\end{example}

The following is a standard result (see, for example, \cite[Theorem II.2.7]{FP:SimpTop}).

\begin{prop}\label{P:SimplIsoHomeo}
If $\phi\colon K\to L$ is a simplicial map, then $|\phi|\colon |K|\to |L|$ is continuous.  If $\phi$ is injective, so is $|\phi|$.  If  $\phi$ is an isomorphism, then $|\phi|$ is a homeomorphism.
\end{prop}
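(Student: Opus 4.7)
The plan is to analyze $|\phi|$ simplex by simplex and exploit the fact that on each closed simplex of $|K|$ the map $|\phi|$ is affine in barycentric coordinates. This makes each of the three assertions tractable with elementary tools: continuity reduces to continuity of affine maps and the pasting lemma, injectivity reduces to the uniqueness of barycentric coordinates within a simplex, and the homeomorphism assertion reduces to the fact that a continuous bijection from a compact space to a Hausdorff space is a homeomorphism.

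For continuity, any point $v\in |K|$ lies in some closed simplex $|\sigma|$ with $\sigma=\{v_{i_0},\dots,v_{i_p}\}\in K$, and on $|\sigma|$ the formula reduces to $\sum_j t_{i_j}v_{i_j}\mapsto \sum_j t_{i_j}\phi(v_{i_j})$. Since $\phi(\sigma)$ is a simplex of $L$ by the simplicial map condition, the images $\phi(v_{i_j})$ are vertices of $L$ sitting in some $\R^M$, and the map $|\sigma|\to |L|$ is the restriction of an affine map between Euclidean spaces, hence continuous. As $|K|$ is a finite union of its closed simplices (each closed in $|K|$), the pasting lemma upgrades these piecewise continuities to continuity of $|\phi|$ on all of $|K|$.

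For injectivity, I would suppose $|\phi|(v)=|\phi|(w)$ and show $v=w$. Each of $v,w$ lies in the relative interior of a unique simplex $|\sigma|,|\tau|$ of $|K|$ with unique barycentric coordinates. Injectivity of $\phi$ on vertices makes the vertices of $\phi(\sigma)$ and of $\phi(\tau)$ distinct, hence affinely independent in $\R^M$ as vertex sets of simplices of $L$. Thus $|\phi|(v)$ lies in the relative interior of $|\phi(\sigma)|$ with the same barycentric coordinates as $v$ has on $|\sigma|$, and similarly for $w$ on $|\phi(\tau)|$. Since a point in $|L|$ lies in the interior of a unique simplex with a unique barycentric expansion, $\phi(\sigma)=\phi(\tau)$ with matching coordinates; injectivity of $\phi$ on simplices then delivers $\sigma=\tau$ and $v=w$. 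For the final claim, when $\phi$ is an isomorphism we get a continuous bijection $|\phi|$ from the compact space $|K|$ to the Hausdorff space $|L|\subset\R^M$, hence a homeomorphism; more categorically, functoriality $|\phi\circ\psi|=|\phi|\circ|\psi|$ and $|\mathrm{id}|=\mathrm{id}$ combined with the simplicial inverse $\phi^{-1}$ exhibit $|\phi^{-1}|$ as a continuous two-sided inverse of $|\phi|$.

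The main obstacle, as I see it, is chasing the bookkeeping between the global barycentric expression given in the excerpt's definition of $|\phi|$ (a sum over all vertices of $K$) and the simplex-wise interpretation used in the proof: one must observe that a point's expression via all vertices agrees with its unique barycentric representation inside the simplex whose relative interior contains it, so that both continuity and injectivity are truly local questions. Once that identification is made, the rest is routine linear algebra and point-set topology.
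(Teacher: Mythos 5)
Your proof is correct. The paper does not actually prove this proposition---it cites it as a standard result (Theorem II.2.7 of Ferrario--Piccinini)---and your argument is precisely the standard one: simplex-wise affinity plus the pasting lemma for continuity, uniqueness of the carrier simplex and barycentric coordinates for injectivity, and either the compact-to-Hausdorff criterion or functoriality applied to $\phi^{-1}$ for the homeomorphism claim, with the needed reconciliation between the global formula for $|\phi|$ and its simplex-wise restriction correctly identified and handled.
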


The collection of simplicial complexes forms a category $\operatorname{SCom}$.  The morphisms are the simplicial maps. It is not hard to see that the axioms of a category are satisfied; the most relevant one being that the composition of simplicial maps is a simplicial map.

In category theory language, geometric realization is a functor
$$
|\cdot|\colon \operatorname{SCom}\longrightarrow \operatorname{Top}
$$
from the category of simplicial complexes to the category of topological spaces (with, as usual, continuous maps as morphisms).

From now on, we will not make a distinction between the abstract simplicial complex $K$ and its realization $|K|$ but will simply use $K$ for both.  A simplicial map $\phi\colon K\to L$ will thus simultaneously be the set map between abstract simplicial complexes and a map of the corresponding topological simplicial complexes, i.e.~their geometric realizations.

%%%%%%%%%%%%%%%%%%%%%%%%%%%%%%%%%%%%%%%%%%%%%%%%%%%%%%%%%%%%%%%%%%%%%%%%%%%%%%%%%%%%%%%%

\subsection{Collapses and strong equivalences}\label{S:Collapse}

%%%%%%%%%%%%%%%%%%%%%%%%%%%%%%%%%%%%%%%%%%%%%%%%%%%%%%%%%%%%%%%%%%%%%%%%%%%%%%%%%%%%%%%%

The usual notion of ``sameness'' in algebraic topology is that of homotopy equivalence: Spaces $X$ and $Y$ are \emph{homotopy equivalent} if there are maps $f\colon X\to Y$ and $g\colon Y\to X$ such that the compositions $f\circ g$ and $g\circ f$ are homotopic to identity maps on $Y$ and $X$, respectively.

%
%\begin{defin}\label{D:Heq}\ 
%\begin{itemize}
%\setlength\itemsep{4pt}
%\item Maps $\alpha, \beta\colon X\to Y$ are \emph{homotopic} if there is a one-parameter family of maps $F\colon X\times I\to Y$ such that the restrictions to the ends of the interval $I=[0,1]$ are $\alpha$ and $\beta$.
%\item Spaces $X$ and $Y$ are \emph{homotopy equivalent}, or they have the same \emph{homotopy type}, denoted by $X\simeq Y$, if there are maps $f\colon X\to Y$ and $g\colon Y\to X$ such that the compositions $f\circ g$ and $g\circ f$ are homotopic to identity maps on $Y$ and $X$, respectively.  In this case, $f$ and $g$ are called \emph{homotopy equivalences}.
%\item A space is said to be \emph{contractible} if it is homotopy equivalent to a one-point space.
%\end{itemize}
%\end{defin}

The heuristic is that homotopy equivalent spaces can be ``deformed'' into one another. This notion in particular applies to realizations of simplicial complexes -- if $|K|$ and $|L|$ are homotopy equivalent, then one is  deformable into the other.  However, this deformation may not always go through simplicial complexes, i.e.~there might be some $t\in [0,1]$ during the homotopy such that the image of one of the compositions is not a simplicial complex. In other words, a homotopy equivalence may not always arise in a natural way (via geometric realizations and maps between them induced by simplicial maps) from a procedure in the category of abstract simplicial complexes.  What is thus needed is a construction on simplicial complexes that induces a homotopy equivalence on realizations, and it does so in the ``best possible'' way.

%With this in mind, we have

\begin{defin}\label{D:Contiguity}
Suppose $\phi, \psi\colon K\to L$ are simplicial maps.  Then $\phi$ and $\psi$ are \emph{contiguous} if, for every simplex $\sigma\in K$, $\phi(\sigma)\cup \psi(\sigma)$ is a simplex in $L$. 
\end{defin}

It follows from the definition that $\phi$ and $\psi$ are contiguous if, given a simplex $\sigma$ of $K$ spanned by vertices $v_0, ..., v_n$, then $\phi(v_0), ..., \phi(v_n), \psi(v_0), ..., \psi(v_n)$ span a simplex of $L$.

\begin{example}
Let 
\begin{align*}
K & =\{\{v_0\}, \{v_1\}, \{v_2\}, \{v_0,v_1\}, \{v_1,v_2\} \}\\ 
L & =\{\{w_0\}, \{w_1\}, \{w_2\}, \{w_0, w_1\}, \{w_0, w_2\}, \{w_1, w_2\}, \{w_0, w_1, w_2\}\}.
\end{align*} 
Let $\phi, \psi\colon K\to L$ be simplicial maps defined by sending vertices to correponding vertices, along with 
$$
\phi(\{v_0,v_1\})=\{w_0,w_1, w_2\},\  \phi(\{v_1,v_2\})=\{w_1, w_2\},\ 
  \psi(\{v_0,v_1\})=\{w_0,w_1\},\       \psi(\{v_1,v_2\})=\{w_0, w_1, w_2\}.
$$
Then $\phi$ and $\psi$ are contiguous simplicial maps since, for each simplex in $K$, the unions of their images under the two maps are either the vertices $\{w_i\}$ or the simplex $\{w_0, w_1, w_2\}$.
\end{example}

\begin{example}
As a non-example, consider $K  =L=\{\{v_0\}, \{v_1\}, \{v_2\}, \{v_0,v_1\}, \{v_0,v_2\}, \{v_1,v_2\} \}$. Let $\phi$ be the identity map and let $\psi$ be the constant map sending everything to $\{v_0\}$. Then these maps are not contiguous since $\phi(\{v_1,v_2\})\cup \psi(\{v_1,v_2\}) = \{v_0, v_1,v_2\}$, but this is not a simplex in $K$.
\end{example}

If there is a sequence of contiguous simplicial maps connecting $\phi$ and $\psi$, we say that these maps are in the same \emph{contiguity class} and write $\phi\sim_c\psi$. 

%Compare the following with the definition of homotopy equivalent spaces above.

\begin{defin}\label{D:StrongEquivalence} \ 
\begin{itemize}
\setlength\itemsep{4pt}
\item 
Complexes $K$ and $L$ are \emph{strongly equivalent} if there are simplicial maps $\phi\colon K\to L$ and $\psi\colon L\to K$ such that the compositions satisfy $\phi\circ \psi\sim_c \operatorname{Id}_{L}$ and $\psi\circ \phi\sim_c \operatorname{Id}_{K}$. In this case $\phi$ and $\psi$ are called \emph{strong equivalences}. We write $K\sim_c L$.
%\item If $K\sim_c L$, we say that $K$ and $L$ have the same \emph{strong homotopy type}.
\item Complex $K$ is \emph{strongly contractible} if it is strongly equivalent to the  single-vertex complex, i.e.~if the identity map on $K$ is contiguous to the constant map sending $K$ to one of its vertices.
\end{itemize}
% and $K$ and $L$ are said to have the same \emph{strong homotopy type}.
\end{defin}

%A complex $K$ is \emph{strongly contractible} if it is strongly equivalent to the single-vertex complex.

The following is easy to see (using straight-line homotopy).  For details, see for example \cite[Corollary A.1.3]{B:FiniteTop}.
\begin{prop}\label{P:ContiguousHomotopic}
If $\phi, \psi\colon K\to L$ are contiguous, then $|\phi|, |\psi|\colon |K|\to |L|$ are homotopic.  
\end{prop}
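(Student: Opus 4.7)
The plan is to construct an explicit straight-line homotopy $H \colon |K| \times I \to |L|$ between $|\phi|$ and $|\psi|$, defined by
\[
H(x,t) = (1-t)\,|\phi|(x) + t\,|\psi|(x),
\]
using the fact that the realization $|L|$ sits inside some $\R^N$ so the right-hand side makes sense as a point of $\R^N$. The boundary conditions $H(x,0)=|\phi|(x)$ and $H(x,1)=|\psi|(x)$ are immediate, and continuity will follow at once from continuity of $|\phi|$ and $|\psi|$ (\refP{SimplIsoHomeo}) together with continuity of addition and scalar multiplication in $\R^N$. So the whole content of the proposition is that $H(x,t)$ actually lies in the subspace $|L| \subset \R^N$ for every $x$ and every $t$.

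First I would reduce to a single simplex. Pick an arbitrary $x \in |K|$ and choose a simplex $\sigma \in K$ with $x \in |\sigma|$; write $\sigma = \{v_0, \dots, v_n\}$ so that $x = \sum_i t_i v_i$ with $t_i \geq 0$ and $\sum t_i = 1$. By the definition of the realization of a simplicial map,
\[
|\phi|(x) = \sum_i t_i\,\phi(v_i), \qquad |\psi|(x) = \sum_i t_i\,\psi(v_i),
\]
so both $|\phi|(x)$ and $|\psi|(x)$ are convex combinations of vertices drawn from the set $\phi(\sigma) \cup \psi(\sigma)$. By the contiguity hypothesis (\refD{Contiguity}), $\phi(\sigma) \cup \psi(\sigma)$ is a single simplex $\tau$ of $L$, and therefore $|\phi|(x)$ and $|\psi|(x)$ both lie in the geometric simplex $|\tau| \subset |L|$.

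The key step is now to invoke convexity: the geometric simplex $|\tau|$ is the convex hull of a finite set of affinely independent points in $\R^N$, hence convex. Consequently the segment
\[
\{(1-t)|\phi|(x) + t|\psi|(x) : t \in [0,1]\}
\]
lies entirely inside $|\tau| \subset |L|$. This shows $H(x,t) \in |L|$ for every $(x,t)$, so $H$ is a well-defined continuous map from $|K| \times I$ to $|L|$, and it is the desired homotopy. The main (minor) obstacle is just the bookkeeping point that $x$ may lie in several simplices of $K$, but different choices of $\sigma$ merely land $|\phi|(x)$ and $|\psi|(x)$ in a common face of the corresponding simplices $\phi(\sigma) \cup \psi(\sigma)$, so the construction is independent of choices and the convexity argument still applies. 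Iterating this across all points of $|K|$ yields the homotopy, and in fact the same argument applied to a sequence of contiguous maps shows that any two maps in the same contiguity class induce homotopic maps of realizations.
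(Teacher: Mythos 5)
Your proof is correct and follows exactly the route the paper indicates: the straight-line homotopy $H(x,t)=(1-t)|\phi|(x)+t|\psi|(x)$, with contiguity guaranteeing that for $x$ in a simplex $\sigma$ both images lie in the single simplex $\phi(\sigma)\cup\psi(\sigma)$ of $L$, whose realization is convex. The paper merely cites this straight-line argument (referring to \cite[Corollary A.1.3]{B:FiniteTop}), and your write-up supplies the same details correctly.
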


 %In particular, a strongly contractible complex $K$ has a contractible realization $|K|$.  
 
% The homotopy in this result is an fact a strong deformation retract.

The converse of \refP{ContiguousHomotopic} is not true. The standard examples are ``Bing's house with two rooms'' and the ``dunce hat'' which are contractible but can be exhibited as realizations of simplicial complexes that are not strongly equivalent to one-point complexes.

\begin{rem}\label{R:BarycenterContiguity}
There is a partial converse to \refP{ContiguousHomotopic} which says that if $f$ and $g$ are homotopic maps from $|K|$ to $|L|$, then there exist simplicial maps $\phi$ and $\psi$ from $K$ to $L$ that are not necessarily contiguous, but if $K$ is sufficiently barycentrically divided, then they are (see, for example, \cite[Theorem 10.21]{McCleary:ContDim}).  This is the sense in which contiguity is the best analog of homotopy in the world of abstract simplicial complexes.
\end{rem}

Strong equivalences can be characterized by sequences of special kinds of moves. Details can be found in \cite[Chapter 5]{B:FiniteTop} and \cite{BM:StrongHtopy}.  To explain, recall the terminology from \refD{ComplexStuff}. 

\begin{defin}\label{D:ElemStrongCollapse}
The deletion of a dominated vertex $v$ from the simplicial complex $K$ is called an \emph{elementary strong collapse}, denoted by $K\, \esc\, K\setminus v$. 
\end{defin}

According to \refP{DominatedLink}, an elementary strong collapse to the subcomplex obtained by deleting (the star of) $v$ exists if the link of $v$ is a cone.

\begin{defin}\label{D:StrongCollapse} \ 
\begin{itemize}
\setlength\itemsep{4pt}
\item A sequence of elementary strong collapses is called a \emph{strong collapse}.
\item The inverse of a strong collapse is a \emph{strong expansion}. 
\item If complexes $K$ and $L$ are connected by a sequence of strong collapses and expansions, then they have the same \emph{strong homotopy type}.
\item A complex $K$ is \emph{strongly collapsible} if it has the strong homotopy type of a single-vertex complex.
\end{itemize}
\end{defin}

\begin{example}\label{Ex:StrongCollapse}
Figure \ref{F:StrongCollapse}  gives an example of a strong collapse. Instead of writing this down on the level of abstract simplicial complexes, we provide the example on the level of geometric realizations to help with the visualization.

\begin{figure}[h]
\centering
\includegraphics[height=1.5in]{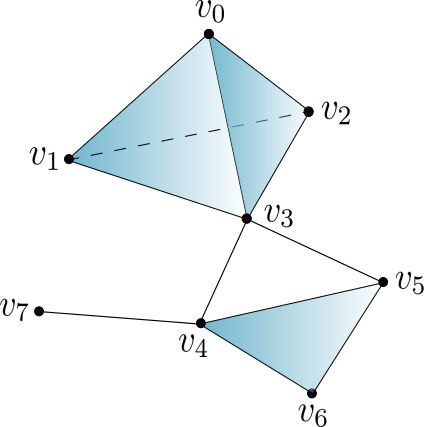}
\ \ \ \ \ \ \ 
\includegraphics[height=1.2in]{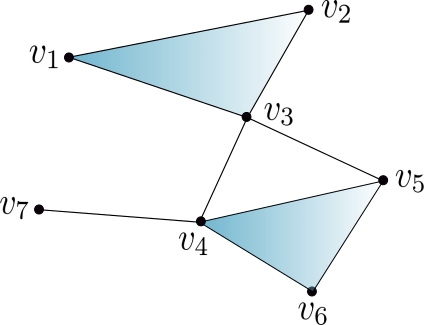}
\ \ \ \ \ \ \ 
\includegraphics[height=1.2in]{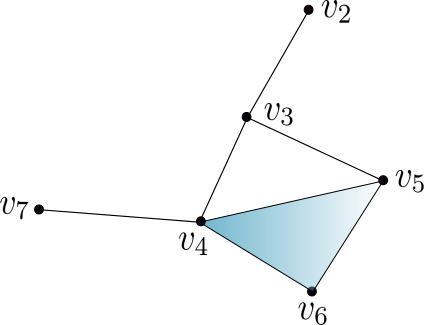}

\vspace{12pt}

\includegraphics[height=0.9in]{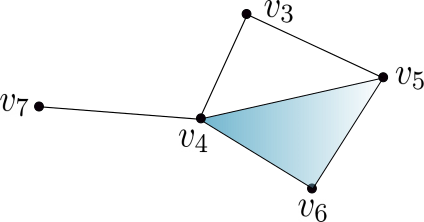}
\ \ \ \ \ \ \ 
\raisebox{0.8cm}{
\includegraphics[height=0.6in]{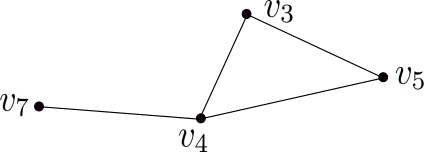}
\ \ \ \ \ \ \ 
\includegraphics[height=0.6in]{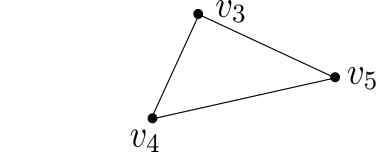}}
\caption{An example of a strong collapse. The vertices deleted, in order, are $v_0$, $v_1$, $v_2$, $v_6$, and $v_7$.}
\label{F:StrongCollapse}
\end{figure}

\end{example}

Recall that a complex without any dominated vertices is called minimal.  Thus a minimal complex is one which does not admit any elementary strong collapses. The minimal complex obtained by the reduction of a complex $K$ to its minimal subcomplex via a strong collapse is called the \emph{core of $K$}, denoted by $K^c$.  For example, the triangle with vertices $v_3$, $v_4$, and $v_5$ in Figure \ref{F:StrongCollapse} is the core of all the simplicial complexes in that picture.

\begin{thm}[\cite{B:FiniteTop},  Theorem 5.1.10]\label{T:StrongCores}
The core of a simplicial complex is unique up to isomorphism. Two complexes have the same strong homotopy type if and only if they have isomorphic cores.
\end{thm}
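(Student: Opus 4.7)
The plan is to reduce both assertions to a single key fact: if $L$ is minimal and $f\colon L\to L$ lies in the contiguity class of $\operatorname{Id}_L$, then $f=\operatorname{Id}_L$. Granted this, two minimal complexes that are strongly equivalent via $\phi,\psi$ must be isomorphic, since the composites $\phi\circ\psi$ and $\psi\circ\phi$ are forced to equal the respective identities, making $\phi,\psi$ mutually inverse simplicial bijections. Uniqueness of the core and the biconditional about strong homotopy types then both follow once we also know that elementary strong collapses induce strong equivalences.

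First I would verify the collapse-to-equivalence claim. If $v$ is dominated by $w$, let $i\colon K\setminus v\hookrightarrow K$ be the inclusion and define $r\colon K\to K\setminus v$ by $r(v)=w$ and $r(u)=u$ for $u\neq v$. Domination guarantees that every maximal simplex containing $v$ also contains $w$, so $r$ carries any simplex of $K$ to a simplex of $K\setminus v$, and for any $\sigma\in K$ the union $\sigma\cup(i\circ r)(\sigma)\subseteq\sigma\cup\{w\}$ is contained in a maximal simplex and hence is itself a simplex. Therefore $r\circ i=\operatorname{Id}_{K\setminus v}$ and $i\circ r\sim_c\operatorname{Id}_K$, so each elementary strong collapse yields a strong equivalence; iterating, $K\sim_c K^c$ for every core $K^c$ of $K$, and more generally any chain of strong collapses and expansions produces a strong equivalence.

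Next comes the main obstacle, the key lemma. I would induct on the length of the contiguity chain witnessing $f\sim_c\operatorname{Id}_L$. The base case is direct contiguity of $f$ with $\operatorname{Id}_L$: for a maximal simplex $\sigma$, the simplex $\sigma\cup f(\sigma)$ must equal $\sigma$ by maximality, so $f(\sigma)\subseteq\sigma$. Then for any vertex $v$ and any maximal simplex $\sigma$ containing $v$, $f(v)\in\sigma$, meaning $f(v)$ lies in every maximal simplex through $v$. If $f(v)\neq v$, then $v$ is dominated by $f(v)$, contradicting minimality; hence $f$ fixes every vertex and $f=\operatorname{Id}_L$. For the inductive step, truncate a chain $\operatorname{Id}_L=g_0,\dots,g_n=f$ at $g_{n-1}$, which equals $\operatorname{Id}_L$ by the inductive hypothesis, and then apply the base case to the contiguous pair $(g_{n-1},g_n)$.

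With the lemma in hand the remaining assembly is routine. Any two cores $K^c_1,K^c_2$ of $K$ satisfy $K^c_1\sim_c K\sim_c K^c_2$, are both minimal, and so are isomorphic, establishing uniqueness. For the biconditional, if $K$ and $L$ have the same strong homotopy type then the collapse-to-equivalence observation gives $K\sim_c L$ by transitivity, hence $K^c\sim_c L^c$ between minimal complexes, whence $K^c\cong L^c$; conversely, given an isomorphism $K^c\cong L^c$, the sequence of strong collapses from $K$ down to $K^c$, followed by the isomorphism and then the strong expansions from $L^c$ back up to $L$, connects $K$ and $L$ via collapses and expansions. The minimality-forces-identity lemma is where the real work lies; everything else is bookkeeping around it.
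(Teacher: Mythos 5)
Your proof is correct: the reduction to the key lemma that a self-map of a minimal complex lying in the contiguity class of the identity must be the identity (via the maximal-simplex/domination argument), together with the retraction $r(v)=w$ showing that an elementary strong collapse is a strong equivalence, assembles exactly as you describe into both the uniqueness of the core and the biconditional. The paper itself states this theorem without proof, citing Barmak (Theorem 5.1.10), and your argument is essentially the same one given there (and in Barmak--Minian), so there is nothing to correct beyond noting that inserting the isomorphism $K^c\cong L^c$ into the chain of collapses and expansions is legitimate because the paper identifies isomorphic complexes.
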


One observation is that the previous theorem guarantees that the order in which elementary strong collapses are performed is irrelevant since each sequence of such moves must yield the same core.

Then we finally have the following consequence. Recall \refD{StrongEquivalence}.

\begin{cor}[\cite{B:FiniteTop},  Corollary 5.1.11]\label{C:Contiguous=StrongCollapse}
Complexes $K$ and $L$ are strongly equivalent if and only if they have the same strong homotopy type.
\end{cor}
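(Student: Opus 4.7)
The plan is to deduce the corollary directly from \refT{StrongCores}, which identifies "having the same strong homotopy type" with "having isomorphic cores." Thus it suffices to prove that $K$ and $L$ are strongly equivalent if and only if their cores $K^c$ and $L^c$ are isomorphic.

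For the "if" direction, I would first establish that each elementary strong collapse $K\,\esc\, K\setminus v$ is itself a strong equivalence. If $w$ dominates $v$, define $r\colon K\to K\setminus v$ by $r(v)=w$ and $r(u)=u$ for $u\neq v$; this is simplicial because every simplex containing $v$ sits inside a maximal simplex that contains both $v$ and $w$, so swapping $v$ for $w$ yields a simplex. Letting $i\colon K\setminus v\hookrightarrow K$ be the inclusion, we get $r\circ i=\operatorname{Id}_{K\setminus v}$ strictly, while $i\circ r$ is directly contiguous to $\operatorname{Id}_K$ since for every simplex $\sigma$ of $K$ the union $(i\circ r)(\sigma)\cup\sigma$ equals either $\sigma$ or $\sigma\cup\{w\}$, and both are simplices by domination. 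Composing the chain of such equivalences $K\,\esc\,\cdots\,\esc\,K^c$, the assumed isomorphism $K^c\cong L^c$, and the reverse chain from $L^c$ back up to $L$, one obtains a strong equivalence $K\sim_c L$.

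For the "only if" direction, suppose $K\sim_c L$. Applying the "if" direction to the collapses $K\,\esc\,\cdots\,\esc\,K^c$ and $L\,\esc\,\cdots\,\esc\,L^c$ and composing yields a strong equivalence between the minimal complexes $K^c$ and $L^c$. It remains to show that a strong equivalence between minimal complexes is an isomorphism. The key lemma is: if $M$ is minimal and $\alpha\colon M\to M$ is directly contiguous to $\operatorname{Id}_M$, then $\alpha=\operatorname{Id}_M$. Indeed, for any vertex $v$ and any maximal simplex $\sigma$ containing $v$, the simplex $\alpha(\sigma)\cup\sigma$ must equal $\sigma$ by maximality of $\sigma$, so $\alpha(v)\in\sigma$. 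Hence $\alpha(v)$ lies in every maximal simplex containing $v$, so $v$ is dominated by $\alpha(v)$, forcing $\alpha(v)=v$ by minimality; since simplicial maps are determined by their vertex values, $\alpha=\operatorname{Id}_M$. Given strong equivalence maps $\phi\colon K^c\to L^c$ and $\psi\colon L^c\to K^c$ with $\psi\circ\phi\sim_c\operatorname{Id}_{K^c}$ through a chain $\psi\circ\phi=\alpha_0,\alpha_1,\ldots,\alpha_n=\operatorname{Id}_{K^c}$, a downward induction applying the lemma at each link of the chain gives $\psi\circ\phi=\operatorname{Id}_{K^c}$, and analogously $\phi\circ\psi=\operatorname{Id}_{L^c}$. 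Therefore $\phi$ is a bijective simplicial map with simplicial inverse, i.e., an isomorphism.

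The main obstacle is this last step on minimal complexes: the "direct contiguity implies equality" lemma is easy, but one must carefully propagate it along an arbitrary chain of contiguities rather than a single direct one, which is why the downward induction is needed. Everything else is bookkeeping using \refT{StrongCores} together with the explicit retraction $r$ exhibiting an elementary strong collapse as a strong equivalence.
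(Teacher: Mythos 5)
Your proposal is correct and follows the same route the paper implicitly relies on: the statement is quoted from Barmak (Corollary 5.1.11) as a consequence of \refT{StrongCores}, and your reconstruction---elementary strong collapses are strong equivalences via the retraction $r(v)=w$, and a strong equivalence between minimal complexes is an isomorphism because a map contiguous to the identity on a minimal complex is the identity---is precisely the standard argument behind that citation. The downward induction along the contiguity chain and the use of \refT{StrongCores} to pass between cores and strong homotopy type are handled correctly.
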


Thus the notion of equivalence via contiguous maps (strong equivalence) is the same as that of equivalence via strong collapses (strong homotopy type). 
In particular, 
 a complex $K$ is strongly equivalent to a vertex if and only if it is strongly collapsible to a vertex. 
 %In this situation, we will say that $K$ is \emph{strongly contractible}.  
 
 Since the two notions are the same, they have the same effect upon applying the realization functor. Namely, if there is a strong collapse/strong equivalence from $K$ to a subcomplex $L$, then we know by \refP{ContiguousHomotopic} that there is homotopy equivalence between $|K|$ and the subspace $|L|$. However, this homotopy equivalence is in fact a strong deformation retraction. In particular, if $K$ is strongly contractible, then $|K|$ strong deformation retracts to a point. In the category of spaces, one also calls this strong contractibility, so the terminology from the categories of simplicial complexes and topological spaces also agrees. A strongly contractible space is of course contractible in the usual sense, i.e.~homotopy equivalent to the one-point space.

%\begin{example}
%Maybe have an example with a sequence of elementary collapses that reduces to a point. Andrea has some in her notes from 6/17.
%\end{example}

We close this section with a few relevant results. Suppose $K$ and $L$ are complexes and $S$ is a common subcomplex. Recall the notions of the pushout $K\amalg_S L$, wedge $K\vee L$, join $K\ast L$, cone $CK$, and suspension $\Sigma K$ from \refS{ComplexOperations}. Also recall that the core of a simplex $K$ is denoted by $K^c$.

\begin{prop}\label{P:PushoutCollapse}\ 
\begin{enumerate}
\item $(K\amalg_S L)^c = S \ \ \Longleftrightarrow \ \ S= K^c \ \text{and}\  S= L^c$.
\item $(K\amalg_S L)^c = K^c\ (\text{resp.~$L^c$})\ \  \Longleftrightarrow\ \  L\, \esc\, S$\ (\text{resp.~$K\, \esc\, S$}).
\end{enumerate} 
\end{prop}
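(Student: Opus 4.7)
The approach rests on a side-separation observation for the pushout: if $v \in V(K) \setminus V(S)$, then every simplex of $K \amalg_S L$ containing $v$ lies entirely in $K$, and any maximal simplex of $K$ containing $v$ remains maximal in $K \amalg_S L$ (any strict extension would lie in $L$, forcing $v \in V(L)$, which contradicts $v \in V(K) \setminus V(S) = V(K) \setminus V(L)$). A symmetric statement holds for $v \in V(L) \setminus V(S)$. The immediate consequence is that such a vertex is dominated in $K \amalg_S L$ precisely when it is dominated on its own side, so each elementary strong collapse of $K$ or of $L$ at a non-$S$ vertex lifts to an elementary strong collapse of $K \amalg_S L$, and conversely every elementary strong collapse of $K \amalg_S L$ at a non-$S$ vertex restricts to one on the corresponding side.

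For part~(1), the implication $(\Leftarrow)$ is a direct lift: given $K \esc S$ and $L \esc S$, concatenating the lifted collapses yields $K \amalg_S L \esc S \amalg_S L = L$ and then $L \esc S \amalg_S S = S$; since $S$ is minimal as a core, $(K \amalg_S L)^c = S$. For $(\Rightarrow)$, any strong collapse $K \amalg_S L \esc S$ leaves $V(S)$ intact, so its sequence of removals splits cleanly into a $V(K) \setminus V(S)$ subsequence and a $V(L) \setminus V(S)$ subsequence; side-separation makes each a valid strong collapse $K \esc S$ and $L \esc S$ on its own side, and the minimality of $S$ forces $K^c = L^c = S$.

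Part~(2) $(\Leftarrow)$ again lifts: $L \esc S$ gives $K \amalg_S L \esc K \esc K^c$, so $(K \amalg_S L)^c = K^c$ by uniqueness of the core (\refT{StrongCores}). The direction $(\Rightarrow)$ is the delicate one. From a strong collapse $K \amalg_S L \esc K^c \subset K$, every $v \in V(L) \setminus V(S)$ must eventually be removed, and the plan is to extract the subsequence of these removals as a strong collapse $L \esc S$. The main obstacle is that intermediate removals of $V(S)$-vertices in the pushout sequence alter the $L$-side, so the raw restriction need not be valid step-by-step. I would handle this by induction on $|V(L) \setminus V(S)|$: one shows that the hypothesis forces some $v \in V(L) \setminus V(S)$ to be dominated in $L$ itself, because otherwise a maximal simplex of $L$ witnessing $v$'s failure to be dominated would ``lock'' the $V(S)$-vertices it contains (side-separation forces any dominator of such an $S$-vertex in $K \amalg_S L$ to lie in that very simplex and hence outside $V(K)$, blocking the $V(S)$-removal that would be needed to unlock $v$), so no pathway from $K \amalg_S L$ to $K^c \subset K$ could exist. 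Removing such a $v$ preserves the hypothesis for the pair $(K, L \setminus v)$, and the induction closes the argument.
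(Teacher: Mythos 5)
Your side-separation observation is correct, and it carries part (1) and the reverse implication of (2) cleanly: since a vertex of $V(K)\setminus V(S)$ has the same set of maximal simplices in $K$ as in $K\amalg_S L$, elementary strong collapses at non-$S$ vertices lift and restrict, which is exactly the mechanism of the paper's proof (the paper phrases the forward direction of (1) in terms of reordering the strong expansions out of $S$ rather than restricting the collapse sequence, but the content is the same). Those portions are fine.

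The gap is in the forward direction of (2), which you rightly single out as the delicate one. Your induction needs the base step: if $(K\amalg_S L)^c=K^c$ and $V(L)\setminus V(S)\neq\emptyset$, then some $v\in V(L)\setminus V(S)$ is dominated \emph{in $L$ itself}. The ``locking'' justification does not establish this. First, if $\sigma$ is a maximal simplex of $L$ containing $v$ and $s\in\sigma\cap V(S)$, side-separation does force any dominator of $s$ in $K\amalg_S L$ to lie in $\sigma$, but $\sigma$ may contain further vertices of $V(S)\subset V(K)$, so the conclusion ``hence outside $V(K)$'' does not follow. Second, and more importantly, even if the dominator of $s$ lies in $V(L)\setminus V(S)$, nothing is blocked: an elementary strong collapse only requires $s$ to be dominated by \emph{some} vertex, so $s$ can still be deleted. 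Consequently your argument does not exclude the problematic scenario in which the collapse of $K\amalg_S L$ to $K^c$ first deletes vertices of $V(S)$ (each dominated by an $L$-side vertex), thereby changing which vertices of $V(L)\setminus V(S)$ are dominated, and only afterwards removes the vertices of $V(L)\setminus V(S)$; in that scenario no vertex of $V(L)\setminus V(S)$ need be dominated in $L$ at the outset, and the induction cannot start. (Your other step, that deleting such a $v$ preserves the hypothesis for $(K,L\setminus v)$, is fine by \refT{StrongCores}, so the existence claim is the only missing piece.) For comparison, the paper handles this direction by the contrapositive: assuming $L$ does not strongly collapse to $S$, it exhibits a vertex of $V(L)\setminus V(S)$ not dominated by any vertex of $S$, observes via side-separation that such a vertex cannot be dominated by any vertex of $K$ in the pushout, and concludes that the core must retain a vertex outside $V(K)$.
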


The two statements intersect in the special case when $S=L^c$ (resp.~$S=K^c$).

\begin{proof} For the forward direction of $(1)$, if $(K\amalg_S L)^c = S$, then one can get from $S$ to $K\amalg_S L$ via a sequence of strong expansions. Each of those expansions can only produce vertices in $K$ or $L$, but not both since $S$ already contains all the vertices that are common to $K$ and $L$. Those strong expansions can then be performed in a way that they first generate all the vertices of $K$ that are not in $S$ and then, again necessarily starting from $S$, all the vertices of $L$ that are not in $S$. Since $S$ is already minimal, this shows that $S$ is the core of both $K$ and $L$.

For the other direction, if $S= K^c$ and $S= L^c$, then any vertex $v$ that is dominated by a vertex $w$ in $K$ (resp.~$L$) is still dominated by the same vertex in $K\amalg_S L$. Elementary collapses can then be performed independently in the copies of $K$ and $L$ in $K\amalg_S L$, and we hence have $K\amalg_S L \, \esc\,  K^c \amalg_S L^c=S$.

For the forward direction of $(2)$, suppose $L$ is not strongly collapsible to $S$. This means that there exists a vertex $v\notin S$ which is not dominated by any vertex of $S$. This also then means that $v$ is not dominated by any vertex of $K$ in $K\amalg_S L$. Thus the core of $K\amalg_S L$ must contain a vertex in $L$ that is not in $K$, and $(K\amalg_S L)^c$ cannot be $K^c$.

For the other direction, since $L\, \esc\, S$, there is a sequence of elementary collapses that reduces $K\amalg_S L$ to $K$, i.e.~$K\amalg_S L\, \esc\, K$. Following these collapses by collapses in $K$ that reduce it to its core gives $K\amalg_S L\, \esc\, K^c$.
\end{proof}

A special case of \refP{PushoutCollapse} is when $S$ is a single vertex, so $K\amalg_S L=K\vee L$. We obtain the following consequence which will be used in \refS{Delegations}.

\begin{cor}\label{C:WedgeCollapse}
The wedge $K\vee L$ is strongly collapsible if and only if $K$ and $L$ are strongly collapsible. We have $(K\vee L)^c=K^c$ (resp.~$L^c$) if and only if $L$ (resp.~$K$) is strongly collapsible. 
\end{cor}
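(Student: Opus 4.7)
The plan is to deduce this corollary directly from Proposition \ref{P:PushoutCollapse} by specializing to the case $S=\{v\}$, where $v$ is the wedge point. Recall from \refS{ComplexOperations} that $K\vee L = K\amalg_{\{v\}} L$, so the corollary is literally a special case of the proposition; what remains is to translate ``core equal to $\{v\}$'' into ``strongly collapsible.''

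For the first statement, I would apply part (1) of \refP{PushoutCollapse} with $S=\{v\}$. This gives $(K\vee L)^c=\{v\}$ if and only if $K^c=\{v\}$ and $L^c=\{v\}$. Since by \refT{StrongCores} the core is unique up to isomorphism, $K^c=\{v\}$ is equivalent to saying $K^c$ is a single-vertex complex, i.e.\ $K$ is strongly collapsible (and similarly for $L$). Likewise, $(K\vee L)^c=\{v\}$ is equivalent to $K\vee L$ being strongly collapsible. This yields the biconditional in the first assertion.

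For the second statement, I would apply part (2) of \refP{PushoutCollapse} with $S=\{v\}$: we get $(K\vee L)^c = K^c$ if and only if $L\,\esc\,\{v\}$. By \refD{StrongCollapse}, $L\,\esc\,\{v\}$ is precisely the condition that $L$ is strongly collapsible. The symmetric statement for $L^c$ follows by interchanging the roles of $K$ and $L$.

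There is essentially no main obstacle here; this is a clean specialization. The only subtlety worth flagging is the convention that ``strongly collapsible'' really means the core is a single-vertex complex (any single vertex, not literally $v$), and so one must invoke the up-to-isomorphism uniqueness from \refT{StrongCores} in order to identify the wedge point with whichever vertex each individual core collapses to. Beyond that, the corollary is just the $S=\{v\}$ instance of the proposition, so the proof can be stated in a few lines.
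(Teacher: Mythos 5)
Your proposal is correct and matches the paper's treatment: the paper itself introduces \refC{WedgeCollapse} as nothing more than the special case $S=\{v\}$ of \refP{PushoutCollapse}, which is exactly the specialization you carry out. Your extra remark about reading ``core equals a single vertex'' up to isomorphism via \refT{StrongCores} is a reasonable (and slightly more careful) way of handling a point the paper glosses over silently.
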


\begin{prop}[\cite{BM:StrongHtopy}, Proposition 2.15]\label{P:JoinCollapse}
The join of two complexes $K\ast L$ is strongly collapsible if and only if $K$ or $L$ is strongly collapsible.
\end{prop}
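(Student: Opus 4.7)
The plan is to analyze the possible dominating pairs in $K \ast L$ according to which factor the two vertices belong to. The facet decomposition I will repeatedly use is that every facet of $K \ast L$ has the form $\sigma \cup \tau$ with $\sigma$ a facet of $K$ and $\tau$ a facet of $L$; in particular, the facets of $K \ast L$ containing a vertex $v \in V(K)$ are exactly the simplices $\sigma \cup \tau$ with $\sigma$ a facet of $K$ containing $v$ and $\tau$ an arbitrary facet of $L$ (and symmetrically for $v \in V(L)$).

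For the ``if'' direction, suppose without loss of generality that $K$ is strongly collapsible. If $v$ is dominated by $w$ in $K$ with $v, w \in V(K)$, the facet decomposition shows that $v$ is also dominated by $w$ in $K \ast L$, and moreover $(K \ast L) \setminus v = (K \setminus v) \ast L$. Hence every elementary strong collapse of $K$ lifts verbatim to an elementary strong collapse of $K \ast L$, and replaying the full strong collapse $K \esc \{v_0\}$ inside the join reduces $K \ast L$ to $\{v_0\} \ast L = CL$. In $CL$ the apex $v_0$ lies in every facet, so it dominates every other vertex, and $CL$ strongly collapses to $\{v_0\}$.

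For the ``only if'' direction, suppose $K \ast L$ is strongly collapsible, and classify every dominating pair in $K \ast L$. If both vertices lie in the same factor, the facet calculation shows the domination already holds in $K$ or in $L$. If instead $v \in V(K)$ is dominated by $w \in V(L)$, then the requirement that every facet $\sigma \cup \tau$ containing $v$ also contain $w$ forces $w \in \tau$ for every facet $\tau$ of $L$, i.e., $L = w \ast \lk_L(w)$ is a cone with apex $w$; the case $v \in V(L)$, $w \in V(K)$ is symmetric. Since within-$K$ and within-$L$ elementary collapses in $K \ast L$ act on disjoint vertex sets and (by the same facet check) preserve each other's dominations, they commute. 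Carrying them out exhaustively yields $K \ast L \esc K^c \ast L^c$, and by the uniqueness of cores (\refT{StrongCores}), $K^c \ast L^c$ is also strongly collapsible.

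Finally, $K^c$ and $L^c$ are minimal by construction, so no within-domination is available in $K^c \ast L^c$. On the other hand, $K^c \ast L^c$ has at least two vertices (one from each nonempty core), so strong collapsibility forces the existence of some domination, which by the trichotomy must be a cross-domination. That cross-domination makes $K^c$ or $L^c$ into a cone $w \ast Y$; but if $Y$ were nonempty, every vertex of $Y$ would be dominated by $w$, contradicting the minimality of that core. Hence $Y = \emptyset$ and the cone side is the single vertex $\{w\}$, so $K$ or $L$ is strongly collapsible. The main obstacle is the middle step: I have to argue that within-$K$ and within-$L$ collapses can be separated from each other and from any cross-dominations, so that reducing $K \ast L$ to $K^c \ast L^c$ is a legitimate simplification that preserves strong collapsibility without discarding the cross-domination that ultimately drives the conclusion.
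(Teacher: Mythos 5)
Your proposal is correct. Note that the paper does not prove this proposition at all --- it is quoted from Barmak--Minian \cite{BM:StrongHtopy} (their Proposition 2.15) --- so there is no internal argument to compare against. Your proof is a complete, self-contained justification along the standard lines: the facets of $K\ast L$ are exactly $\sigma\cup\tau$ with $\sigma,\tau$ facets of $K,L$; within-factor dominations in the join are equivalent to dominations in the factor and satisfy $(K\ast L)\setminus v=(K\setminus v)\ast L$, giving $K\ast L\,\esc\,K^c\ast L^c$; a cross-domination forces the opposite factor to be a cone; and strong collapsibility of $K^c\ast L^c$ (via the uniqueness of cores, Theorem~\ref{T:StrongCores}) together with minimality of the cores then forces one core to be a single vertex. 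This is essentially the argument of the cited source, so nothing further is needed.
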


Since a single-vertex complex is strongly collapsible but the two-vertex complex is not, we immediately get the following consequence.

\begin{cor}\label{C:ConeSuspensionStrong}
For any complex $K$, the cone  $CK$ on $K$ is strongly collapsible.  The suspension $\Sigma K$ of $K$ is strongly collapsible if and only if $K$ is. 
\end{cor}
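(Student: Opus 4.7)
The plan is to apply Proposition \ref{P:JoinCollapse} directly, using the fact that the cone and suspension are both instances of the join construction. Recall from \refEx{ConeSuspension} that $CK = K \ast \{c\}$ and $\Sigma K = K \ast \{c,d\}$, where $\{c\}$ is the single-vertex complex and $\{c,d\}$ is the two-vertex (discrete) complex.

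For the cone statement, I would first observe that the single-vertex complex $\{c\}$ is trivially strongly collapsible, since it is by definition equal to its own core and already a single-vertex complex (in particular, it has the strong homotopy type of a single-vertex complex vacuously). Hence by \refP{JoinCollapse}, $CK = K \ast \{c\}$ is strongly collapsible for every $K$, with no hypothesis on $K$ needed.

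For the suspension statement, the key preliminary observation is that the two-vertex complex $\{c,d\}$ is \emph{not} strongly collapsible. The reason is that in $\{c,d\}$ the only maximal simplices are the vertices themselves, so neither vertex is dominated by the other (the domination condition in \refD{ComplexStuff} fails because the maximal simplex $\{c\}$ contains $c$ but not $d$, and symmetrically). Thus $\{c,d\}$ is already minimal and hence equal to its own core by \refT{StrongCores}; since this core is not isomorphic to a single-vertex complex, $\{c,d\}$ is not strongly collapsible. Applying \refP{JoinCollapse} to $\Sigma K = K \ast \{c,d\}$ then gives that $\Sigma K$ is strongly collapsible if and only if $K$ or $\{c,d\}$ is, which reduces to: if and only if $K$ is.

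There is no real obstacle here; the whole corollary is a bookkeeping exercise once \refP{JoinCollapse} is in hand. The only point requiring a brief justification is the non-collapsibility of the discrete two-vertex complex, which follows immediately from the definitions. I would keep the writeup short and simply present both parts as direct consequences of \refP{JoinCollapse}.
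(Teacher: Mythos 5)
Your proposal is correct and follows essentially the same route as the paper: the corollary is deduced directly from \refP{JoinCollapse}, noting that the single-vertex complex is strongly collapsible while the two-vertex complex is not (since neither vertex dominates the other, it is its own core). Your brief justification of the non-collapsibility of the two-vertex complex is a nice touch but otherwise the arguments coincide.
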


One can generalize this corollary to the join with any 0-complex with the number of vertices greater than two, in which case the join is again strongly collapsible if and only if $K$ is.

\subsection{Homology of simplicial complexes}\label{S:Homology}

%%%%%%%%%%%%%%%%%%%%%%%%%%%%%%%%%%%%%%%%%%%%%%%%%%%%%%%%%%%%%%%%%%%%%%%%%%%%%%%%%%%%%%%%

Here we give a brief overview of homology. For more details about homology of spaces, see \cite{Hatcher}. For homology of simplicial complexes, see \cite{FP:SimpTop, K:CombAlgTop}.

Given an $n$-simplex $\sigma$, there are $(n+1)!$ ways to order its vertices.  We say that the orientations are \emph{equivalent} if they differ by an even permutation.   For $n>0$, there are thus two equivalence classes of orientations. A choice of one of the classes will be called an \emph{orientation} of $\sigma$.  If  $\sigma=\{v_0, ..., v_{n}\}$, then $\sigma$ with a choice of an orientation will be denoted by
$
\sigma=[v_0, ..., v_{n}].
$

If $\sigma=[v_0, ..., v_{n}]$ has a natural orientation determined by the ordering of its vertices according to their label and if $\tau=\{v_0, ...,\hat{v}_i,..., v_{n}\}=\{v_0, ...,v_{i-1}, v_{i+1},... v_{n}\}$ is the face of $\sigma$ obtained by removing the vertex $v_i$, then a natural orientation on $\tau$ is $(-1)^{i}[v_0, ...,v_{i-1}, v_{i+1},..., v_{n}]$. So for example, if $\sigma=[v_0, v_1, v_2]$ is a 2-simplex, then the edges are oriented as $[v_0,v_1]$, $[v_1,v_2]$, and $[v_2,v_0]=-[v_0,v_2]$.

A simplicial complex $K$ is \emph{oriented} if all its simplices are oriented. One way to orient a simplicial complex is to order its vertices and let this ordering induce an orientation on all the simplices of $K$.

\begin{defin}\label{D:n-chains}
Let $K$ be an oriented simplicial complex. For $n\geq 0$, define the \emph{group of $n$-chains of $K$}, denoted by $\Ch_n(K)$, to be the free abelian group generated by the $n$-simplices of $K$. In other words, if $\{\sigma^i\}_{i\in I}$ is the finite set of oriented $n$-simplices of $K$, then an \emph{$n$-chain} is a formal linear sum 
$$
\sum_{i\in I} c_i\sigma^i, \ \ c_i\in\Z,
$$
with the addition given by
$$
\sum_{i\in I} c_i\sigma^i+\sum_{i\in I} d_i\sigma^i = \sum_{i\in I} (c_i+d_i)\sigma^i.
$$
For $n=-1$, define $\Ch_{-1}(K)=\Z$ and for $n<-1$, define $\Ch_n(K)=0$.
\end{defin}

For $n\neq 0$, one has a homomorphism, called the \emph{boundary operator}, 
\begin{equation}\label{E:Boundary}
\partial_n\colon \Ch_n(K)\longrightarrow \Ch_{n-1}(K)
\end{equation}
defined as follows: On an oriented simplex $[v_0,...,v_n]$, let 
$$
\partial_n([v_0,...,v_n])=\sum_{i=0}^n (-1)^i [v_0, ...,v_{i-1}, v_{i+1},..., v_{n}].
$$
Then to define \eqref{E:Boundary}, extend this map linearly to chains. Also define $\partial_{0}$ as the sum of the coefficients of a 0-chain. If $K$ is a $d$-dimensional complex, we thus get a sequence of groups and homomorphisms
$$
\cdots \longrightarrow 0 \stackrel{0}{\longrightarrow}
\Ch_d(K)\stackrel{\partial_d}{\longrightarrow}\cdots
\stackrel{\partial_{n+1}}{\longrightarrow}\Ch_{n}(K)
\stackrel{\partial_{n}}{\longrightarrow}\Ch_{n-1}(K)
\to\cdots
\stackrel{\partial_{2}}{\longrightarrow}\Ch_{1}(K)
\stackrel{\partial_{1}}{\longrightarrow}\Ch_{0}(K)
\stackrel{\partial_{0}}{\longrightarrow}\Z
\longrightarrow 0
\longrightarrow \cdots
$$ 

The following is a standard result (see, for example, \cite[Lemma II.2.15]{FP:SimpTop}), and it says that 
the above sequence is a \emph{chain complex}.

\begin{prop}
For all $n\in\Z$, $\partial_{n}\circ\partial_{n+1}=0$.
\end{prop}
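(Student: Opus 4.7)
The plan is to reduce the claim to checking it on a generator, and then perform the standard double-sum reindexing argument.

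First I would dispose of the degenerate cases. For $n \leq -2$ both $\Ch_n(K)$ and $\Ch_{n+1}(K)$ are zero by \refD{n-chains}, so the composition vanishes trivially. For $n=-1$ we have $\Ch_{-2}(K)=0$, so again there is nothing to check. For $n=0$ a quick direct verification suffices: on an oriented $1$-simplex $[v_0,v_1]$ we get $\partial_1([v_0,v_1]) = v_1 - v_0$, and $\partial_0$ sums coefficients, giving $1-1=0$, which then extends linearly to all of $\Ch_1(K)$.

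The substantive case is $n \geq 1$, and since $\partial_{n+1}$ and $\partial_n$ are both group homomorphisms, it suffices to verify the identity on an arbitrary oriented $(n+1)$-simplex $\sigma = [v_0, \ldots, v_{n+1}]$. Applying the definition twice gives a double sum
\begin{equation*}
\partial_n \partial_{n+1}([v_0,\ldots,v_{n+1}])
= \sum_{i=0}^{n+1} (-1)^i \, \partial_n\bigl([v_0, \ldots, \widehat{v_i}, \ldots, v_{n+1}]\bigr)
= \sum_{i=0}^{n+1} \sum_{j} \varepsilon_{i,j}\, [v_0, \ldots, \widehat{v_i}, \ldots, \widehat{v_j}, \ldots, v_{n+1}],
\end{equation*}
where the signs $\varepsilon_{i,j}$ split into two cases according to whether $j<i$ or $j>i$ (because removing $v_i$ first shifts the positions of the later vertices). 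The key computation is then to pair up the two ways of arriving at each $(n-1)$-face, namely the face obtained by removing a fixed pair $\{v_a, v_b\}$ with $a<b$: one contribution comes from removing $v_a$ first and then $v_b$ (which now sits in position $b-1$), contributing $(-1)^a(-1)^{b-1}$, while the other comes from removing $v_b$ first and then $v_a$ (still in position $a$), contributing $(-1)^b(-1)^a$. These are negatives of each other and cancel.

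The main (and really only) obstacle is a bookkeeping one: keeping the index shift straight when a vertex that was originally in position $j$ ends up in position $j-1$ after an earlier vertex has already been deleted. I would make this precise by splitting the inner sum into the ranges $j < i$ and $j > i$ and explicitly reindexing the second range via $j \mapsto j-1$, after which the pairwise cancellation becomes transparent. No further topology is needed — this is a purely combinatorial sign computation, and by linear extension it proves $\partial_n \circ \partial_{n+1} = 0$ on all of $\Ch_{n+1}(K)$.
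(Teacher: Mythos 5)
Your argument is correct and is the standard double-sum sign-cancellation proof; the paper does not prove this proposition itself but simply cites it as a standard result (Lemma II.2.15 of the Ferrario--Piccinini reference), and your computation is exactly that textbook argument, with the low-degree conventions of \refD{n-chains} ($\Ch_{-1}(K)=\Z$, $\partial_0$ the sum of coefficients) handled correctly. One negligible slip: for $n=-2$ the group $\Ch_{n+1}(K)=\Ch_{-1}(K)=\Z$ is not zero, but the composite still vanishes trivially because it factors through $\Ch_{-2}(K)=0$.
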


A consequence of this result is that the image of one boundary homomorphism is contained in the kernel of the next one, i.e.
$$
\im\, \partial_{n+1}\subset \ker \partial_{n}.
$$
Elements of $\im \, \partial_{n+1}$ are called \emph{$n$-boundaries} and those of $\ker \partial_{n}$ are called \emph{$n$-cyles}. The former should be thought of topologically as collections of simplices that form a boundary of another collection of simplices, and the latter as collections of simplices that have no boundary. Groups of $n$-boundaries and $n$-cycles are also denoted by $B_n(K)$ and $Z_n(K)$, respectively.

\begin{defin}\label{D:Homology}
For $n\geq 0$, the \emph{$n$th (reduced) homology group of $K$} is the quotient 
$$
\Ho_n(K)=\frac{Z_n(K)}{B_n(K)}=\frac{\ker \partial_{n}}{\im\, \partial_{n+1}}.
$$
\end{defin}

\begin{defin}\label{D:RealizationHomology}
For $K$ a simplicial complex, define the \emph{$n$th (reduced) homology group of the space $|K|$} to be the $n$th (reduced) homology group of $K$, i.e.
\begin{equation}\label{E:TwoHomologies}
\Ho_n(|K|)=\Ho_n(K).
\end{equation}
\end{defin}

If $\Ho_n(K)=0$ for all $n$, then $K$ (or $|K|$) is said to be \emph{acyclic}.

\begin{rems}\ 
\begin{itemize}
\item A more usual approach in algebraic topology is to first define the (simplicial) homology groups for any space $X$. In that case, equation \eqref{E:TwoHomologies} becomes a theorem stating that the notions of homology for spaces and abstract simplicial complexes coincide when the space happens to be a realization of a complex.
\item Our definitions give what is known as the \emph{reduced} homology. Namely, if a simplicial complex $K$ is connected, then the homology as we have defined it gives  $\Ho_0(K)=0$. If we set $\Ch_{-1}(K)=0$ and $\partial_0=0$, then we would get $\Ho_0(K)=\Z$, while all other homology groups of $K$ would be the same.  This version is the \emph{unreduced} homology. We have chosen the work with the reduced version for convenience; some statements involving the Betti numbers of a complex are easier to state this way.
\end{itemize} 
\end{rems}

\begin{example}\label{Ex:CircleHomology}
Suppose $K$ is a triangle formed by three edges with vertices $v_0$, $v_1$, and $v_2$. Then $\Ch_1(K)=\Z^3$, generated by the vertices $[v_i]$, and $\Ch_0(K)=\Z^3$, generated by $[v_0, v_1]$, $[v_1,v_2]$, $[v_0,v_2]$. Then
\begin{align*}
\Ho_0(K) & =\frac{Z_0(K)}{B_0(K)}=\frac{\langle [v_0], [v_1], [v_2] \rangle  }{\langle  [v_0]- [v_1], [v_0]- [v_2]\rangle}=0 \\
\Ho_1(K) & =\frac{Z_1(K)}{B_1(K)}=\frac{\langle [v_0, v_1]+[v_0, v_2]-[v_0, v_2] \rangle }{\langle 0\rangle}=\Z
\end{align*}
All other homology groups are trivial.

The geometric realization of this complex is topologically the circle $S^1$, so this example therefore computes the homology of $S^1$. By taking $K$ to be the boundary of the $d$-simplex, a similar calculation shows that the only nontrivial homology of the $d$-sphere $S^d$ is $\Ho_0(S^d)=\Ho_d(S^d)=\Z$.
\end{example}

The following useful results are standard. 
%(see \cite[Thorem II.4.9]{FP:SimpTop} and \cite[Theorem 5.16]{K:CombAlgTop})

\begin{prop}\label{P:HomologyOfOperations}
For $K$ and $L$ simplicial complexes and all $n$, 
\begin{enumerate}
\item $\Ho_n(K\vee L)=\Ho_n(K)\bigoplus \Ho_n(L)$
\item $\Ho_n(CK)=0$
\item $\Ho_{n+1}(\Sigma K)\cong \Ho_n(K)$
\end{enumerate}

\end{prop}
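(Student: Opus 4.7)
My plan is to prove the three parts separately, using a combination of direct chain-level manipulations and the simplicial Mayer--Vietoris long exact sequence. Mayer--Vietoris for simplicial complexes is standard (see, e.g., \cite{Hatcher, FP:SimpTop}); it arises from the short exact sequence of chain complexes
\[0 \to \Ch_*(A\cap B) \to \Ch_*(A)\oplus\Ch_*(B) \to \Ch_*(A\cup B) \to 0\]
associated with a decomposition $A\cup B$ of a simplicial complex into subcomplexes, together with the snake/zigzag lemma.

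For part (1), I would apply Mayer--Vietoris to $K\vee L = K\cup L$ with $K\cap L$ equal to the wedge vertex, which is contractible and has vanishing reduced homology in every degree. The long exact sequence then collapses to give the claimed isomorphism $\Ho_n(K\vee L)\cong \Ho_n(K)\oplus\Ho_n(L)$. Alternatively, for $n\geq 1$ a direct chain-level argument is available: no simplex of $K\vee L$ of positive dimension contains vertices from both $K\setminus\{v_0\}$ and $L\setminus\{w_0\}$, so $\Ch_n(K\vee L)=\Ch_n(K)\oplus\Ch_n(L)$ for $n\geq 1$ with the boundary respecting the direct sum, and the homology splits in positive degrees immediately.

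For part (2), the quickest path is to invoke \refC{ConeSuspensionStrong}: $CK$ is strongly collapsible, so its realization is contractible, and reduced homology is a homotopy invariant. For a self-contained combinatorial proof, I would exhibit a chain null-homotopy $h_n: \Ch_n(CK)\to\Ch_{n+1}(CK)$ given by the cone operator $h_n([v_0,\ldots,v_n]) = [c, v_0,\ldots,v_n]$ on simplices of $K$ (with $c$ the cone vertex), extended by $h_n = 0$ on simplices already containing $c$. A sign-bookkeeping verification yields $\partial_{n+1}h_n + h_{n-1}\partial_n = \mathrm{id}$, so every cycle bounds.

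For part (3), I would use the pushout presentation $\Sigma K = C_+K \cup C_-K$ where the two cones meet along $K$, and apply Mayer--Vietoris:
\[\cdots\to\Ho_n(C_+K)\oplus\Ho_n(C_-K)\to\Ho_n(\Sigma K)\to\Ho_{n-1}(K)\to\Ho_{n-1}(C_+K)\oplus\Ho_{n-1}(C_-K)\to\cdots\]
Part (2) forces the flanking terms to vanish in every degree, so the connecting homomorphism becomes an isomorphism $\Ho_n(\Sigma K)\cong\Ho_{n-1}(K)$, equivalently $\Ho_{n+1}(\Sigma K)\cong\Ho_n(K)$. The main obstacle here is that Mayer--Vietoris has not been developed in the paper, so it must either be cited as a standard tool or briefly derived from the short exact sequence above; similarly, the chain-homotopy check in (2) is elementary but requires attention to signs, particularly in low degrees where the augmentation plays a role.
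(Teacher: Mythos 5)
Your proposal is correct, but it is worth noting that the paper itself does not prove this proposition at all: it is stated as a standard fact with references, and the only in-text justification that appears later is for part (2), where the paper observes that $CK$ is strongly collapsible by \refC{ConeSuspensionStrong}, hence $|CK|$ is contractible and its reduced homology vanishes by homotopy invariance --- exactly the ``quickest path'' you mention. Your argument is the standard textbook route made self-contained: the short exact sequence $0\to \Ch_*(K\cap L)\to \Ch_*(K)\oplus\Ch_*(L)\to \Ch_*(K\cup L)\to 0$ is indeed exact for subcomplexes (every simplex of the union lies in one of the pieces, and the kernel of the difference map is the chains on the intersection), so the Mayer--Vietoris sequences you invoke for the wedge (intersection a vertex) and the suspension (intersection $K$, the two cones being acyclic by (2)) give parts (1) and (3) in all degrees, including the low-degree cases, since the paper's homology is reduced. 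The cone operator $h_n([v_0,\ldots,v_n])=[c,v_0,\ldots,v_n]$, extended by zero on simplices containing $c$, does satisfy $\partial h+h\partial=\mathrm{id}$ on the reduced chain complex provided you also set $h_{-1}(1)=[c]$ on the augmentation term $\Ch_{-1}=\Z$; this is the low-degree point you flag, and with it the verification goes through. What your route buys is a purely combinatorial, chain-level proof of all three parts that does not depend on geometric realization or homotopy invariance (\refP{ContiguousHomotopic}); what it costs is that Mayer--Vietoris is not developed in the paper and would have to be derived from the displayed short exact sequence via the zigzag lemma, whereas the paper's implicit approach for (2) leans on machinery already in place (strong collapses and homotopy invariance of homology). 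Either way the statement is fully justified, and your part (3) also matches the heuristic the paper gives in its discussion of suspensions, that cycles suspend to cycles of one higher dimension.
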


Because $\Ch_n(K)$ is a free group with a finite number of generators, we get that $\Ho_n(K)$ is a direct sum of some number of copies of $\Z$ (and possibly some finite cyclic groups).  The number of copies of $\Z$ is the rank of $\Ho_n(K)$ and is called the \emph{$n$th Betti number}, denoted by $\beta_n$.

Heuristically, $\beta_n$ is the number of $n$-dimensional ``holes'' in $K$, since each $\Z$ in  $\Ho_n(K)$ corresponds to an $n$-cycle which, on the level of $|K|$, can be thought of as an $n$-dimensional hole that is not a boundary, i.e.~it is not ``filled in''. The zeroth Betti number $\beta_0$ is simply one less than the number of connected components of $|K|$.

The alternating sum of the Betti numbers is called the \emph{Euler characteristic} of $K$.

If $f\colon K\to L$ is a simplicial map, we get for each $n\geq 0$ an induced map  
\begin{align*}
\Ch_n(f)\colon \Ch_n(K) & \longrightarrow \Ch_n(L) \\
\sum_{i\in I} c_i\sigma^i &\longmapsto \sum_{i\in I} c_i f(\sigma^i)
\end{align*}
It is not hard to show that this map descends to homology, so that we have a well-defined homomorphism
$$
\Ho_n(f)\colon \Ho_n(K) \longrightarrow \Ho_n(L).
$$
Another way to say this is that homology is a functor 
$$
\Ho_*\colon \operatorname{SCom}\longrightarrow \operatorname{GrAb}
$$
from the category of simplicial complexes to the category of graded abelian groups. We have

\begin{thm}[\cite{FP:SimpTop}, Theorem III.2.1]\label{P:ContiguityHomology}
If $f,g\colon K\to L$ are contiguous maps, then $\Ho_n(f)=\Ho_n(g)$ for all $n$.
\end{thm}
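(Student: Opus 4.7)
The plan is to build an explicit chain homotopy between the chain maps $\Ch_n(f)$ and $\Ch_n(g)$, from which the equality of induced maps on homology follows formally.

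First, I would define, for each $n \geq 0$, a group homomorphism $D_n\colon \Ch_n(K) \to \Ch_{n+1}(L)$ on a generating oriented simplex $[v_0,\dots,v_n]$ by
$$D_n([v_0,\dots,v_n]) \;=\; \sum_{i=0}^{n} (-1)^i \, [f(v_0),\dots,f(v_i),g(v_i),\dots,g(v_n)],$$
extended linearly to all chains. This is essentially the prism-operator construction: one triangulates the prism $\Delta^n \times I$ into $(n{+}1)$-simplices and maps the bottom face by $f$ and the top face by $g$. For $D_n$ to land in $\Ch_{n+1}(L)$, each bracketed expression must be a valid oriented simplex of $L$, and this is precisely where contiguity (\refD{Contiguity}) is used: since $f(\sigma)\cup g(\sigma)$ is a simplex of $L$, every subset of it is as well. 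If consecutive entries coincide (e.g.\ $f(v_i)=g(v_i)$), I adopt the standard convention that a formal bracket with a repeated vertex represents the zero chain, which is forced by alternating orientations. I would also set $D_{-1}=0$.

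Next, the core computation is to verify the chain homotopy identity
$$\partial_{n+1}\circ D_n \;+\; D_{n-1}\circ \partial_n \;=\; \Ch_n(f) - \Ch_n(g)$$
on each generator $[v_0,\dots,v_n]$. Expanding both $\partial_{n+1} D_n$ and $D_{n-1} \partial_n$ via the boundary formula produces a double sum indexed by pairs $(i,j)$ recording which vertex is deleted from which term. Splitting into the cases where the deleted vertex is among the $f$-entries ($j\leq i$) or among the $g$-entries ($j\geq i$), the ``off-diagonal'' terms pair up with opposite signs and cancel, while the ``diagonal'' terms (deletion at $j=i$, where consecutive entries $f(v_i),g(v_i)$ appear) form a telescoping sum that collapses to exactly $[f(v_0),\dots,f(v_n)] - [g(v_0),\dots,g(v_n)]$. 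This bookkeeping is the main obstacle of the proof; it is routine but must be organized carefully, and the degenerate-simplex convention above makes the edge cases collapse cleanly.

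Once the chain homotopy identity is established, the conclusion is immediate. For any cycle $z\in Z_n(K)$, we have $\partial_n z=0$, so
$$\Ch_n(f)(z) - \Ch_n(g)(z) \;=\; \partial_{n+1}\bigl(D_n(z)\bigr) \in B_n(L),$$
meaning $\Ch_n(f)(z)$ and $\Ch_n(g)(z)$ represent the same class in $\Ho_n(L)=Z_n(L)/B_n(L)$. Hence $\Ho_n(f)=\Ho_n(g)$. Finally, the result extends from a single contiguous pair to any pair $\phi\sim_c \psi$ in the same contiguity class by transitivity, since equality of induced maps is itself an equivalence relation on simplicial maps.
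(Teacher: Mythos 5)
Your proposal is correct and is the standard argument: the paper itself offers no proof of this statement (it simply cites Theorem III.2.1 of the Ferrario--Piccinini reference), and the chain homotopy you build is exactly the prism-operator construction used in such sources, with contiguity guaranteeing that every bracket is a face of the simplex $f(\sigma)\cup g(\sigma)$ of $L$. The only point worth tightening is well-definedness of $D_n$ on \emph{oriented} chains -- fix a total ordering of the vertices of $K$ once and for all (or work in the ordered chain complex and project), since your degenerate-bracket convention handles repeated vertices but not the dependence of the formula on the chosen ordering; the sign $\Ch_n(f)-\Ch_n(g)$ versus $\Ch_n(g)-\Ch_n(f)$ is immaterial for the conclusion.
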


%(Say how this implies $|K|$ is contractible if also $\pi_1$ is trivial, but not necessarily strongly contractible.

%Homology of a complex with a bunch of components is the direct sum of homologies.

If $K$ and $L$ are related by a sequence of elementary strong collapses and expansions, then $\Ho_n(K)=\Ho_n(L)$ for all $n$.  This follows since $K$ and $L$ have the same strong homotopy type and so $|K|$ and $|L|$ are (strongly) homotopy equivalent and hence have isomorphic homology.  This will be useful in \refS{Delegations}, where we will observe that certain simplicial complexes do not have the same homology and hence cannot be strongly equivalent.

In particular, if $L$ is strongly contractible, then $\Ho_n(L)=0$ for all $n$. So if $L=CK$ for some $K$, by \refC{ConeSuspensionStrong}, we recover part (2) of \refP{HomologyOfOperations}.

Since chains are vector spaces and boundary homomorphisms are linear maps determined by their action on vertices, homology can be computed using linear algebra.  Briefly, one considers the 
\emph{Laplacians of $K$}, 
 $\mathbf L_n = \mathbf A_n\mathbf A_n^T + \mathbf A_{n+1}\mathbf A_{n+1}^T$, where $\mathbf A_i$ are the adjacency matrices of $K$ (these capture how simplices fit together in $K$). The kernel of the $n$th Laplacian then gives $\beta_n$. There are ready-made online tools for computing the homology of simplicial complexes using this method.\footnote{See, for example, \href{https://jeremykun.com/2013/04/10/computing-homology/}{https://jeremykun.com/2013/04/10/computing-homology/}.}

\bibliographystyle{amsplain}

%\bibliography{/Users/ismar/Dropbox/Bibliography}

%
%\def\cprime{$'$} \def\cprime{$'$}
%\providecommand{\bysame}{\leavevmode\hbox to3em{\hrulefill}\thinspace}
%\providecommand{\MR}{\relax\ifhmode\unskip\space\fi MR }
%% \MRhref is called by the amsart/book/proc definition of \MR.
%\providecommand{\MRhref}[2]{%
%  \href{http://www.ams.org/mathscinet-getitem?mr=#1}{#2}
%}
%\providecommand{\href}[2]{#2}

\end{document}